\documentclass[11pt,letterpaper]{article}

\usepackage[dvipsnames]{xcolor}

\usepackage{amsfonts}
\usepackage{amsmath}
\usepackage{amssymb}
\usepackage{amsthm}
\usepackage[appendix=append]{apxproof}
\usepackage{array}
\usepackage[blocks]{authblk}

\usepackage{bm}

\usepackage[disableredefinitions]{complexity}
\usepackage{csquotes}
\usepackage{enumitem}
\usepackage[T1]{fontenc}
\usepackage{geometry}
\usepackage{graphicx}

\usepackage{hyperref}
\usepackage[capitalise]{cleveref}

\usepackage{import}
\usepackage[utf8]{inputenc}

\usepackage{lineno}

\usepackage{mathtools}
\usepackage{mathrsfs}

	\usepackage{needspace}
	\usepackage{relsize}
\usepackage{setspace}
\usepackage{soul}
\usepackage{subcaption}

\usepackage{thm-restate}
\usepackage{thmtools}
\usepackage{tikz}
\usetikzlibrary{decorations.pathreplacing}
\usepackage{tikz-cd}
\usepackage{titling}
\usepackage{todonotes}

\usepackage{xspace}

\newcommand{\bbN}{\mathbb{N}}

\newcommand{\bbR}{\mathbb{R}}

\newcommand{\bbZ}{\mathbb{Z}}

\newcommand{\bfA}{\mathbf{A}}

\newcommand{\bfE}{\mathbf{E}}

\newcommand{\bfI}{\mathbf{I}}
\newcommand{\bfJ}{\mathbf{J}}

\newcommand{\bfL}{\mathbf{L}}

\newcommand{\bfN}{\mathbf{N}}

\newcommand{\bfQ}{\mathbf{Q}}

\newcommand{\bfV}{\mathbf{V}}

\newcommand{\calA}{\mathcal{A}}
\newcommand{\calB}{\mathcal{B}}
\newcommand{\calC}{\mathcal{C}}

\newcommand{\calF}{\mathcal{F}}
\newcommand{\calG}{\mathcal{G}}

\newcommand{\calJ}{\mathcal{J}}

\newcommand{\calV}{\mathcal{V}}
\newcommand{\calW}{\mathcal{W}}

\newcommand{\rmC}{\mathrm{C}}

\newcommand\sg[1]{}
\newcommand\sgchanged[1]{}
\newcommand\am[1]{}
\newcommand\amchanged[1]{}

\renewcommand{\flat}{\mathit{flat}}
\newcommand{\reduce}{\leq}
\newcommand{\depth}{\mathit{depth}}
\newcommand{\size}{\mathit{size}}

\renewcommand{\SIZE}{\text{\normalfont\scshape Size}}
\newcommand{\SIZEDEPTH}{\text{\normalfont\scshape SizeDepth}}
\newcommand{\WIRESIZEDEPTH}{\text{\normalfont\scshape WSizeDepth}}

\renewcommand{\REG}{\text{\normalfont\scshape{Reg}}}

\newcommand{\lm}{\mathit{lm}}

\newcommand{\scrC}{\mathscr{C}}

\newcommand{\N}{\mathbb{N}}

\newcommand{\bfone}{\mathbf{1}}

\newcommand{\bfMod}{\mathbf{Mod}}
\newcommand{\ESL}{\bfE\bfJ_\bfone}
\newcommand{\QESL}{\bfQ\bfE\bfJ_{\bfone}}
\newcommand{\ACom}{\mathbf{ACom}}
\newcommand{\EACom}{\bfE\ACom}
\newcommand{\QEACom}{\bfQ\EACom}

\newcommand{\bfJone}{\mathbf{J_1}}

\newcommand{\rmFO}{\mathrm{FO}}

\newcommand{\resp}{\ensuremath{\textit{resp.}}}

\DeclareMathOperator{\cont}{cont} 

\newcommand{\problemx}[3]{
\par\noindent\underline{\sc#1}\par\nobreak\vskip.2\baselineskip
\begingroup\clubpenalty10000\widowpenalty10000
\setbox0\hbox{\bf INPUT:\ }\setbox1\hbox{\bf QUESTION:\ }
\dimen0=\wd0\ifnum\wd1>\dimen0\dimen0=\wd1\fi
\vskip-\parskip\noindent
\hbox to\dimen0{\box0\hfil}\hangindent\dimen0\hangafter1\ignorespaces#2\par
\vskip-\parskip\noindent
\hbox to\dimen0{\box1\hfil}\hangindent\dimen0\hangafter1\ignorespaces#3\par
\endgroup}

\theoremstyle{plain}
\newtheorem{theorem}{Theorem}[section]

\newtheorem{lemma}[theorem]{Lemma}
\newtheoremrep{lemma}[theorem]{Lemma}

\newtheorem{fact}[theorem]{Fact}

\newtheorem{corollary}[theorem]{Corollary}
\newtheoremrep{corollary}[theorem]{Corollary}

\newtheorem{remark}[theorem]{Remark}

\newtheorem{definition}[theorem]{Definition}
\newtheorem{proposition}[theorem]{Proposition}
\newtheoremrep{proposition}[theorem]{Proposition}

\theoremstyle{definition}
\newtheorem{example}[theorem]{Example}
\Crefname{lemma}{Lemma}{Lemmata}
\Crefname{equation}{Line}{Lines}\crefrangelabelformat{equation}{(#3#1#4)--(#5#2#6)}
\Crefname{inequality}{Inequality}{Inequalities}\creflabelformat{inequality}{(#2#1#3)}
\Crefname{equivalence}{Equivalence}{Equivalences}\creflabelformat{equivalence}{(#2#1#3)}
\Crefname{property}{Property}{Properties}\creflabelformat{property}{(#2#1#3)}

\newcommand{\defeq}{\stackrel{\text{def}}{=}}

	\newcommand{\stab}{\text{stab}}

\newif\ifdraft
\drafttrue
\ifdraft
\renewcommand\sg[1]{\todo[inline,size=\scriptsize,backgroundcolor=Yellow]{#1 - \textbf{Stefan}}}
\renewcommand\sgchanged[1]{{\color{red}{#1}}}
\renewcommand\am[1]{\todo[inline,size=\scriptsize,backgroundcolor=Magenta]{#1 - \textbf{Amal}}}
\renewcommand\amchanged[1]{{\color{blue}{#1}}}

\newcommand{\bool}{\textrm{Bool}}

\newcommand{\block}{\mathit{block}}
\newcommand{\rmmod}{\mathrm{mod}}

\newcommand{\ACO}{\mathbf{AC^0}}

\newcommand{\congC}{\mathbin{\mathscr{C}}}

\DeclareMathOperator{\opc}{c}

\setlist[enumerate]{label=(\arabic*),font=\normalfont}

\newcommand{\dotminus}{%
  \mathbin{\ooalign{%
    \hfil$-$\hfil\cr
    \hfil\raisebox{0.35ex}{$\scriptscriptstyle\bullet$}\hfil\cr
  }}%
}

\newcommand{\Cont}{\mathrm{Cont}}

\usetikzlibrary{positioning}

\title{Rational Reductions and\\ Regular Languages of
Constant Circuit Complexity}

\newif\ifanonym
\anonymtrue
\anonymfalse
\ifanonym
\author{}
\else
\author{
    Stefan G\"oller\\
\small{\url{stefan.goeller@uni-kassel.de}}}
\affil{School of Electrical Engineering and Computer Science, Universit\"at Kassel, Germany}
\author{Amaldev Manuel\\\small{\url{amal@iitgoa.ac.in}}}
\affil{School of Mathematics and Computer Science, Indian Institute of Technology Goa, India}
\fi

\begin{document}

\maketitle
\begin{abstract}
We study the circuit complexity of regular languages
in terms of unbounded fan-in Boolean circuit families.
We characterize the regular languages of constant
circuit complexity in terms of the one-variable
	fragment of first-order logic with
	regular predicates,
	in terms of the pseudovariety of stamps
$\mathbf{QEJ}_\mathbf{1}$, suitable word
congruences and regular expressions.
We analogously characterize the neutral letter regular 
	languages of constant circuit complexity.
Our lower bound result implies that the class of 
regular languages of sublogarithmic
circuit complexity coincides with
the one of constant circuit complexity.
In addition we show that deciding whether a regular
	language, given as a nondeterministic
	finite automaton, has constant
	circuit complexity is 
	$\mathbf{PSPACE}$-complete.

We introduce a strong notion of
	reduction, 
called rational truth-table reduction,
that is tailored towards
	algebraically defined classes of languages.
	We show that, for a class of functions we 
	call mild, rational truth-table reductions 
	preserve both upper and 
	lower bounds on circuit complexity.
We show that the class of regular languages,
whose circuit complexity is bounded by a mild
function, is in fact a length-multiplying variety of languages.

Slightly extending the class of regular languages
	of constant circuit complexity, we analogously characterize
	the class of regular languages that
	are in the pseudovariety $\mathbf{QEACom}$.
	For these we derive logarithmic circuit complexity upper bounds.
\end{abstract}

\newpage
\tableofcontents
\newpage

\newif\ifshowproofs
\showproofsfalse
\showproofstrue

\ifshowproofs
\else
  \RenewEnviron{proof}[1][]{}
\fi

\newif\ificalp

 \icalpfalse  

\setcounter{tocdepth}{4}
\nolinenumbers

\section{Introduction}

We study the minimum asymptotic sizes of Boolean circuits of unbounded 
fan-in computing regular languages. 
This measure is often termed the circuit complexity of a language. 
While circuit complexity theory typically considers classes defined by both size and depth, circuit size is the more elementary measure. Depth measures parallel execution time, but the most fundamental question about a language is simply its total computational cost, i.e., the minimal size of a circuit that computes it. Our primary concerns are questions such as: 
what are the various circuit complexity classes of regular languages? Can we separate them? Is it possible to characterize them in terms of algebra, logic and regular expressions? Is it possible to decide the circuit complexity of a given regular language?

Circuit complexity of regular languages is a well established area of research as evidenced by the chapter with the same title by Kouck\'y \cite{Kouckyhandbook}  in the \emph{Handbook of Automata Theory} \cite{hanbookauto}. 
Paraphrasing~\cite{Kouckyhandbook} Boolean circuits are a foundational 
model of parallel computation, much like finite state automata are 
for sequential computation. 
Relating these two models sheds light on the possible 
parallelization of regular language recognition, and more 
generally the relationship between sequential and parallel computation. 
This is particularly relevant as modern computing systems increasingly 
employ parallelism for low-level operations.

There are technical motivations from circuit complexity theory as well: regular languages play a key role in separating complexity classes within $\mathbf{NC}^1$, i.e., the class of languages computed by polynomial-sized Boolean circuits with bounded fan-in and logarithmic depth. 
In particular, they separate $\mathbf{AC}^0$ and $\mathbf{NC}^1$~\cite{FurstSaxeSipser1984}. In fact, circuit classes defined by programs over pseudovarieties of monoids 
\cite{Barrington1989,BarringtonTherien1988} are always separated by regular languages if they are distinct \cite{McKenzieNC1} . 
Furthermore, any separation between the circuit classes $\mathbf{ACC}^0$ and $\mathbf{NC}^1$
and between $\mathbf{TC}^0$ and $\mathbf{NC}^1$ can be witnessed by a regular 
language~\cite{Koucky2009} respectively.
These results reveal the importance of regular languages within 
circuit complexity theory.

Much is known about regular languages and the standard circuit complexity classes (see \S\,\ref{subsec:related}). However, the field remains poorly understood and in particular the questions mentioned in the beginning are yet to be answered. 

In this paper we characterize the regular languages of constant circuit complexity.
It turns out that it contains several well-known classes from the literature. 
Some are listed below. Their containment is easy to verify. 
\begin{itemize}
   
    \item \emph{Finite and co-finite languages}: They correspond to the pseudovariety $\bfN$ of nilpotent semigroups ($x^\omega y = y x^\omega = x^\omega$).
    \item \emph {Idempotent and commutative languages}: These are finite Boolean 
	    combinations of languages of the form $\Gamma^*$, or equivalently 
		$\Sigma^*a\Sigma^*$,  for $\Gamma\subseteq \Sigma$ and $a\in \Sigma$. 
		They correspond to the pseudovariety $\bfJone$ of idempotent and commutative semigroups, also known as semi-lattices   $(xy=yx,x^2=x)$.
\item \emph{Thin regular languages}: Regular languages that have at most one word of
	each length,  e.g.~$(ab)^+$. Recall that every unary regular language is thin.

\item \emph{Slender regular languages}: A regular language $L$ 
	is \emph{slender} if 
	there is a $k\in \bbN_{>0}$ such that for each $n\in\bbN_{>0}$ 
		there are at most $k$ words of length $n$ in $L$. 
		A classical characterization theorem states that a regular language 
		is slender, if and only if, it is a finite union of languages of 
		the form $xu^*y$ for $x,u,y\in \Sigma^*$, we refer 
		to~\cite{Pin2025Mathematical}.
    
\item \emph{Generalized definite languages}: They are finite Boolean combinations of 
	languages of the form $u\Sigma^*$ and $\Sigma^*u$ for $u\in \Sigma^*$. 
		They correspond to the pseudovariety $\bfL\bfI$ of locally trivial semigroups ($x^\omega y x^\omega = x^\omega$).
\end{itemize}

However there are languages of constant circuit complexity that do not belong to any
of the above classes or finite Boolean combinations thereof, 
for example the regular language~$a(b+c)^*a$.

 We use the framework of $\calC$-varieties of Straubing~\cite{Straubing02,StraubingPin10}, i.e., length-multiplying pseudo-varieties of surjective morphisms onto finite semigroups (lm-variety of stamps) in our particular context, to 
 characterize the complexity classes of regular languages.

\subsection{Our contribution}

\paragraph{Circuit Model and Input Encodings}
The {\em circuit complexity of a language $L\subseteq\Sigma^+$} is defined
to be the function $\opc(L):\bbN_{>0}\to\bbN_{>0}$
that assigns to each input length $n\in\bbN_{>0}$ 
the size (the number of non-input gates)
of the smallest unbounded fan-in circuit computing $L\cap\Sigma^n$.

Our circuit model employs a subset encoding for the inputs following
Kouck\'y~\cite{Kouckyhandbook}. 
Since most circuit complexity classes allow for polynomial or larger sized
circuits the input encodings are usually assumed to be binary.
However for sublinear size circuits the input encoding becomes crucial.
We discuss this in detail and show the following, where for an alphabet 
$\Sigma=\{a_0,\dots,a_{k-1}\}$ 
of size $k$ the {\em one-hot
encoding} is the morphism $\varphi:\Sigma^+\to\{0,1\}^+$, where
$\varphi(a_i)=0^i10^{k-i-1}$ for each $i\in[0,k-1]$.
In contrast, a \emph{one-hot circuit family with promise computing
a language $L\subseteq\Sigma^+$}
is a circuit family $\calC$ over the input alphabet $\{0,1\}$ such that 
$\varphi^{-1}(L(\calC))=L$.

\begin{samepage}
	\begin{itemize}
\fi

\item (Proposition~\ref{P with promise})
Our subset encoding model of Definition~\ref{D Circuit} is equivalent to the
	one-hot encoding with promise in the following sense.
For all languages $L\subseteq\Sigma^+$ and all functions $f:\bbN_{>0}\to\bbN_{>0}$
the following two statements are equivalent:
	\begin{enumerate}
		\item There is a one-hot circuit family with promise $\calC$ computing $L$,
			where $|\calC|(kn)=f(n)$ for all $n\in\bbN_{>0}$.
		\item There is a circuit family $\calC$ computing $L$,
			where $|\calC|(n)=f(n)$ for all $n\in\bbN_{>0}$.
	\end{enumerate}
		\item (Proposition~\ref{P encoding}) 
	Let $\Sigma=\{a,b\}$ and let $L=\Sigma^+$. Then the following holds:
	\begin{enumerate}
		\item $\opc(\varphi(L))\in\Omega(\log n)$, whereas 
			$\opc(\varphi(\Sigma^+\setminus L))\in O(1)$.
\item Let $\alpha:\Sigma^+\to\{b\}^+$ denote the length-multiplying
	morphism, where $\alpha(a)=\alpha(b)=b$.
			Then $\opc(\varphi(\{b\}^+))\in O(1)$,
			whereas 
			$\opc(\varphi(\alpha^{-1}(\{b\}^+)))\in \Omega(\log n)$.
	\end{enumerate}
	Thus, the class of regular languages of constant circuit
	complexity under the one-hot encoding (without promise) is neither closed 
	under complement
	nor under inverse length-multiplying morphisms, and 
	hence is in particular
	not an lm-variety of languages (see Section~\ref{S Circuit LM}
	for more details).

	\end{itemize}
\end{samepage}

\paragraph{Rational Truth-Table Reductions}
We introduce the following notion of reduction (Definition~\ref{D Rational}), a particularly 
strong variant of constant depth reductions~\cite{ConstantDepth1984}.
Let $L_1\subseteq \Sigma_1^+$ and $L_2\subseteq \Sigma_2^+$ be languages.
	The language {\em $L_1$ rationally truth-table reduces to $L_2$}, 
$L_1\reduce L_2$ in notation, if 
there exists a length-multiplying morphism 
$\varphi:\Sigma_1^+\rightarrow \Sigma_2^+$ such that
$L_1$ is a Boolean combination over
a finite set of languages all of the form $\varphi^{-1}(u^{-1}L_2v^{-1})$,
where $u$ and $v$ range over $\Sigma_2^*$.

We show that rational truth-table reductions are well-behaved.
We introduce the class of mild functions (Definition~\ref{D mild}).
	A function $f:\bbN_{>0}\to\bbR_{\geq 0}$ is {\em mild} if 
	$f$ is non-decreasing, eventually positive, and satisfies 
	$f(n)=\Theta(f(\lceil\varepsilon n\rceil))$ for every $\varepsilon\in\bbR_{>0}$.
For instance, every non-decreasing and eventually positive polynomial is mild, as well as
all the functions $n\mapsto \log^k n$ for all
$k\in\N_{>0}$. Note that $n\mapsto 2^n$ is not mild.

	\begin{itemize}
		\item (Proposition~\ref{P reduce transitivity})
If $L_1\reduce L_2$ and $L_2\reduce L_3$, then $L_1\reduce L_3$.
\item 
	(Theorem~\ref{T asymptotics}) Let $f:\bbN_{>0}\to\bbR_{\geq 0}$ be a mild function
	and $L_1$ and $L_2$ be languages such that $L_1\leq L_2$.
			We mean Hardy-Littlewood $\Omega$ notation (see
			also Definition~\ref{D Hardy}).
	\begin{enumerate}

		\item If $\opc(L_2) \in O(f)$, then $\opc(L_1) \in O(f)$.
		\item If $\opc(L_1) \in \Omega(f)$, then $\opc(L_2) \in \Omega(f)$.
	\end{enumerate}
\item (Theorem~\ref{T mild variety})
	If $f,g:\bbN_{>0}\to\bbR_{\geq 0}$ are mild functions, then 
	\begin{enumerate}
		\item $\SIZE(O(f))\cap\REG$ and 
		\item $\SIZEDEPTH(O(f),O(g))\cap\REG$
	\end{enumerate}
are $\lm$-varieties of languages
			(see \S\,\ref{S Circuit Complexity} and~\S\,\ref{S Circuit LM} 
			for more details). 
	\end{itemize}
\begin{samepage}
\paragraph{$O(1)$-Characterization Theorem}
We refer to Theorem~\ref{T const main} for a more detailed statement.
		Let $L\subseteq\Sigma^+$ be a regular language. Then
	the following statements are equivalent:
	\begin{enumerate}
		\item $\opc(L)\in O(1)$.
		\item $\opc(L)\in o(\log n)$.
		\item The syntactic morphism $\eta_L$ belongs to $\QESL$, i.e.
			the stable semigroup $\stab(\eta_L)$ of $\eta_L$ satisfies the identities
			$epxqf=epxxqf$ and $epxyqf=epyxqf$, 
			where $e,f$ range over idempotents 
			and where $p,q,x,y$ range over all elements of $\stab(\eta_L)$ (see \S\,\ref{sec:qesl} for more details).
		\item $L$ is a union of $\approx_k$-classes for some
			$k\in\bbN_{>0}$, where
			$\approx_k$ denotes the following congruence on $\Sigma^+$ (see \S\,\ref{sec:approx_k} for more details):
			$u\approx_k v$ if, and only if, $u=v$ or
			\begin{itemize}
				\item $|u|,|v|>2k$,
				\item $u=u'xu''$ and $v=v'yv''$ have the same 
					$k$-length prefix $u'=v'$ and 
			the same $k$-length suffix $u''=v''$, 
		\item $|x|\equiv|y|\text{ mod }k$, and
		\item $x$ and $y$ agree on the set of letters appearing
			in positions that are equivalent mod $k$.
			\end{itemize}
		\item $L$ is a finite Boolean combination of finite languages and languages
			 $u(\Gamma_0\dots \Gamma_{k-1})^+v$, where
			$k\in\bbN_{>0}$, $\emptyset\not=\Gamma_0,\dots,\Gamma_{k-1}\subseteq\Sigma$
			and $u,v\in\Sigma^+$.
		\item $L$ is definable in 
			$\rmFO^1[\Sigma, \mathrm{arb}]$
			(see \S\,\ref{S FO} for more details),
		more precisely in the one-variable fragment of first-order
			logic with arbitrary numerical predicates.
	\end{enumerate}
\end{samepage}

The heart of the proof is our lower bound proof, i.e. the direction 
(2) $\Longrightarrow$ (3), which is proven in~\S\,\ref{S Lower Bound const}.
Making use of the above-mentioned rational truth-table reductions
	we prove that any regular language $L$ for which 
	the stable semigroup of the syntactic morphism falsifies
	at least one of the identities $epxqf=epxxqf$ or $epxyqf=epyxqf$
	must satisfy $\opc(L)\in\Omega(\log n)$.
When the latter identity is falsified, logarithmic circuit complexity lower bounds 
	can actually be derived from logarithmic communication complexity lower 
	bounds under deterministic arbitrary-position partitions.~\cite{Hromkovic85a} 
	Yet there are languages, like $0^*10^*1(0+1)^*$, of constant
	communication complexity that require logarithmic circuit complexity lower bounds.
Our lower bound proof consists of one short argument that deals with the
	falsification of any of the two identities in
one shot: we show that any circuit family of sufficiently small size 
for a language for which the stable
semigroup of the syntactic morphism violates at least one of the two identities
must treat a sufficient number of positions equally, thus enabling us to swap
letters at those positions without changing the behavior of the circuit, 
yielding a contradiction when applied to languages where such a swap
toggles membership.
The lower bound quite immediately implies the above-mentioned complexity
jump between the one-hot encoding (without promise) 
of the languages like $\{a,b\}^+$ and its complement $\emptyset$, respectively.
	\paragraph{$O(1)$-Characterization Theorem For Neutral Letter Regular
	Languages}
The analogous statement for neutral letter languages 
	reads (Corollary~\ref{C Neutral Letter}): 
	Let $L\subseteq \Sigma^+$ be a neutral letter language 
	(see~\S\,\ref{neutral letter} for more details). Then
	the following statements are equivalent:
    \begin{enumerate}
	    \item $\opc(L)\in O(1)$.
        \item  $L$ is idempotent and commutative.
	\item $L$ is alphabetic, i.e.~membership of a word $w$ in $L$
		  only depends upon the letters in $w$.
        \item $L$ is a finite Boolean combination of languages of the form $\Gamma^*$ with $\Gamma\subseteq \Sigma$.
    \item $L$ is definable in $\rmFO^1[\Sigma]$ (see
    \S\,\ref{S FO} for more details). 
    \end{enumerate}

The characterization of the neutral letter regular languages allows for a
short $\mathbf{PSPACE}$-hardness proof of the following result.

	\paragraph{Decidability of $O(1)$-Membership}
(Theorem~\ref{T constant PSPACE})
	Given a regular language $L$ as a nondeterministic finite automaton, 
	the problem of deciding if
	$\opc(L)\in O(1)$ is $\mathbf{PSPACE}$-complete.

	The $\mathbf{PSPACE}$ upper bound is obtained by an on the fly algorithm
	that verifies the identities of $\QESL$ (Point (3) of the above-mentioned $O(1)$-Characterization Theorem) via representatives of the stable semigroup of the syntactic morphisms
	via suitably computed transition matrices of the nondeterministic finite automaton.

	\paragraph{Characterization and Complexity of $\QEACom$}
	Let $k,t\in \bbN_{>0}$. 
	For each vector $\nu = (\nu_{i,a})_{i \in [0,k-1], a \in \Sigma}$, 
	where $\nu_{i,a} \in [0,t]$ we define the language
	\begin{align}
    \mathrm{Count}_k^t(\nu)&=\{ w \in \Sigma^{k\N_{>0}} : 
		\left|\pi_{i,k}(w)\right|_a \equiv^t \nu_{i,a}\text{ for all
		$i\in[0,k-1]$ and all $a\in\Sigma$}\}.
	\end{align}
	For each $i\in [0,k-1]$ and $a \in \Sigma$ 
	the language $\mathrm{Thr}_{i,k}(a,t)$ is defined to be
    \begin{align}
        \mathrm{Thr}_{i,k}(a, t) &= 
	    \{ w \in \Sigma^{k\bbN_{>0}} : \left|\pi_{i,k}(w)\right|_a \geq t \}.
    \end{align}

    We refer to Theorem~\ref{T QEAComm} for the following
    characterization.
	Let $L\subseteq\Sigma^+$ be a regular language. 
		Then the following statements are equivalent:
	\begin{enumerate}
		\item $\eta_L$ belongs to $\QEACom$,
			i.e. the stable semigroup $\stab(\eta_L)$ of $\eta_{L}$
			satisfies  the identities
$epx^{\omega+1} qf=epx^{\omega}qf$ and 
$epxyqf=epyxqf$, where $e,f$ range over idempotents and where
			$p,q,x,y$ range over all elements of $\stab(\eta_L)$.
		\item $L$ is a union of $\approx_k^t$-classes
			for some $k,t\in\bbN_{>0}$,
			where
$\approx_k^t$ denotes the following congruence on $\Sigma^+$ (see \S\,\ref{sec:approx_k} for more details):
			$u\approx_k^t v$ if, and only if, $u=v$ or
			\begin{itemize}
				\item $|u|,|v|>2k$,
				\item $u=u'xu''$ and $v=v'yv''$ have the same 
					$k$-length prefix $u'=v'$ and 
			the same $k$-length suffix $u''=v''$, 
		\item $|x|\equiv|y|\text{ mod }k$, and
		\item for all residue classes $i\in[0,k-1]$, up to threshold $t$,
			the words $x$ and $y$ have the same number 
			of occurrences of letters appearing
			in positions equivalent to $i \bmod k$.
			\end{itemize}
        \item $L$ is a disjoint finite union of finite languages and languages of the 
		form $u\mathrm{Count}_k^t(\nu)v$, where $k,t\in\bbN_{>0}$, 
			$\nu \in [0,t]^{[0,k-1]\times \Sigma}$ and $u ,v\in \Sigma^+$.
		\item $L$ is a finite Boolean combination of finite languages and languages
			of the form
			$u\mathrm{Thr}_{i,k}(a,t)v$,
			where $k,t\in\bbN_{>0}$, $i \in [0,k-1]$,
            $a \in\Sigma$, and $u,v\in\Sigma^+$.
  \item $L$ is definable in 
			$\rmC^1[\Sigma, \mathrm{arb}]$,
			where $\rmC^1$ denotes the extension of $\rmFO^1$ with counting
			quantifiers of the form $\exists^{\geq t}$.
    \end{enumerate}
The following circuit complexity upper bound is immediate from~\cite{HASTAD1994200,Friedman86}.
    \begin{itemize}
	    \item (Proposition~\ref{P QEACom log}) If $\eta_L \in \QEACom$, then $\opc(L) \in O(\log n)$.
    \end{itemize}

\subsection{Related work}
\label{subsec:related}

\paragraph{Bounds on Circuit Size}
All languages including noncomputable ones have exponential sized circuit
families and almost all languages require exponential sized 
circuits~\cite{AroraBarak}. 
No explicit language is known that requires super-linear circuit complexity. 
Hromkovič proved in~\cite{Hromkovic85a} that certain explicit Boolean 
functions require 
a linear number of gates in the unrestricted unbounded fan-in Boolean circuit 
model, thus establishing one of the earliest linear lower bounds for general
unbounded fan-in circuits.

Coming to regular languages, every regular language is computed by a family of linear size and logarithmic depth, i.e., all regular languages are in $\mathbf{NC}^1$. 
Bounds are known for circuits computing Boolean operations and concatenations of  regular languages in terms of the circuit complexity of the constituent languages. Likewise, there are theorems relating circuits computing membership in a regular language, computing the image under its syntactic morphism, and the word problem for the syntactic monoid (\cite{Koucky2009} Proposition 3.1). If the product over the syntactic monoid of $L$ is computable by a polynomial-size Boolean circuit over arbitrary gates, then for every $\varepsilon>0$, there is a family of circuits  of size
$O(n^{1+\varepsilon})$ for $L$ over the same set of gates.

\paragraph{Length-Multiplying Pseudovarieties}

Schützenberger proved that a regular language is star-free if, and only if,
its syntactic monoid is aperiodic (i.e., in the pseudovariety $\bfA$), providing the fundamental algebraic characterization of star-free languages~\cite{Schutz65}.
McNaughton and Papert~\cite{McNPap71}  proved that the star-free languages are exactly those definable in first-order logic \(\rmFO[\Sigma, <]\), thereby connecting automata, algebra, and logic.
In their seminal work~\cite{BCST92}, 
Barrington, Compton, Straubing, and Th{\'e}rien proved the analogous result 
that regular languages in $\ACO$ are precisely those whose stable 
semigroups are aperiodic, and equivalently definable in $\rmFO[\Sigma,\mathrm{arb}]$.

The notion of $\calC$-variety, of which lm-variety is a particular instance, was introduced 
by Straubing in~\cite{Straubing02}. 
~\cite{ActionCvariety, StraubingPin10} extended Eilenberg's variety theory from monoids to general 
$\mathcal{C}$-varieties by proving a Reiterman-style identity theorem and by developing generalized notions of Mal'cev product, positive varieties, 
polynomial closure, and concatenations and wreath products,
thereby unifying several previous variants of algebraic language 
theory. An analogous equational framework based on 
generalized implicit operations was developed by Kunc~\cite{Kunc10} yielding an equational characterization of these pseudovarieties and, in particular, of the one corresponding to $\mathbf{AC}^0$.

Several key results already use or could be stated in terms of lm-varieties. The class of regular languages in $\mathbf{NC}^1$ corresponds to the lm-variety $\mathbf{Fin}$ of all stamps.
The characterization of regular languages in $\ACO$ by
Barrington, Compton, Th{é}rien and Straubing~\cite{BCST92} can be stated in the following way: the class of regular languages in $\ACO$ corresponds to the lm-variety $\bfQ\bfA$ of quasi-aperiodic stamps. Likewise, the regular languages in 
 $\mathbf{WLAC}^0$, i.e. those computed by constant-depth circuit families 
  with a linear number of  wires,
 have recently been characterized
 by Paperman and Cadilhac~\cite{CadilhacP22} as those languages whose
 syntactic morphism lies in the lm-variety $\mathbf{QLDA}$, and equivalently those definable in $\rmFO^2[\Sigma, <,+1, \mathrm{mod}]$. Apart from this, a number of classes of regular languages defined in terms of logical fragments, most notably low levels of alternation hierarchy, namely  $\mathcal{B}\Sigma_1$ and $\Sigma_2$ (\resp~$\Pi_2$) fragment of $\rmFO[<,\bmod]$ \cite{Chaubard}, and the two-variable fragment $\rmFO^2[\Sigma, <,\mathrm{mod}]$ \cite{Papermanfo2,KW15} are also characterized in terms of lm-varieties.

\paragraph{Programs over Monoids}
The surveys~\cite{BridgesTessonTherien, LogicMeetsAlgebra} and the monograph \cite{Straubingbook} 
discuss in detail the semigroup theory related to circuits. 
Following Barrington's theorem~\cite{Barrington1989}, there has been considerable interest in programs over monoids~\cite{BarringtonTherien1988,McKenzieNC1,PeladeauTCS} and in particular obtaining lower bounds using them. However, the framework turned out to be complex, and except for partial results no 
breakthroughs were achieved using it~\cite{GrosshansMS22}. There have been several works studying the power of programs over specific varieties~\cite{NathanJ,GrosshansMS22}.
Languages defined by polynomial-sized programs over monoid varieties 
constitute lm-varieties~\cite{McKenzieNC1}.
In particular the lm-variety $\QESL$ that we show to characterize 
$O(1)$ circuit complexity has been studied in the context of programs over monoids~\cite{NathanJ}.

\subsection{Organization of this paper}

In~\S\,\ref{S Prelim} we introduce general mathematical notation,
in particular words and languages, recognition by finite semigroups,
circuit complexity, and first-order logic over words.
Rational truth-table reductions and
length-multiplying varieties are introduced in~\S\,\ref{S Reductions}.
In~\S\,\ref{S Characterization} we concern ourselves with
the characterization of the regular languages of constant
circuit complexity (proof of Theorem~\ref{T const main}).
In~\S\,\ref{S PSPACE} we show that $O(1)$ membership
is $\mathbf{PSPACE}$-complete for nondeterministic finite automata.
In~\S\,\ref{S Discussion} we discuss the pseudovariety
$\QEACom$ and some subtleties on encodings of input gates.
We conclude in~\S\,\ref{S Conclusion}.
Some of the proofs can be found in the appendix.


\section{Preliminaries}\label{S Prelim}
By $\N=\{0,1,\dots\}$ we denote the non-negative integers and by 
$\N_{>0}=\{1,2,\dots\}$ the positive integers.
By $[i,j]$ we denote the interval $\{i,i+1,\dots,j\}$.
For each set $X$ we denote by $2^X$ the powerset of $X$.
For each $k\in\bbN_{>0}$ 
let $\equiv_k$ denote the congruence modulo $k$ on $\bbN$
and by $k\N=\{n\in\bbN\mid n\equiv_k0\}$
the set of all multiples of $k$. For each $t\in\bbN_{>0}$ 
let $\equiv^t$ denote the threshold congruence on $\bbN$ defined as 
    $m \equiv^t n$ if $m = n$  or  $m,n\geq t$.

\subsection{Words and Languages}
For a finite alphabet~$\Sigma$, let $\Sigma^*$ (\resp~$\Sigma^+)$ denote the set of all 
finite words (\resp~non-empty finite words) over~$\Sigma$.
Let $\varepsilon$ denote the {\em empty word} and let
$\cdot$ denote concatenation of words (that we mostly drop). Too,
we use $\prod$ in the context of concatenation of words.
The length of a word~$w$ is denoted by~$|w|$.
By $\Sigma^{\leq n}$ we denote the set of words of length at most $n$
and by $\Sigma^{s\N}$ (\resp~$\Sigma^{s\N_{>0}}$) the set 
of (\resp~non-empty) words whose length is a multiple of $s$.
For each letter $a\in\Sigma$ we denote by $|w|_a$ the number of
occurrences of the letter $a$ in $w$.

Throughout the paper we write words $w\in\Sigma^n$ as $w=w_0\dots w_{n-1}$,
where 
$0,\dots,n-1$ are {\em the positions of $w$} and
$w_0,\dots, w_{n-1}\in\Sigma$ are letters.
We also use the notation $(w)_i=w_i$ to denote the letter at position $i$ of $w$.
For a position $i\in[0,n-1]$ and a letter $a\in\Sigma$ we denote
by $w[a/w_i]=w_0\dots w_{i-1}aw_{i+1}\dots w_{n-1}$ the word
that is obtained from $w$ by replacing the letter at position $i$ by $a$.
The \emph{alphabetic content} of a word $w\in\Sigma^*$ is defined
as $\cont(w)=\{(w)_i\in\Sigma\mid i\in[0,|w|-1]\}$.

We say $u\in\Sigma^*$ is a \emph{prefix} (\resp~\emph{suffix}) of $v\in\Sigma^*$, 
if there exists $w\in \Sigma^*$ such that 
$v=uw$ (\resp~$v=wu$). More generally, $u$ is a \emph{factor} of $v$ if there 
exist $w,w'\in\Sigma^*$ such 
that $v=wuw'$.

If $u$ is a prefix of $v$, then the \emph{left quotient} of $v$ by $u$, denoted by~$u^{-1}v$, is the 
word~$w\in\Sigma^*$ such that $v=uw$.
Analogously the \emph{right quotient} of $v=wu$ by $u$ is the 
word~$vu^{-1}=w$. 
If $v=uxw$ then the \emph{two-sided quotient}~$u^{-1}vw^{-1} = x$ is well defined since $u^{-1}(vw^{-1}) = (u^{-1}v)w^{-1}$.

A {\em language} is a subset of $\Sigma^*$ and it is {\em positive} if
it is a subset of~$\Sigma^+$. 
The {\em positive left-quotient} of a language $L\subseteq\Sigma^+$ by a word $x\in\Sigma^*$
is the positive language $x^{-1}L = \{w \in \Sigma^+ \mid xw \in L\}$.
The {\em positive right-quotient} is defined analogously.
Unless stated otherwise we will be working with {\em positive} languages in the rest of this
paper.

\subsection{Recognition by Finite Semigroups}\label{S Semigroups}

A semigroup $(S,\cdot)$ is a set $S$ with a binary operation 
$\cdot : S \times S \rightarrow S$ that is associative.  
We often denote the semigroup operation by juxtaposition
and often write $S$ for $(S,\cdot)$.
The semigroup $(S,\cdot)$ is a {\em monoid} if it has a neutral element $1_S\in S$, i.e.
$s1_S=1_Ss=s$ for all $s\in S$. A monoid is a {\em group} if each element $s\in S$ 
has an inverse $s^{-1}\in S$ such that $ss^{-1}=s^{-1}s=1_S$.
An {\em idempotent} of a semigroup $S$ is an element $e\in S$ satisfying $e\cdot e=e$;
by $E(S)$ we denote the set of all idempotents of $S$.
In a finite semigroup $S$ every element $s$ has a unique idempotent
power $s^\omega\in E(S)$.
Moreover, there exists $n\geq 1$ such that 
$s^n$ is an idempotent for all $s\in S$, for instance
$n=|S|!$.
Any such $n$ is called an {\em exponent} of $S$.

Let $(S,\cdot)$ and $(T,+)$ be semigroups.  
A {\em (semigroup) morphism} from $S$ to $T$ is a map $\varphi:S\rightarrow T$ such that 
\begin{align}
\varphi(s_1\cdot s_2)=\varphi(s_1)+\varphi(s_2)
\quad \text{for all } s_1,s_2\in S.
\end{align}

Let~$\Sigma$ and~$\Gamma$ be finite alphabets. 
Any morphism 
$\varphi : \Sigma^+ \to \Gamma^+$ is completely specified by 
its restriction~$\varphi\restriction \Sigma : \Sigma \to \Gamma^+$. 
It is \emph{length-multiplying} if there exists $\ell\in\bbN_{>0}$ 
such that $\varphi(a) \in \Gamma^\ell$ for each~$a \in \Sigma$. 
We remark that if $\varphi:\Sigma^+\to \Gamma^+$ and $\psi:\Gamma^+\to \Upsilon^+$ are
both length-multiplying, then $\psi\circ\varphi$ is also a length-multiplying
morphism from $\Sigma^+$ to $\Upsilon^+$.
Given $K \subseteq \Gamma^+$, the \emph{inverse image} of~$K$ under~$\varphi$ is $\varphi^{-1}(K) = \{u \in \Sigma^+ \mid \varphi(u) \in K\}$.

The {\em evaluation morphism} of a semigroup $S$ is the morphism
$\gamma:S^+\to S$ satisfying $\gamma(x)=x$ for all $x\in S$.
A {\em semigroup stamp} is a {\em surjective} semigroup morphism of
the form $\varphi:\Sigma^+\to S$, where
$\Sigma$ is a finite alphabet and $S$ is finite.
Since we are only concerned with semigroup
stamps we henceforth call them {\em stamps}.

A subset $T\subseteq S$ is a \emph{subsemigroup} of $S$ 
if $T$ is closed under the semigroup operation. 
An equivalence relation $\sim$ on $S$ is a \emph{congruence} 
if $x \sim y$ implies $xs \sim ys$ and $sx \sim sy$ for all $x,y,s\in S$.  
The {\em quotient} of $S$ by $\sim$, denoted by $S/{\sim}$, 
is the semigroup $\{[s] \mid s \in S\}$ formed by equivalence classes of $S$
with the multiplication $[s_1][s_2]=[s_1s_2]$.
A semigroup $T$ {\em divides} a semigroup $S$, denoted as $T \preceq S$, 
if $T$ is a quotient of a subsemigroup of $S$.  

A language $L \subseteq \Sigma^+$ is \emph{recognized} by a finite semigroup $S$
if there exists a morphism 
\[
\varphi : \Sigma^+ \to S 
\]
and a subset $P \subseteq S$ such that 
\[
L = \varphi^{-1}(P).
\]
A language $L \subseteq \Sigma^+$ is said to be \emph{recognizable} 
if it is recognized by some finite semigroup. The class of recognizable languages over the alphabet $\Sigma$ is denoted by $\REG(\Sigma^+)$.

Recognizable languages are closed under Boolean operations, quotients by words and images 
under morphisms and inverse morphisms. 
Given a recognizable language $L\subseteq\Sigma^+$ the relation $\sim_L$ on $\Sigma^+$ given by
\begin{align}
u\sim_L v ~~\text{ if }~~
xuy\in L\Leftrightarrow xvy\in L \text{ for each $x,y\in \Sigma^*$}    
\end{align}
is a congruence, known as the {\em syntactic congruence}. 
The corresponding semigroup $\Sigma^+/{\sim_L}$ is called the {\em syntactic semigroup}, 
denoted by $S(L)$ and the canonical morphism $\eta_L: \Sigma^+ \to S(L)$ is called the {\em syntactic morphism of $L$}. 
Note that $\eta_L$ is a stamp, we therefore refer to $\eta_L$ also as
the {\em syntactic stamp of $L$}.
Syntactic semigroups possess the following canonical property,
for a proof we refer to \cite[Lemma 4.11]{Pin2025Mathematical}: for each 
$x\in S(L)$ there exists a finite set of pairs of words
$\{(u_i,v_i)\mid i\in I\}$ and $\{(u_j,v_j)\mid j\in J\}$
such that 
\begin{align}\label{E syn dist}
	\eta_L^{-1}(x)=\bigcap_{i\in I} u_{i}^{-1}Lv_{i}^{-1}\setminus
	\bigcup_{j\in J} u_{j}^{-1}Lv_{j}^{-1}.
\end{align}

It is well-known that the class of recognizable languages coincides
with the class of languages accepted by finite-state automata or equivalently
by regular expressions, namely the class of {\em regular languages}.

Let $\varphi:\Sigma^+\to S$ be a morphism onto a finite semigroup $S$.
The smallest number $s\geq 1$ satisfying $\varphi(\Sigma^s)=\varphi(\Sigma^{2s})$ is called the {\em stability index of $\varphi$};
the {\em stable semigroup of $\varphi$} is 
$\stab(\varphi)=\varphi(\Sigma^s)$.
We remark that $s$ is (at most) exponential in $|S|$.

\subsection{Circuit Complexity}\label{S Circuit Complexity}

Consider two functions $f,g:\N_{>0}\to\bbR_{\geq 0}$.
As usual, we write $f\in O(g)$ if there exists
$c\in\bbR_{>0}$ such that 
 $f(n)\leq c\cdot g(n)$ for all but finitely many $n\in\bbN_{>0}$.
 We write $f\in\Theta(g)$ if $f\in O(g)$ and $g\in O(f)$.
Recall that $f\in o(g)$ if for all
$c\in\bbR_{>0}$ 
we have $f(n) < c\cdot g(n)$ for all but finitely many $n\in\bbN_{>0}$.

For lower bounds the following weaker notion is more
suitable for circuit complexity lower bounds.
We refer to~\cite{VitanyiM84} for more details.

\begin{definition}[Hardy-Littlewood $\Omega$ notation]\label{D Hardy}
	For two functions $f,g:\N_{>0}\to\bbR_{\geq 0}$ we write 
	$f\in\Omega(g)$ if 
	there exists $c\in\bbR_{>0}$ such that $f(n) \geq c\cdot g(n)$
	for infinitely many $n\in\N_{>0}$.
\end{definition}

Note that $f\in\Omega(g)$ if, and only if, $f\not\in o(g)$.

Next we introduce the particular circuit model used in this paper.
The presentation of the inputs employs a subset encoding following
Kouck\'y~\cite{Kouckyhandbook}.
With respect to succinctness it is asymptotically equivalent to the
one-hot encoding model
under the promise that the input is well-formed, as precisely formulated
in Proposition~\ref{P with promise}.

\begin{samepage}
	\begin{definition}\label{D Circuit}
	For each $n\in\bbN_{>0}$ an {\em $n$-circuit} 
over a finite non-empty alphabet $\Sigma$ 
is a tuple $C=(G,\prec,\lambda,g_{out})$, where the following hold:
\begin{itemize}
	\item  $(G,\prec)$ is a 
		directed acyclic graph,
		i.e. $\prec^+$ is a strict partial order.
		The set $G$ is a non-empty set of {\em gates} and 
		$\prec\ \subseteq G\times G$ is a set of {\em wires}. 
		The set of gates $G_0\subseteq G$ of 
		in-degree $0$ are the {\em input gates}. 
		The set of gates $G_1=G\setminus G_0$
		of in-degree $>0$ are the {\em non-input gates}.
\item $\lambda:G\rightarrow\{\wedge,\vee\}\cup \left([0,n-1]\times 2^\Sigma\right)$
assigns each gate a type, 
		where moreover we have $\lambda(G_0)= [0,n-1]\times 2^\Sigma$
		and $\lambda(G_1)\subseteq\{\wedge,\vee\}$.
	\item There is a unique gate $g_{out}\in G_1$ 
that is maximal with respect to $\prec^+$; it is called
the {\em output gate}.
\end{itemize}
\end{definition}
\end{samepage}
An {\em input} to $C$ is a word from $\Sigma^n$.
Given such an input $w=w_0\dots w_{n-1}\in \Sigma^n$, we define the relation
$w\models g$ by induction on $\prec$.
For the induction base, i.e., when $g$ is of in-degree $0$, we define $w\models g$
if $\lambda(g)=(i,\Gamma)$ and $w_i\in\Gamma$.
For the induction step, in case $g$ is an $\wedge$-gate 
(\resp~$\vee$-gate)
we have $w\models g$ if $w\models h$ for
all (\resp~for some) $h\in G$ with $h\prec g$.
We define the {\em language} computed by the circuit $C$ as
$L(C)=\{w\in \Sigma^n\mid w\models g_{out}\}$.

The {\em size} of $C$ is $|C|=|G_1|$, i.e. 
the number of non-input gates. Note that since we require that $g_{out}\in G_1$ 
we always have $|C|\geq 1$.
The {\em wire size} of $C$ is $|\!\!\prec\!\!|$.
The {\em depth} of $C$, denoted by $\depth(C)$, is the length 
(i.e.,~the number of wires) of a longest path from 
some input gate to the output gate in $(G,\prec)$.

A {\em circuit family over the alphabet $\Sigma$} is an infinite tuple 
$\calC=(C_n)_{n\geq 1}$, 
where each $C_n$ is an $n$-circuit over $\Sigma$.
The circuit family $\calC=(C_n)_{n\geq 1}$ over
$\Sigma$ {\em computes}
the language $L(\calC)=\bigcup\{L(C_n)\mid n\geq 1\}$.
The {\em size} of the family $\calC$, denoted 
by $|\calC|:\N_{>0}\to\bbN_{>0}$, is defined to be the function $n\mapsto |C_n|$.
The {\em depth} of the family $\calC$, denoted 
by $\depth(\calC):\N_{>0}\to\N_{>0}$,
is defined to be the function $n\mapsto\depth(C_n)$.
The {\em wire size} is defined analogously.
We remark that our circuit families have unbounded fan-in and
that they are nonuniform:
we impose no condition on our circuit families, other than that 
they satisfy the required size bounds.
Note that given any language $L\subseteq \Sigma^+$ and any input length 
$n\in\bbN_{>0}$
there exists an $n$-circuit $C_n$ such that $L\cap \Sigma^n=L(C_n)$.
Note that for every language $L\subseteq\Sigma^+$ there exists a circuit
family $\calC$ computing $L$, where moreover $|\calC|$ is exponential.
The {\em circuit complexity of a language $L\subseteq\Sigma^+$} is defined
to be the function $\opc(L):\bbN_{>0}\to\bbN_{>0}$, where 
\begin{align}
	\opc(L): n\mapsto
	\min\{|C|:\text{$C$ is an $n$-circuit computing $L\cap\Sigma^n$}\}.
\end{align}
Next, we define circuit complexity classes. Since we are interested in the complexity of regular languages, we formulate them as parameterized families following the definition of varieties of languages. 
For a class of functions $\calF$ from $\N_{>0}$ to $\N_{>0}$ define the size complexity class
$\SIZE(\calF)$ to be the parameterized family of languages that maps each finitely-generated free semigroup $\Sigma^+$ to a set of subsets of $\Sigma^+$ given by
\begin{align}
	\SIZE(\calF):\Sigma^+\mapsto\{L \subseteq \Sigma^+ \mid \exists 
	\text{circ. fam.} \calC: L(\calC)=L, |\calC|\in \calF\}.
\end{align}
Analogously, for classes $\calF$ and $\calG$ of functions, define
\begin{align}
\SIZEDEPTH(\calF,\calG): \Sigma^+ \mapsto \{ L \subseteq \Sigma^+ \mid \exists
\text{circ. fam.}\ \calC: 
	L(\calC)=L, |\calC|\in \calF,\depth(\calC)\in\calG\}.\label{D SIZEDEPTH}
\end{align}
The definition of the class $\WIRESIZEDEPTH(\calF,\calG)$ is similar. We say a language $L\subseteq \Sigma^+$ is in $\SIZE(\calF)$ if $L \in \SIZE(\calF)(\Sigma^+)$. 
For a single function $f:\N_{>0}\to\N_{>0}$ we just write $\SIZE(f)$ to denote $\SIZE(\{f\})$.
The parameterized family $\REG$ of recognizable languages is 
\begin{align}
    \REG: \Sigma^+ \mapsto \REG(\Sigma^+).
    \label{REG}
\end{align}
Finally, for two parameterized families  $\calV, \calW$ we define $\calV\cap\calW$ to be the family
$\calV\cap\calW:\Sigma^+\mapsto \calV(\Sigma^+) \cap \calW(\Sigma^+)$.

Assume that $L$ is a regular language. For an element $s\in S(L)$, it is 
straightforward to construct a circuit family of linear size and logarithmic depth 
that computes the set of words whose image under the syntactic morphism 
$\eta_L$ is $s$.  As a consequence, we obtain

\begin{samepage}
\begin{fact}
\label{prop:reg-lin} For each regular language $L$ we have
	$L\in\SIZEDEPTH(O(n),O(\log n))$.
\end{fact}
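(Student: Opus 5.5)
The plan is to compute the syntactic image $\eta_L(w)$ of the input by a balanced binary tree of semigroup products, where every product node is a small constant-size circuit. Set $S=S(L)$ and $\eta=\eta_L$. Since $L=\eta^{-1}(P)$ for some $P\subseteq S$, a circuit that outputs the one-hot vector $(\,[\eta(w)=s]\,)_{s\in S}$ gives $L$ with one extra $\vee$-gate over the coordinates $s\in P$. The gate count stays linear and the depth grows by only one.

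\textbf{Leaves.} For each position $i\in[0,n-1]$ and each $s\in S$ we want a gate $g_{i,s}$ that is true exactly when $\eta(w_i)=s$. The subset encoding makes this direct. We take an input gate labelled $(i,\Gamma_s)$ with $\Gamma_s=\{a\in\Sigma\mid \eta(a)=s\}$, and place above it a single $\vee$-gate so that the gate counts as a non-input gate. This costs at most $|S|$ gates per position, hence $O(n)$ in total, at depth $1$.

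\textbf{Internal nodes.} Split $[0,n-1]$ recursively into halves; this gives a tree with $n$ leaves, $n-1$ internal nodes and depth $\lceil\log_2 n\rceil$. Suppose an internal node for an interval $I=I_1I_2$ already has one-hot vectors $(x_{s})_{s\in S}$ and $(y_{t})_{t\in S}$ for the two halves. We compute
\[
z_r=\bigvee_{(s,t)\in S\times S,\ st=r}\bigl(x_s\wedge y_t\bigr),\qquad r\in S .
\]
Exactly one $x_s$ and exactly one $y_t$ is true, so $z_r$ is true if and only if $\eta(w_I)=r$. Each node uses at most $|S|^2+|S|$ gates, which is a constant, and adds depth $2$. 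An induction along the tree shows that the root vector is $(\,[\eta(w)=s]\,)_{s}$.

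\textbf{Counting and edge cases.} The total size is $O(|S|n+|S|^2 n)=O(n)$ and the depth is $2\lceil\log_2 n\rceil+O(1)=O(\log n)$. For $n=1$ the leaf layer plus the output gate already suffice. Two routine points remain:
\begin{itemize}
\item If $P=\emptyset$, or some $\Gamma_s$ is empty, we still need syntactically valid gates. We handle this with constant false gates, for instance $\bigwedge$ of $(0,\{a\})$ and $(0,\Sigma\setminus\{a\})$, which is fine whenever $|\Sigma|\geq 2$; for a unary alphabet we use an input gate $(0,\emptyset)$.
\item The output gate must be a unique non-input gate. Gates unused for the output can simply be discarded, since this only decreases size and depth.
\end{itemize}

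No step is a genuine obstacle; the argument is a standard parallel-prefix evaluation. The only points that need care are the bookkeeping for the constant factors and making the construction conform to the formal definition of an $n$-circuit, namely a unique maximal output gate and a non-input output gate.
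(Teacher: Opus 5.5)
Your proof is correct and is the standard balanced-tree evaluation of $\eta_L$ that the paper means when it calls the construction of linear-size, logarithmic-depth circuits for the sets $\eta_L^{-1}(s)$ ``straightforward'' and then takes the union over the accepting elements. One small simplification: in the subset encoding an input gate labelled $(i,\emptyset)$ is already a legal constant-false input, so you do not need the case split on $|\Sigma|$ for constant gates.
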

\end{samepage}

\subsection{First-Order Logic Over Words}\label{S FO}

As usual, a word $w=a_0\dots a_{n-1}\in\Sigma^n$ is represented as a first-order
structure over the domain $\{0,\dots,n-1\}$ along 
with the unary letter predicates $P_a$, for each $a\in \Sigma$, each of which
is true at a position $i$ if, and only if, $a_i=a$. 
The set of unary letter predicates $(P_a)_{a\in \Sigma}$ is represented in 
the vocabulary by $\Sigma$. 

In addition to the letter predicates, we have \emph{numerical} predicates and constants whose interpretation only depends on the domain. 
The constant symbols $\min$ and $\max$ denote the \emph{first} and the \emph{last} positions of the word, respectively.
The binary predicate $<$ denotes the order on the positions. For $k\geq 0$, the binary relation $+k$, written as $x+k=y$, holds if and only if $y$ is the $k$-th \emph{successor} of $x$.
The collection of all such successor relations is denoted in the vocabulary by $+\omega$. For $r,k\in \bbN$ with $k\geq 1$,
the unary modular predicate $x \equiv r \bmod k$  is true at a position $j$ if $j\equiv r \bmod k$. The collection of modular predicates is denoted by $\mathrm{mod}$ in the vocabulary. 

By $\mathrm{arb}$ we denote the class of all \emph{arbitrary} numerical predicates and constants. A numerical predicate or constant is \emph{regular} if it is definable by an MSO formula using the order relation.  It is easy to verify that all the constants and predicates in the previous paragraph are regular. We let $\mathrm{reg}$ denote  the class of all regular numerical predicates and constants.

For $m\in \bbN$, the counting quantifier $\exists^{\geq m} x$
is true of a formula $\varphi(x)$ whenever at least $m$ elements satisfy $\varphi$, that is,
\begin{align}
\exists^{\geq m} x\, \varphi(x) \Longleftrightarrow |\{ i  \mid w,i \models \varphi(x)\}|\geq m.
\end{align}
The quantifier $\exists^{\leq m} x$ is defined as the negation of $\exists^{\geq m+1}$, that is
\begin{align}
    \exists^{\leq m} x\, \varphi(x) \defeq \neg \exists^{\geq m+1} x\, \varphi(x).
\end{align}

First-order logic with the vocabulary $\tau$ is denoted by $\rmFO[\tau]$. We write $C[\tau]$  for the extension of $\rmFO[\tau]$ obtained by adding counting quantifiers.
For each $k\geq 0$, the logic $\rmFO^k[\tau]$ is the fragment of $\rmFO[\tau]$ consisting of formulas that use at most $k$-variables.
Likewise $C^k[\tau]$ is the $k$-variable fragment of $C[\tau]$. In particular, $\rmFO^1[\tau]$ is the fragment using only the variable $x$ and $\rmFO^2[\tau]$ is the fragment using only the variables $x$ and $y$.

The language $L(\varphi)$ of a formula $\varphi$ is the set of all words satisfying $\varphi$. A language $L\subseteq \Sigma^+$ is defined by the formula $\varphi$ if $L(\varphi)=L$.

\begin{example}
\label{example:logic-const}
Let $w=a_0\ldots a_{n-1}\in \Sigma^+$. The singleton set $\{w\}$ is defined by the following formula in $\rmFO^1[\,\Sigma, \min, \max, +\omega\,]$.
  \begin{align}
    \varphi_w &\defeq \min+(n-1)=\max \,\wedge \bigwedge_{0\leq i \leq n-1}\exists x\, (\min+i = x \wedge P_{a_{i}}(x)).
    \end{align}
The language $(ab)^+$ is defined by the formula $\varphi_{(ab)^+} \in \rmFO^1[\,\Sigma, \min, \max, \mathrm{mod}\,]$.
\begin{align}
 \varphi_{(ab)^+} &\defeq \max \equiv 1 \bmod 2\, \wedge\forall x\, (( x \equiv 0 \bmod 2 \rightarrow P_a(x) ) \wedge (x \equiv 1 \bmod 2 \rightarrow P_b(x))).
\end{align}
The language $\mathit{b^*ab^*ab^*} \subseteq \{a,b\}^*$ is defined by the formula $\varphi_{\mathit{b^*ab^*ab^*}} \in  C^1[\,\Sigma\,]$.
\begin{align}
\varphi_{\mathit{b^*ab^*ab^*}} &\defeq \exists^{\geq 2}x\, P_a(x) \wedge \neg \exists^{\geq 3}x\, P_a(x).
\end{align}

\end{example}

\section{Reductions \& Varieties of Recognizable Languages}\label{S Reductions}

This section introduces a particular notion of rational
truth-table reduction that we use
throughout the paper, mainly as a tool for proving lower bounds.
We introduce these in~\S\,\ref{S Reductions Recognizable},
show their transitivity and also introduce
the class of mild functions for which circuit complexity
lower and upper bounds are proved to be carried over.
\S\,\ref{S Reductions Stamps} provides standard tools
for proving circuit complexity for regular languages yet for our notion
of rational truth-table reduction.
In~\S\,\ref{S Circuit LM} we prove that the classes of regular languages
$\SIZE(O(f(n)))$ and 
$\SIZEDEPTH(O(f(n)),O(g(n)))$ for mild functions $f,g$ are in fact lm-varieties of
languages, a notion we also recall there.
~\S\,\ref{S Pseudo} is concerned with length-multiplying
pseudovarieties of stamps and~\S\,\ref{sec:identity} with
identities on semigroups and stamps.

\subsection{Reductions for Recognizable Languages}\label{S Reductions Recognizable}
Throughout the paper we will be working with the following 
notion of reduction, a rigid form of constant-depth reductions~\cite{ConstantDepth1984}.

\begin{samepage}
	\begin{definition}[Rational Truth-Table Reduction]\label{D Rational}
Let $L_1\subseteq \Sigma_1^+$ and $L_2\subseteq \Sigma_2^+$ be languages.
	The language {\em $L_1$ rationally truth-table reduces to $L_2$}, 
$L_1\reduce L_2$ in notation, if 
there exists a length-multiplying morphism 
$\varphi:\Sigma_1^+\rightarrow \Sigma_2^+$ such that
$L_1$ is a finite Boolean combination over
a finite set of languages all of the form $\varphi^{-1}(u^{-1}L_2v^{-1})$,
where $u$ and $v$ range over $\Sigma_2^*$.
\end{definition}
\end{samepage}

We remark that there are two canonical instances of a rational 
truth table reduction. 
The first occurs when the Boolean function is the identity and 
$u=v=\varepsilon$; such reductions correspond to inverse images of 
length-multiplying morphisms. The second occurs when 
$\varphi$ is the identity; these correspond to finite Boolean combinations of two-sided quotients.
The following proposition states that various operations commute. It
is handy for showing transitivity of rational truth-table reductions.

\begin{minipage}{\textwidth}
\begin{samepage}
\begin{propositionrep}[Commuting operations]\label{P commute}
We observe the following commutativities of operations.
	\begin{enumerate}
		\item Inverse (length-multiplying) morphisms and Boolean operations 
			commute, i.e.,
			\begin{align}
				\varphi^{-1}(L_1\cap L_2)&=\varphi^{-1}(L_1)\cap \varphi^{-1}(L_2)
				\label{E mucap}\\
				\varphi^{-1}(\overline{L})&=\overline{\varphi^{-1}(L)}\label{E muneg} \end{align}
			for all (length-multiplying) morphisms $\varphi:\Sigma^+\to\Gamma^+$
			and languages $L_1,L_2,L\subseteq\Gamma^+$.
		\item Two-sided quotients and Boolean operations commute, i.e. 
			\begin{align}
				u^{-1}(L_1\cap L_2)v^{-1}&=
				u^{-1}L_1v^{-1}\cap u^{-1}L_2v^{-1}
				\label{E ucap}\\
				u^{-1}\overline{L}v^{-1}&=
				\overline{u^{-1}Lv^{-1}}\label{E uneg}
			\end{align}
			for all $u,v\in\Sigma^*$ and languages $L\subseteq\Sigma^+$.
		\item Inverse (length-multiplying) morphisms and two-sided
			quotients commute in the following sense:
			\begin{align}
				u^{-1}\varphi^{-1}(L)v^{-1}&=
				\varphi^{-1}(\varphi(u)^{-1} L\varphi(v)^{-1})
				\label{E umu}
			\end{align} for all (length-multiplying) morphisms 
			$\varphi:\Sigma^+\to\Gamma^+$ and languages $L\subseteq\Gamma^+$.
	\end{enumerate}
	\end{propositionrep}
\end{samepage}
\end{minipage}
\begin{appendixproof}
	\noindent\quad (1)
	\begin{align}
		\varphi^{-1}(L_1\cap L_2)&=\{w\in \Sigma^+\mid
		\varphi(w)\in L_1\cap L_2\}\\
		&= \{w\in\Sigma^+\mid\varphi(w)\in L_1\}\cap 
\{w\in\Sigma^+\mid\varphi(w)\in L_2\}\\
		&=\varphi^{-1}(L_1)\cap\varphi^{-1}(L_2)
	\end{align}
	\begin{align}
		\varphi^{-1}(\overline{L})&=
		\{w\in\Sigma^+\mid\varphi(w)\in\overline{L}\}\\
		&=\{w\in\Sigma^+\mid\varphi(w)\in\Gamma^+\setminus L\}\\
		&=\{w\in\Sigma^+\mid\varphi(w)\in\Gamma^+\}
		\setminus \{w\in\Sigma^+\mid\varphi(w)\in L\}\\
		&=\Sigma^+\setminus\{w\in\Sigma^+\mid\varphi(w)\in L\}\\
		&=\overline{\varphi^{-1}(L)}
	\end{align}
	\noindent\quad (2)
	\begin{align}
		u^{-1}(L_1\cap L_2)v^{-1}&=\{w\in\Sigma^+\mid uwv\in
		L_1\cap L_2\}\\
		&=\{w\in\Sigma^+\mid uwv\in L_1\}\cap
		\{w\in\Sigma^+\mid uwv\in L_2\}\\
		&=u^{-1}L_1v^{-1}\cap u^{-1}L_2v^{-1}
	\end{align}
	\begin{align}
		u^{-1}\overline{L}v^{-1} &=
		\{w\in\Sigma^+\mid uwv\in\overline{L}\}\\
		&=\{w\in\Sigma^+\mid uwv\in\Sigma^+\setminus L\}\\
		&=\{w\in\Sigma^+\mid uwv\in\Sigma^+\}\setminus
\{w\in\Sigma^+\mid uwv\in L\}\\
		&=\Sigma^+\setminus\{w\in\Sigma^+\mid uwv\in L\}\\
		&=\overline{u^{-1}Lv^{-1}}
	\end{align}
	\noindent\quad (3)
	\begin{align}
		u^{-1}\varphi^{-1}(L)v^{-1}&=
		\{w\in\Sigma^+\mid uwv\in\varphi^{-1}(L)\}\\
		&=\{w\in\Sigma^+\mid \varphi(uwv)\in L\}\\
		&=\{w\in\Sigma^+\mid \varphi(u)\varphi(w)\varphi(v)\in L\}\\
		&=\{w\in\Sigma^+\mid \varphi(w)\in\varphi(u)^{-1}L\varphi(v)^{-1}\}\\
		&=\varphi^{-1}(\varphi(u)^{-1}L\varphi(v)^{-1})
	\end{align}
\end{appendixproof}

\begin{proposition}\label{P reduce transitivity}
	If $L_1\reduce L_2\reduce L_3$, then $L_1\reduce L_3$.
\end{proposition}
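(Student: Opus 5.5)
We unfold both reductions and push the outer morphism and quotients inward using Proposition~\ref{P commute}. Let $L_1\subseteq\Sigma_1^+$, $L_2\subseteq\Sigma_2^+$, $L_3\subseteq\Sigma_3^+$. By assumption there are a length-multiplying morphism $\varphi:\Sigma_1^+\to\Sigma_2^+$ and finitely many pairs $(u_i,v_i)\in\Sigma_2^*\times\Sigma_2^*$ such that $L_1$ is a Boolean combination $B$ of the languages $\varphi^{-1}(u_i^{-1}L_2v_i^{-1})$. Likewise there are a length-multiplying morphism $\psi:\Sigma_2^+\to\Sigma_3^+$ and finitely many pairs $(x_j,y_j)\in\Sigma_3^*\times\Sigma_3^*$ such that $L_2$ is a Boolean combination $B'$ of the languages $\psi^{-1}(x_j^{-1}L_3y_j^{-1})$. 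Our candidate reduction uses the length-multiplying morphism $\psi\circ\varphi$; as remarked in \S\,\ref{S Semigroups}, compositions of length-multiplying morphisms are length-multiplying.

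The first step is to substitute the expression for $L_2$ into each building block $u_i^{-1}L_2v_i^{-1}$. Since two-sided quotients commute with Boolean operations by (\ref{E ucap}) and (\ref{E uneg}), $u_i^{-1}L_2v_i^{-1}$ is the Boolean combination $B'$ applied to the languages $u_i^{-1}\psi^{-1}(x_j^{-1}L_3y_j^{-1})v_i^{-1}$. By (\ref{E umu}) each of these equals
\[
\psi^{-1}\bigl(\psi(u_i)^{-1}x_j^{-1}L_3y_j^{-1}\psi(v_i)^{-1}\bigr)=\psi^{-1}\bigl((x_j\psi(u_i))^{-1}L_3(\psi(v_i)y_j)^{-1}\bigr).
\]
Here we use that iterated quotients compose, $a^{-1}(b^{-1}K)=(ba)^{-1}K$, which follows directly from the definition of positive quotients. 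For (\ref{E umu}) we also need $\psi(\varepsilon)=\varepsilon$ when $u_i$ or $v_i$ is empty, which is immediate.

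The second step applies $\varphi^{-1}$. Since inverse morphisms commute with Boolean operations by (\ref{E mucap}) and (\ref{E muneg}), and $\varphi^{-1}\circ\psi^{-1}=(\psi\circ\varphi)^{-1}$, each $\varphi^{-1}(u_i^{-1}L_2v_i^{-1})$ becomes a Boolean combination of languages of the form $(\psi\circ\varphi)^{-1}(s^{-1}L_3t^{-1})$ with $s,t\in\Sigma_3^*$. There are finitely many such languages, since $i$ and $j$ range over finite sets. Substituting these into $B$ expresses $L_1$ as a finite Boolean combination of such languages, which gives $L_1\reduce L_3$.

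The main obstacle is only bookkeeping. We must make sure the quotient identities hold for positive languages, including the case of empty quotient words. We must also check that complementation is taken relative to $\Sigma^+$ consistently at every level. Proposition~\ref{P commute} already handles both points.
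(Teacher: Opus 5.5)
Your proof is correct and follows essentially the same route as the paper: you substitute the Boolean expression for $L_2$, distribute quotients and $\varphi^{-1}$ over the Boolean operations via Points (1) and (2) of Proposition~\ref{P commute}, and move the quotients inside $\psi^{-1}$ with Point (3). You then merge the iterated quotients into $(x_j\psi(u_i))^{-1}L_3(\psi(v_i)y_j)^{-1}$ and conclude with the length-multiplying morphism $\psi\circ\varphi$. The only difference is the order of presentation: you treat each block $u_i^{-1}L_2v_i^{-1}$ before applying $\varphi^{-1}$, which changes nothing.
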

\begin{proof}
For a class $\calC$ of languages over $\Sigma$
let $\bool(\calC)$ denote the class of {\em finite Boolean combinations of languages
in $\calC$}.

Assume $L_1\reduce L_2\reduce L_3$
for languages $L_1\subseteq\Sigma_1^+$,
$L_2\subseteq\Sigma_2^+$, and
$L_3\subseteq\Sigma_3^+$.
By definition there exist length-multiplying
morphisms $\varphi:\Sigma_1^+\rightarrow\Sigma_2^+$
and $\nu:\Sigma_2^+\rightarrow\Sigma_3^+$,
finite sets of pairs of words
	$I\subseteq\Sigma_2^*\times\Sigma_2^*$
	and $J\subseteq\Sigma_3^*\times\Sigma_3^*$
such that 
\begin{align}
	L_1\in\bool\left(\biggl.\left\{\bigl.\varphi^{-1}(u^{-1}L_2v^{-1})\mid (u,v)\in I\right\}\right)\label{E R2}
\end{align}
	and 
	\begin{align}
		L_2\in\bool\left(\biggl.\left\{
			\nu^{-1}(x^{-1}L_3y^{-1})\mid (x,y)\in J\right\}\right).\label{E R3}
	\end{align}
	We substitute the Boolean expression for $L_2$ inside the Boolean expression
	for $L_1$ and distribute the Boolean operations over 
	quotients and the inverse morphism via Point (1) and (2) 
	of Proposition~\ref{P commute}.
	Thus, we obtain that $L_1$ is a finite Boolean combination of languages of the form
	\begin{align}
		\varphi^{-1}\left(u^{-1}\left[ 
		\nu^{-1}\left(x^{-1}L_3y^{-1}\right)\right]
		v^{-1}\right),\label{E mu u}
	\end{align}
	where $(u,v)\in I$ and $(x,y)\in J$.
	By Point (3) of Proposition~\ref{P commute}
	we obtain that (\ref{E mu u}) is equivalent
	to 
	\begin{align}
		\varphi^{-1}\left(\nu^{-1}\left(
		\nu(u)^{-1}\left(x^{-1}L_3y^{-1}\right)
		\nu(v)^{-1}\right)\right),\label{E mu u2}
	\end{align}
	which in turn, by the identity
	$s^{-1}(r^{-1}Lq^{-1})t^{-1}=(rs)^{-1}L(tq)^{-1}$,
	is equivalent to 
	\begin{align}
		\varphi^{-1}\left(\nu^{-1}
		\left((x\nu(u))^{-1}L_3(\nu(v)y)^{-1}
		\right)\right).\label{E mu u3}
	\end{align}
 Since $\varphi^{-1}(\nu^{-1}(L))=(\nu\circ\varphi)^{-1}(L)$
	we have that (\ref{E mu u3})
is equivalent to
	\begin{align}
(\nu\circ\varphi)^{-1}\left(
		(x\nu(u))^{-1} L_3
		(\nu(v)y)^{-1}\right).\label{E compose morph}
	\end{align}
	Thus, $L_1$ is a finite Boolean combination of languages of the
	form (\ref{E compose morph}). Since $(\nu\circ\varphi)$ is again
	a length-multiplying morphism we obtain
$L_1\reduce L_3$.
\end{proof}

\begin{lemma}\label{L sum}
Assume $L_1\reduce L_2$. 
	Then there exist constants $k\in\bbN$ and
	$d,\ell,m\in\bbN_{>0}$ such that for all $n\in\N_{>0}$ we have
	\begin{align}
		\opc(L_1)(n)\leq d+m\cdot\sum_{i=0}^{k}\opc(L_2)(\ell n+i).
		\label{E CC}
	\end{align}
\end{lemma}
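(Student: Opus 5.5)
Unfolding the definition of $L_1\reduce L_2$, fix a length-multiplying morphism $\varphi:\Sigma_1^+\to\Sigma_2^+$ with $\varphi(\Sigma_1)\subseteq\Sigma_2^\ell$, and a finite set $I\subseteq\Sigma_2^*\times\Sigma_2^*$, say $|I|=m$, such that $L_1$ is a Boolean combination of the languages $K_{u,v}=\varphi^{-1}(u^{-1}L_2v^{-1})$ for $(u,v)\in I$. Set $k=\max\{|u|+|v|\mid (u,v)\in I\}$. The plan is to fix $n$ and build an $n$-circuit for $L_1\cap\Sigma_1^n$ in three stages: one circuit for each $K_{u,v}\cap\Sigma_1^n$, obtained from an optimal circuit for $L_2$ at length $\ell n+|u|+|v|\le \ell n+k$; then negation handling; then a constant-size Boolean combination on top. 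The constant $d$ absorbs the combining gates, and the sum over $i\in[0,k]$ trivially dominates the at most $m$ terms $\opc(L_2)(\ell n+|u|+|v|)$. I will take $m$ as $2|I|$ to pay for negations.

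\textbf{Step 1: simulating $K_{u,v}$.} Let $N=\ell n+|u|+|v|$ and let $C$ be a minimum $N$-circuit for $L_2\cap\Sigma_2^N$. Then $w\in K_{u,v}\cap\Sigma_1^n$ iff $u\varphi(w)v\in L_2$. Input gates of $C$ are labelled $(j,\Gamma)$ with $\Gamma\subseteq\Sigma_2$. If $j<|u|$ or $j\ge |u|+\ell n$, the letter at position $j$ is fixed by $u$ or $v$, so the gate is a constant. Otherwise $j=|u|+\ell p+r$ with $r<\ell$, and the letter there is $(\varphi(w_p))_r$; this is in $\Gamma$ iff $w_p\in\Gamma'$ where $\Gamma'=\{a\in\Sigma_1\mid(\varphi(a))_r\in\Gamma\}$. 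So I relabel the gate by $(p,\Gamma')$. This is exactly why the subset encoding is convenient: the substitution costs no gates. Constants are realised as input gates $(0,\Sigma_1)$ (true) and $(0,\emptyset)$ (false); since these are input gates, they do not count toward size. The resulting $n$-circuit has size $\opc(L_2)(N)$.

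\textbf{Step 2: negations.} The model has only $\wedge,\vee$ gates, so complements $\overline{K_{u,v}}$ must be obtained by pushing negations down via De Morgan, using a dual circuit. Swap $\wedge\leftrightarrow\vee$ and replace each input label $(p,\Gamma')$ by $(p,\Sigma_1\setminus\Gamma')$, including the constants. This gives a circuit of the same size for the complement. Writing $L_1$ in disjunctive normal form over the $K_{u,v}$ with literals, we need at most $2m'$ circuits, where $m'=|I|$, each of size at most $\opc(L_2)(\ell n+i)$ for some $i\le k$, plus a constant number $d$ of $\wedge/\vee$ gates for the DNF (at least one, for the output gate). Gates can be shared or duplicated freely, so the total size is at most $d+2m'\sum_{i=0}^k\opc(L_2)(\ell n+i)$.

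\textbf{Main obstacle.} The main obstacle is making Steps 1 and 2 formally correct rather than conceptually hard. Three cases need care: a degenerate circuit whose output gate collapses to a constant, which is fine since we keep the $\wedge/\vee$ structure and only relabel inputs; the requirement that the output gate be a non-input gate, handled by $d\ge1$; and checking that the dual construction really computes the complement. The last I would verify by induction on $\prec$. Everything else is bookkeeping.
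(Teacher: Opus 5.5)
Your proposal is correct and is essentially the paper's proof. Both relabel each input gate $(j,\Gamma)$ to $(p,\{a\in\Sigma_1\mid(\varphi(a))_r\in\Gamma\})$, realise the constants by $(0,\Sigma_1)$ and $(0,\emptyset)$, put a constant-size normal form on top, and take $k=\max\{|u|+|v|\mid(u,v)\in I\}$ and $m=2|I|$ to pay for negated copies; the only differences are that you use DNF instead of the paper's CNF, and you spell out the De Morgan dual circuit for complements, which the paper only alludes to as ``the negated circuit''.
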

\begin{proof}
Assume $L_1\reduce L_2$ for languages $L_1\subseteq \Sigma_1^+$ and 
	$L_2\subseteq \Sigma_2^+$.
Recall that by definition there exists a length-multiplying
	morphism  $\varphi:\Sigma_1^+\to\Sigma_2^+$ such that $\varphi(
	\Sigma_1) \subseteq \Sigma_2^\ell$ for some $\ell\in\bbN_{>0}$
and a finite set $J\subseteq\Sigma_2^*\times\Sigma_2^*$ such that 
	$L_1$ is a finite Boolean combination $\beta$ of languages from
	\begin{align}
		\{\varphi^{-1}(u^{-1}L_2v^{-1})\mid (u,v) \in J\} \label{E CC1}.
	\end{align}

	Now
	$\beta$ may be assumed to be an intersection of unions of
	languages of the form~(\ref{E CC1}) or complements thereof.
	To prove the lemma it suffices to show that
	for every circuit family $\calC'=(C_n')_{n\in\bbN_{>0}}$ computing
	$L_2$ one can construct a circuit family 
	$\calC=(C_n)_{n\in\bbN_{>0}}$ computing
	$L_1$ such that (\ref{E CC}) is satisfied. 
	By virtue of the previous 
	observation (\ref{E CC1}),
	for each $n\in\bbN_{>0}$, it suffices for the circuit $C_n$ to 
	verify a Boolean condition on the membership of the 
	words $\{u\varphi(w)v \mid (u,v)\in J\}$ in $L_2$.

    Let $u=u_0\ldots u_{s-1}$ and $v=v_0\ldots v_{t-1}$, for $s,t\in\bbN$.
	A circuit $C_n$ that takes an input $w\in \Sigma_1^n$  and verifies the membership of the word $u\varphi(w)v$ in $L_2$ is constructed as follows: We take the circuit $C'_{\ell n+s+t}$ and fix its first $s$ and last $t$ inputs to be the words $u$ and $v$ respectively. I.e., 
	each wire from an input gate 
	$(p,\Gamma_p)$, where $p\in[0,s-1]$
	to some AND/OR-gate $g$ is redirected
	 from an input that is always true (say from $(0,\Sigma_1)$) to $g$ if 
	 $u_p\in\Gamma_p$ 
	and from an input that is always false (say $(0,\emptyset)$) to $g$ 
	if $u_p\not\in\Gamma_p$. 
	Likewise, 
each wire from an input gate 
	$(p,\Gamma_p)$ where $p\in[\ell n+s,\ell n+s+t-1]$
	to a gate $g$ is redirected
	from $(0,\Sigma_1)$ to $g$ if $v_{p-(\ell n+s)}\in\Gamma_p$ 
	and from $(0,\emptyset)$ to $g$ if $v_{p-(\ell n+s)}\not\in\Gamma_p$.

    Finally for all the other inputs, we apply the morphism $\varphi$. I.e., each wire from an input gate $(p,\Gamma_p)$
	where $p\in[s,\ell n+s-1]$ to some AND/OR-gate $g$ is
	redirected from 
	$(p',\Gamma_p') \in [0,n-1]\times 2^{\Sigma_1}$ to $g$ where $p'$ and $\Gamma_p'$ are defined as follows:
    \begin{align}
	    p'&= \left\lfloor \frac{p-s}{\ell}\right\rfloor\\
        \Gamma_p' &=\{ a \in \Sigma_1 \mid (\varphi(a))_{(p-s) \bmod \ell} \in \Gamma_p\}
	\end{align}
    We tacitly assume $J\not=\emptyset$.
	The conjunctive normal form underlying
	$\beta$ has size $d\in O(1)$. For 
	$k=\max\{|u|+|v|:(u,v)\in J\}$
	and $m=2\cdot|J|$ (considering both the circuit as well as the negated circuit)
	we obtain the estimation
	$|C_n|\leq d+m\cdot\sum_{i=0}^k|C_{\ell n+i}'|$, 
	thus showing (\ref{E CC}).
\end{proof}

Let $f:\bbN_{>0}\to\bbR_{\geq 0}$ be a function. 
We call $f$ {\em non-decreasing} if 
$f(n)\leq f(n+1)$ for all $n\in\bbN_{>0}$.
We call $f$ {\em eventually positive} if $f(n)>0$ for all but finitely 
many $n\in\bbN_{>0}$.

\begin{definition}\label{D mild}
	A function $f:\bbN_{>0}\to\bbR_{\geq 0}$ is {\em mild} if 
	$f$ is non-decreasing, eventually positive, and satisfies 
	$f(n)\in\Theta(f(\lceil\varepsilon n\rceil))$ for every $\varepsilon\in\bbR_{>0}$.
\end{definition}

\begin{example}
	The eventually positive constant functions are mild
	and every non-decreasing and eventually positive polynomial is mild.
	Moreover, $n\mapsto\log^k n$ is mild for all $k\in\N_{>0}$.
 The function $n\mapsto 2^n$ is not mild. 
\end{example}

\begin{samepage}
	\begin{theorem}\label{T asymptotics}
	Let $f:\bbN_{>0}\to\bbR_{\geq 0}$ be a mild function.
	Let $L_1$ and $L_2$ be languages such that $L_1\leq L_2$.
	\begin{enumerate}
		\item If $\opc(L_2)\in O(f)$, then $\opc(L_1)\in O(f)$.
		\item If $\opc(L_1)\in\Omega(f)$, then $\opc(L_2)\in \Omega(f)$.
	\end{enumerate}
\end{theorem}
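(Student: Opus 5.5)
Both parts will follow from Lemma~\ref{L sum}, which gives constants $k\in\bbN$ and $d,\ell,m\in\bbN_{>0}$ with
\[
\opc(L_1)(n)\leq d+m\sum_{i=0}^{k}\opc(L_2)(\ell n+i)
\]
for all $n$. The plan is to turn this pointwise inequality into asymptotic statements using the mildness of $f$.

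For part~(1), I assume $\opc(L_2)(N)\leq c f(N)$ for all $N\geq N_0$. For $n\geq N_0$, each term with $i\in[0,k]$ satisfies $\ell n+i\leq(\ell+1)n$ once $n\geq k$. Since $f$ is non-decreasing, each term is then at most $c f((\ell+1)n)$. Applying mildness with $\varepsilon=1/(\ell+1)$ to the argument $N=(\ell+1)n$ gives $\lceil\varepsilon N\rceil=n$, so $f((\ell+1)n)\leq c' f(n)$ for large $n$. Hence $\opc(L_1)(n)\leq d+m(k+1)cc' f(n)$. Because $f$ is non-decreasing and eventually positive, it is eventually bounded below by a positive constant, so $d$ is absorbed and $\opc(L_1)\in O(f)$.

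For part~(2), I argue by contraposition via the remark $\Omega=\neg o$. Suppose $\opc(L_2)\in o(f)$; I will show $\opc(L_1)\in o(f)$. Fix $\delta>0$. For large $N$ we have $\opc(L_2)(N)<\delta f(N)$, so for large $n$ the sum is less than $\delta(k+1)f((\ell+1)n)$. This is at most $\delta(k+1)c'f(n)$ by the same mildness estimate, where $c'$ does not depend on $\delta$. Thus $\opc(L_1)(n)<d+\delta(k+1)mc' f(n)$.

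The main obstacle is the additive constant $d$. Absorbing it needs $f\to\infty$, which fails for bounded $f$ such as constants. I handle this by cases. If $f$ is unbounded, then, being non-decreasing, $f(n)\to\infty$, so $d<\delta f(n)$ eventually and $\opc(L_1)\in o(f)$. If $f$ is bounded, it is eventually $\Theta(1)$, and $\opc(L_2)\in o(1)$ is impossible because circuit sizes are at least $1$. So the hypothesis $\opc(L_2)\in o(f)$ never holds, and then $\opc(L_2)\in\Omega(f)$ is immediate, which proves~(2) directly in this case.

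One further check is that mildness is applied correctly. It gives $f(N)\in\Theta(f(\lceil\varepsilon N\rceil))$ as $N\to\infty$, and I only evaluate it along $N=(\ell+1)n$, where $\lceil\varepsilon N\rceil=n$ exactly, so no rounding issue arises.
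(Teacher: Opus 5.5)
Your proof is correct. Part~(1) is essentially the paper's argument. The paper applies mildness with $\varepsilon=\ell+1$ directly to bound $f(\lceil(\ell+1)n\rceil)\leq c'f(n)$, whereas you apply it with $\varepsilon=1/(\ell+1)$ along $N=(\ell+1)n$; both give the same estimate.

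For part~(2) you take a genuinely different route. The paper argues directly, in three steps. First, an averaging/pigeonhole step: for each $n$ in the infinite set where $\opc(L_1)(n)\geq c_1 f(n)$, it finds a length $\overline{n}\in[\ell n,\ell n+k]$ with $\opc(L_2)(\overline{n})\geq\frac{1}{(k+1)m}(\opc(L_1)(n)-d)$. Second, a further pigeonhole step fixes a common offset $i=\overline{n}-\ell n$ on an infinite subset. Third, mildness with $\varepsilon=1/(2\ell)$ transfers $f((\overline{n}-i)/\ell)$ to $f(\overline{n})$.

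You instead use $\Omega=\neg o$ and show that the estimate from part~(1) also preserves $o(f)$. The key point, which you correctly isolate, is that the mildness constant $c'$ does not depend on $\delta$.

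Your version is shorter and makes (2) essentially a corollary of the computation in (1), avoiding the bookkeeping with infinite subsets. The paper's direct argument instead exhibits explicit witnesses: infinitely many lengths of the form $\ell n+i$, with one fixed offset $i$, at which $\opc(L_2)$ is at least $c_3 f$. The theorem does not need this extra information. Both arguments need the same split on whether $f$ is bounded, in order to absorb the additive constant $d$.
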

\end{samepage}
\begin{proof}
	Let $f:\bbN_{>0}\to\bbR_{\geq 0}$ be a mild function and let $L_1\reduce L_2$.
	By Lemma~\ref{L sum} there exist constants $k\in\bbN$ and $d,\ell,m\in\N_{>0}$
	such that
	\begin{align}
		\opc(L_1)(n)\leq d+m\cdot\sum_{i=0}^{k}
		\opc(L_2)(\ell n+i)\label{E CCCC}
	\end{align}
	for all $n\in\bbN_{>0}$.

	\noindent (1)\quad
	Assume $\opc(L_2)\in O(f)$.
	For $\varepsilon=\ell+1$  
	for all but finitely many $n\in\bbN_{>0}$
	we have:
	\begin{align}
		\forall i\in[0,k]:\quad \ell n+i \leq \lceil\varepsilon n\rceil\label{E eps upper}
	\end{align}
Hence there exist constants $c_1,c_2,c_3\in\bbR_{>0}$ such that for all
but finitely many $n\in\bbN_{>0}$ we have:
\begin{align}
	\opc(L_1)(n)&\leq d+m\cdot\sum_{i=0}^k\opc(L_2)(\ell n+i) & \text{(By (\ref{E CCCC}))}\\
	&\leq d+m\cdot\sum_{i=0}^k c_1\cdot f(\ell n+i)&\text{(Since $\opc(L_2)\in O(f)$)}\\
	&\leq d+m\cdot\sum_{i=0}^k c_1\cdot f(\lceil\varepsilon n\rceil)&\text{(By (\ref{E eps upper})
	\& since $f$ is non-decreasing)}\\
	&=d+c_1\cdot (k+1)\cdot m\cdot f(\lceil\varepsilon n\rceil)\\
	&\leq d+c_2\cdot f(n)&\text{(Since $f\in\Theta(f(\lceil\varepsilon n\rceil))$)}\\
	&\leq c_3\cdot f(n)&\text{(Since $f$ is eventually positive)}
\end{align}
	Thus, $\opc(L_1)\in O(f)$.\\

	\noindent (2)\quad
	Assume $\opc(L_1)\in \Omega(f)$.

	In case $f$ is bounded, i.e., there exists $b\in\bbN_{>0}$ such that
	$f(n)\leq b$ for all $n\in\bbN_{>0}$, we have that 
	$\opc(L_2)\in \Omega(f)$ since every $n$-circuit has size at least
	one.
	It remains to consider the case when 
	$f$ is not bounded.

	Since $\opc(L_1)\in \Omega(f)$ there exists a constant
	$c_1\in\bbR_{>0}$ and an infinite subset $N_1\subseteq\bbN_{>0}$ such that
	\begin{align}
		c_1\cdot f(n)\leq\opc(L_1)(n)\label{E c1}
	\end{align}
	for all $n\in N_1$.
	By applying the pigeonhole principle to (\ref{E CCCC}) 
	for each $n\in\bbN_{>0}$ there exists
	$\overline{n}\in\{\ell n+i\mid i\in[0,k]\}$ such
	that
	\begin{align}
		\opc(L_2)(\overline{n})\geq \frac{1}{(k+1)m}\cdot 
		(\opc(L_1)(n)-d).\label{E sizeL2}
	\end{align}
	Note that $\overline{N_1}=\{\overline{n}\mid n\in N_1\}$ is an infinite set.
	Again by the pigeonhole principle, there exists 
	some $i\in[0,k]$ and an infinite subset $N_2\subseteq \overline{N_1}$
	such that all elements $\overline{n}\in N_2$ have the same difference 
	$\overline{n}-\ell n=i$. 
	Hence, $\frac{\overline{n}-i}{\ell}$ is an element of $N_1$
	for all $\overline{n}\in N_2$.
	For $\varepsilon=\frac{1}{2\ell}$ we have 
	\begin{align}
		\lceil\varepsilon \overline{n}\rceil \leq \frac{\overline{n}-i}{\ell}
		\label{E varepsilon}
	\end{align}
	for all but finitely many elements $\overline{n}$ of the infinite set $N_2$.
	Finally, there exist constants $c_2,c_3\in\bbR_{>0}$ such that for
	all but finitely many $\overline{n}\in N_2$ the following holds:
	\begin{align}
		\opc(L_2)(\overline{n})&\geq \frac{1}{(k+1)m}\left(\opc(L_1)
		\left(\frac{\overline{n}-i}{\ell}\right)-d\right)
		&\text{(By (\ref{E sizeL2}))}\\
		&\geq\frac{1}{(k+1)m}\left(c_1\cdot f\left(\frac{\overline{n}-i}{\ell}\right)-d\right)
		&\text{(By (\ref{E c1}))}\\
		&\geq c_2\cdot f\left(\frac{\overline{n}-i}{\ell}\right)
		&\text{(As $f$ is not bounded \& non-decreasing)}\\
&\geq c_2\cdot f\left(\lceil\varepsilon \overline{n}\rceil\right)
		&\text{(By (\ref{E varepsilon}))}\\
		&\geq c_3\cdot f(\overline{n}) 
		&\text{(Since $f(n)\in\Theta(f(\lceil\varepsilon n\rceil))$)}
	\end{align}
	Thus, $\opc(L_2)\in\Omega(f)$, as required.
\end{proof}

\subsection{Reductions and Stamps}\label{S Reductions Stamps}

\begin{lemma}\label{L reduce}
	Let $L$ be a regular language.
	Then for each $x\in S(L)$ we have $\eta_L^{-1}(x)\reduce L$.
\end{lemma}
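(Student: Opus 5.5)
We will derive the claim directly from the canonical property~(\ref{E syn dist}) of syntactic semigroups together with Definition~\ref{D Rational}, choosing the length-multiplying morphism to be the identity. Fix a regular language $L\subseteq\Sigma^+$ and an element $x\in S(L)$. By~(\ref{E syn dist}) there are finite index sets $I,J$ and pairs of words $(u_i,v_i)$, $(u_j,v_j)$ in $\Sigma^*\times\Sigma^*$ such that
\begin{align}
	\eta_L^{-1}(x)=\bigcap_{i\in I} u_{i}^{-1}Lv_{i}^{-1}\setminus
	\bigcup_{j\in J} u_{j}^{-1}Lv_{j}^{-1}.
\end{align}
The right-hand side is a finite Boolean combination of two-sided quotients of $L$.

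Next, take $\varphi:\Sigma^+\to\Sigma^+$ to be the identity morphism, i.e.\ $\varphi(a)=a$ for every $a\in\Sigma$. It is length-multiplying with $\ell=1$, and for each pair $(u,v)$ we have $\varphi^{-1}(u^{-1}Lv^{-1})=u^{-1}Lv^{-1}$. Consequently $\eta_L^{-1}(x)$ is a finite Boolean combination over the finite set $\{\varphi^{-1}(u^{-1}Lv^{-1})\mid (u,v)\in\{(u_i,v_i)\}_{i\in I}\cup\{(u_j,v_j)\}_{j\in J}\}$. This is exactly the second canonical instance of a rational truth-table reduction mentioned after Definition~\ref{D Rational}, so $\eta_L^{-1}(x)\reduce L$.

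The only potential subtlety concerns the degenerate boundary cases: if $I=\emptyset$, the empty intersection must be read as $\Sigma^+$. This is harmless, since $\Sigma^+=u^{-1}Lv^{-1}\cup\overline{u^{-1}Lv^{-1}}$ for any single pair $(u,v)$, and is therefore still a Boolean combination of the allowed languages. One should also check that the quotients here are positive quotients, so that everything lives inside $\Sigma^+$; this matches the convention used in~(\ref{E syn dist}) and in Proposition~\ref{P commute}. Beyond these two checks there is no real obstacle, as the lemma is essentially a restatement of~(\ref{E syn dist}) in the language of Definition~\ref{D Rational}.
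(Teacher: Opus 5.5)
Your proposal is correct and matches the paper's argument, which simply observes that the lemma follows immediately from~(\ref{E syn dist}). Taking $\varphi$ to be the identity morphism is the right choice, and your remark that an empty intersection over $I$ is still expressible (as $u^{-1}Lv^{-1}\cup\overline{u^{-1}Lv^{-1}}$) is a careful touch the paper leaves implicit.
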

\begin{proof}
Follows immediately from~(\ref{E syn dist}).
\end{proof}

\begin{corollary}\label{C U reduce}
Let $L$ be a regular language.
	Then for all $P\subseteq S(L)$ we have
	$\eta_L^{-1}(P)\reduce L$.
\end{corollary}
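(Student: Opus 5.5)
Since $S(L)$ is finite, $P$ is a finite set, and $\eta_L^{-1}(P)=\bigcup_{x\in P}\eta_L^{-1}(x)$ is a finite union. By Lemma~\ref{L reduce}, each $\eta_L^{-1}(x)$ is a finite Boolean combination of languages of the form $u^{-1}Lv^{-1}$ with $u,v\in\Sigma^*$. More concretely, by (\ref{E syn dist}) it is such a combination in which the length-multiplying morphism is the identity $\mathrm{id}:\Sigma^+\to\Sigma^+$ (with $\ell=1$). The plan is therefore to take the finite union over $x\in P$ of these Boolean expressions. The result is again a finite Boolean combination over the finite set of languages $\mathrm{id}^{-1}(u^{-1}Lv^{-1})$, where $(u,v)$ ranges over the union of the finitely many pairs used for the individual $x\in P$. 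Because a single morphism, namely $\mathrm{id}$, serves for every $x$, this directly gives a witness for $\eta_L^{-1}(P)\reduce L$ in the sense of Definition~\ref{D Rational}.

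One edge case needs separate treatment: $P=\emptyset$, where $\eta_L^{-1}(P)=\emptyset$. This is still a finite Boolean combination, for example $K\cap\overline{K}$ with $K=L=\mathrm{id}^{-1}(\varepsilon^{-1}L\varepsilon^{-1})$, so the empty union causes no problem.

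No step here is a real obstacle. The one point to check is that the reductions for the different $x\in P$ all use the same morphism, so that their Boolean combinations can be merged into a single one. Since (\ref{E syn dist}) produces only two-sided quotients of $L$, the identity morphism serves for every $x$ and the merge is immediate. Had the reductions used different morphisms, we would instead need a closure-under-union argument in the style of the proof of Proposition~\ref{P reduce transitivity}; that is not required here.
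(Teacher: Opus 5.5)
Your proof is correct and follows the paper's intended route. The paper states the corollary without a separate proof, right after Lemma~\ref{L reduce}, and it follows exactly as you describe: take the finite union over $x\in P$ of the Boolean combinations of two-sided quotients given by (\ref{E syn dist}), all with respect to the identity morphism. You also correctly handle the two points needing care, namely using one common morphism so the combinations can be merged, and the case $P=\emptyset$.
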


\begin{samepage}
\begin{definition}
	For each $s\in\bbN_{>0}$ we define the mutually-inverse 
	morphisms $\block_s: \Sigma^{s\bbN_{>0}}\to (\Sigma^s)^+$ and 
	$\flat_s: (\Sigma^s)^+ \to \Sigma^{s\bbN_{>0}}$ as
follows: 
	for each $u=u_0\ldots u_{ns-1}\in\Sigma^{ns}$ we define
	\begin{align}
		\block_s(u)=\overline{v_0}\ldots \overline{v_{n-1}}
	\end{align}
	where $\overline{v_i}=u_{is}\ldots u_{(i+1)s-1}$ for all $i\in[0,n-1]$.
	Conversely, for each $v=\overline{v_0}\ldots \overline{v_{n-1}}\in(\Sigma^s)^{n}$
	where $u_{is+j}=(\overline{v_i})_j$ for each $i\in[0,n-1]$ and each $j\in[0,s-1]$
	we define
	\begin{align}
	\flat_s(v)= u_{0}\ldots u_{ns-1}.
	\end{align}
\end{definition}
\end{samepage}

\begin{proposition}\label{P stable reduce}
Let $L\subseteq\Sigma^+$ be a regular language with the
syntactic morphism $\eta_L:\Sigma^+\rightarrow S(L)$
and let $s$ be the stability of $\eta_L$.
Then $K_x\reduce L$ for each $x\in \stab(\eta_L)$,
where $K_x\subseteq(\Sigma^s)^+$
	is the language $K_x=(\eta_L\circ\flat_s)^{-1}(x)$.
\end{proposition}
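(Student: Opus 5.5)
The plan is to exhibit the reduction $K_x\reduce L$ directly. The length-multiplying morphism will be $\flat_s$ itself, viewed as a morphism $(\Sigma^s)^+\to\Sigma^+$: each letter of the alphabet $\Sigma^s$, which is a word of length $s$ over $\Sigma$, is sent to that word. So $\flat_s$ is length-multiplying with $\ell=s$. Its image is $\Sigma^{s\bbN_{>0}}$, which matches the domain convention in the definition of $\block_s$ and $\flat_s$.

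The first step is to express $\eta_L^{-1}(x)$ as a Boolean combination of two-sided quotients of $L$. This is exactly (\ref{E syn dist}), already used in Lemma~\ref{L reduce}. There are finite index sets $I,J$ and pairs $(u_i,v_i)$ with
\[
\eta_L^{-1}(x)=\bigcap_{i\in I}u_i^{-1}Lv_i^{-1}\setminus\bigcup_{j\in J}u_j^{-1}Lv_j^{-1}.
\]

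The second step is to pull back along $\flat_s$. By Point (1) of Proposition~\ref{P commute}, inverse morphisms commute with Boolean operations, so
\[
K_x=\flat_s^{-1}(\eta_L^{-1}(x))=\bigcap_{i\in I}\flat_s^{-1}(u_i^{-1}Lv_i^{-1})\setminus\bigcup_{j\in J}\flat_s^{-1}(u_j^{-1}Lv_j^{-1}).
\]
This is a finite Boolean combination of languages of the form $\flat_s^{-1}(u^{-1}Lv^{-1})$. That is precisely the shape Definition~\ref{D Rational} requires, so $K_x\reduce L$.

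Alternatively, one can note that $K_x=\flat_s^{-1}(\eta_L^{-1}(x))\reduce\eta_L^{-1}(x)$ via the identity Boolean function, and then combine this with Lemma~\ref{L reduce} and transitivity (Proposition~\ref{P reduce transitivity}).

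The only mild subtlety is the role of $x\in\stab(\eta_L)$. The argument above works for any $x\in S(L)$. Restricting to the stable semigroup serves only to guarantee that $K_x$ is the right object later, in particular nonempty, since every $x\in\stab(\eta_L)=\eta_L(\Sigma^s)$ is the image of some word of length $s$. No obstacle arises here. The step needing care is just checking that $\flat_s$ qualifies as a morphism $(\Sigma^s)^+\to\Sigma^+$ with constant image length $s$.
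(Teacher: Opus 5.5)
Your proof is correct, and your ``alternative'' route is exactly the paper's argument: $K_x=\flat_s^{-1}(\eta_L^{-1}(x))\reduce\eta_L^{-1}(x)\reduce L$, using Lemma~\ref{L reduce} and transitivity (Proposition~\ref{P reduce transitivity}). Your main argument simply inlines that transitivity step for this special case, where no quotients have to be pushed through $\flat_s$, so only Point (1) of Proposition~\ref{P commute} is needed.
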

\begin{proof}
	Let $x\in\stab(\eta_L)$.
	Since $K_x=(\eta_L\circ\flat_s)^{-1}(x)=\flat_s^{-1}(\eta_L^{-1}(x))$
	and $\flat_s$ is length-multiplying
	we obviously have $K_x\leq \eta_L^{-1}(x)$.
	By Lemma~\ref{L reduce} we have $\eta_L^{-1}(x)\leq L$.
	From 
	Proposition~\ref{P reduce transitivity} we
	deduce $K_x\reduce L$.
\end{proof}

\subsection{Circuits \& Length-Multiplying Varieties of Languages}\label{S Circuit LM}

Let $\calV$ be a map that associates each finite alphabet $\Sigma$ with a family of regular languages $\calV(\Sigma^+)$ over the alphabet $\Sigma$.
The map $\calV$ is a \emph{variety of languages} if the following conditions are met.
\begin{enumerate}
\item For each alphabet $\Sigma$, the family $\calV(\Sigma^+)$ is closed under Boolean operations.
\item For each alphabet $\Sigma$, the family $\calV(\Sigma^+)$ is closed under 
	positive left and right quotients by letters, i.e., 
		if $L\in\calV(\Sigma^+)$ and $a\in\Sigma$,
		then $a^{-1}L,La^{-1}\in\calV(\Sigma^+)$.
\item For each morphism $\varphi: \Gamma^+ \rightarrow 
\Sigma^+$ and $L\in \calV(\Sigma^+)$, then $\varphi^{-1}(L) \in \calV(\Gamma^+)$.
\end{enumerate}
A \emph{length-multiplying variety of languages}, \emph{lm-variety} for short, $\calV$ 
is the generalization, where $\varphi$ is restricted to length-multiplying morphisms. Notice that every variety is also an lm-variety, but the converse is not necessarily true.
It is readily observed that $\REG$ (\ref{REG}) is the largest variety and lm-variety of languages. 

The below proposition is folklore, its proof uses ideas from~\cite{Straubing02}.
In fact, it follows immediately from Point (1) of Theorem~\ref{T asymptotics}.

\begin{samepage}
\begin{theorem} \label{T mild variety}
	If $f,g:\bbN_{>0}\to\bbR_{\geq 0}$ are mild functions, then 
	\begin{enumerate}
		\item $\SIZE(O(f))\cap\REG$ and 
		\item $\SIZEDEPTH(O(f),O(g))\cap\REG$
	\end{enumerate}
are $\lm$-varieties of languages. 
\end{theorem}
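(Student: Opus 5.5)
The plan is to check the three closure conditions of an lm-variety directly, showing in each case that the new language rationally truth-table reduces to the old one. Point (1) of Theorem~\ref{T asymptotics} then transfers the $O(f)$ size bound. Since all languages involved are regular, and $\REG$ is itself an lm-variety, only the circuit bound needs proof.

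\emph{Part (1).} Let $L\in\SIZE(O(f))\cap\REG(\Sigma^+)$, so $\opc(L)\in O(f)$. I will use three observations.
\begin{itemize}
\item \emph{Complement.} $\overline{L}=\overline{\mathrm{id}^{-1}(\varepsilon^{-1}L\varepsilon^{-1})}$, so $\overline{L}\reduce L$.
\item \emph{Quotients.} $a^{-1}L=\mathrm{id}^{-1}(a^{-1}L\varepsilon^{-1})$, so $a^{-1}L\reduce L$, and symmetrically $La^{-1}\reduce L$.
\item \emph{Inverse morphisms.} For a length-multiplying $\varphi:\Gamma^+\to\Sigma^+$ we have $\varphi^{-1}(L)\reduce L$ with $u=v=\varepsilon$.
\end{itemize}
In each of these cases Theorem~\ref{T asymptotics}(1) gives the $O(f)$ bound.

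For intersection $L_1\cap L_2$, the definition of $\reduce$ refers to a single target language, so I cannot reduce directly to both. Instead I build the circuit for $L_1\cap L_2$ at length $n$ from the two circuits, joining their output gates by one $\wedge$-gate. This uses at most $\opc(L_1)(n)+\opc(L_2)(n)+1$ gates, which is in $O(f)$ because $f$ is eventually positive, so the additive constant is absorbed. Union follows from intersection together with complement.

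\emph{Part (2).} The same argument works for size and depth simultaneously. The place needing care is that Theorem~\ref{T asymptotics} concerns size only, so I would redo the construction of Lemma~\ref{L sum} and track depth. That construction only rewires input gates, fixing some to constants or relabelling them via $\varphi$; this does not increase depth. Combining the finitely many circuits $C'_{\ell n+i}$, and their negations, by the fixed formula $\beta$ adds only $O(1)$ to the depth. Negations are pushed to the inputs by De Morgan, which preserves depth and at most doubles size, and complementing an input label $(p,\Gamma)$ to $(p,\Sigma\setminus\Gamma)$ is free.

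The resulting depth is therefore at most $\max_{i\le k}\depth(C'_{\ell n+i})+O(1)$. This is in $O(g)$ by the same chain of inequalities as in Theorem~\ref{T asymptotics}(1), using that $g$ is non-decreasing, mild and eventually positive. In the simple cases (identity morphism, or $\beta$ trivial) the rewiring alone already suffices.

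I expect the main obstacle to be exactly this depth bookkeeping in the proof of Lemma~\ref{L sum}, in particular checking that negation can be handled without blowing up size or depth.
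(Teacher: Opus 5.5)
Your proposal is correct and follows essentially the same route as the paper: closure under complement, letter quotients and inverse length-multiplying morphisms comes from treating each as a rational truth-table reduction and applying Point (1) of Theorem~\ref{T asymptotics}, and the one remaining binary Boolean operation (union in the paper, intersection in yours) is handled by joining two circuits with a single extra gate. Your depth bookkeeping through the construction of Lemma~\ref{L sum} is a faithful spelling-out of the ``slight adaption'' that the paper leaves implicit for Point (2): rewiring inputs costs no depth, dualising gives negation without extra depth, and $\beta$ adds only constant depth.
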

\end{samepage}
\begin{proof}
We only give the proof for Point (1), the proof for Point (2) is a slight adaption.
Let $\calV$ denote the family of regular languages $\SIZE(O(f))\cap\REG$. 
	We need to show that $\calV$ is closed under Boolean operations, positive quotients by letters, and inverse lm-morphisms.  
	The closure under positive quotients by letters, inverse lm-morphisms
	and negation follows from Point (1) of Theorem~\ref{T asymptotics}.
	It remains to prove closure under union.
Let $\Sigma$ be a finite alphabet. 
Assume that $L, L' \in \calV(\Sigma^+)$. Then there exist
	circuit families $\calC=\left(C_n\right)_{n \in \bbN_{>0}}$ and $\calC'=\left(C'_n\right)_{n \in \bbN_{>0}}$ computing $L$ and $L'$, respectively,
such that $|\calC|,|\calC'| \in O(f)$.
It is clear that the language $L \cup L'$ can be computed by a circuit family 
	$\calC''$ satisfying
	$|\calC''|(n)=|\calC|(n)+|\calC'|(n)+1$ for all $n\in\bbN_{>0}$.
	Since the function $f$ is mild, we have $|\calC''|\in O(f)$ as well 
	and therefore $L \cup L' \in \calV(\Sigma^+)$, as required.
\end{proof}

\begin{corollary} 
    The intersections of the classes $\SIZE(O(1)), \SIZE(O(\log^k n))$, and
	$ \SIZE(O(n^\varepsilon))$ with $\REG$ are lm-varieties for all 
	$k\in\bbN_{>0}$ and $\varepsilon\in \bbR_{> 0}$.
\end{corollary}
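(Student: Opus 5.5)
The corollary follows from Theorem~\ref{T mild variety}, Point (1), once we check that the functions $n\mapsto 1$, $n\mapsto\log^k n$ and $n\mapsto n^\varepsilon$ are mild in the sense of Definition~\ref{D mild}. The plan is to verify the three defining conditions for each function and then invoke the theorem. Technically $\log^k n$ vanishes at $n=1$, so the classes are understood via the $O(\cdot)$ notation applied to these real-valued functions; this is harmless since the definitions of $\SIZE(O(f))$ and of mildness allow $f:\bbN_{>0}\to\bbR_{\geq 0}$.

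For the constant function $1$, it is non-decreasing and positive, and $f(\lceil\delta n\rceil)=f(n)$ for every $\delta>0$.

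For $f(n)=n^\varepsilon$ with $\varepsilon>0$, monotonicity and positivity are clear. For $\delta>0$ we have $\delta n\le\lceil\delta n\rceil\le\delta n+1\le(\delta+1)n$. Hence
\[
\delta^\varepsilon f(n)\le f(\lceil\delta n\rceil)\le(\delta+1)^\varepsilon f(n),
\]
which gives $f\in\Theta(f(\lceil\delta n\rceil))$.

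For $f(n)=\log^k n$, monotonicity holds and $f$ is positive for $n\ge 2$. By the same sandwich, it suffices to show that $\log(cn)/\log n\to 1$ as $n\to\infty$ for each fixed $c>0$. This holds because $\log(cn)=\log n+\log c$. Therefore the ratio $f(\lceil\delta n\rceil)/f(n)$ is bounded above and below by positive constants for all sufficiently large $n$, which is exactly the $\Theta$ condition under the paper's "all but finitely many $n$" convention. The main, though mild, point of care is the case $\delta<1$, where $\lceil\delta n\rceil$ can be $1$ for small $n$ and then $\log$ vanishes; this affects only finitely many $n$.

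With mildness established in all three cases, Theorem~\ref{T mild variety}(1) yields that $\SIZE(O(f))\cap\REG$ is an lm-variety for each of these $f$.
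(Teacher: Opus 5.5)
Your proposal is correct and follows the paper's route: the corollary comes from Theorem~\ref{T mild variety}(1) once the three functions are known to be mild, which the paper asserts in the example after Definition~\ref{D mild}. You verify mildness explicitly, including the case $n^\varepsilon$ with non-integer $\varepsilon$, which that example (stated for polynomials) does not literally cover.
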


\subsection{Length-Multiplying Pseudovarieties of Stamps}\label{S Pseudo}

A \emph{pseudovariety of finite semigroups} is a class $\bfV$ 
of finite semigroups closed under finite direct 
products and 
division, i.e., if $S\in\bfV$ and $T$ divides $S$, then $T$ is in $\bfV$.  
Eilenberg's variety theorem states that pseudovarieties of  finite semigroups and varieties of languages are in one-to-one correspondence \cite{Eilenberg1976}. Pseudovarieties are denoted by boldface letters and the  corresponding varieties of languages are denoted by calligraphic letters. For example $\bfJ_{\bfone}$ denotes the pseudovariety of idempotent and commutative semigroups 
and $\calJ_{1}$ denotes the variety of idempotent and commutative languages.

\begin{figure}[ht]
\centering
\begin{subfigure}{0.48\textwidth}
\centering
\[\begin{tikzcd}[
    column sep=2em,
    row sep=1.5em
]
&&&S_1\\
&&&\\
&&&\\
\Sigma^+ \arrow[uuurrr,"\varphi_1"] 
\arrow[rr,"\varphi"]
\arrow[dddrrr,"\varphi_2"] 
&&\operatorname{Im}(
\varphi)\subseteq S_1\times S_2
\arrow[uuur,"\pi_1" '] 
\arrow[dddr,"\pi_2"] 
&\\
&&&\\
&&&\\
&&&S_2
\end{tikzcd}
\]
\caption{Diagonal direct product $\varphi$ of the stamps $\varphi_1$ and $\varphi_2$, where $\pi_1$ and $\pi_2$ denote the canonical projection morphisms.}
\label{fig:first}
\end{subfigure}
\hfill
\begin{subfigure}{0.48\textwidth}
    \centering
    \raisebox{1cm}{
    \begin{tikzpicture}[>=stealth]
  \node (A) at (0,3) {$\Sigma^+$};
  \node (B) at (4.1,3) {$\Gamma^+$};
  \node (C) at (0,0) {$S_1$};
  \node (D) at (3,0) {$\operatorname{Im}(\varphi_2 \circ \alpha)\subseteq
     S_2$};
     \node (E) at (4.1,0) {\phantom{S}};
  \draw[->] (A) -- node[midway, above] {$\alpha$} (B);
  \draw[->] (A) -- node[midway, left] {$\varphi_1$} (C);
  \draw[->] (B) -- node[midway, right] {$\varphi_2$} (E);
  \draw[->] (D) -- node[midway, below ] {$\beta$} (C);
\end{tikzpicture}}

    \caption{Stamp $\varphi_1$ lm-dividing the stamp $\varphi_2$, the morphism $\alpha$ is length-multiplying and $\beta$ is surjective.} 
    \label{fig:second}
\end{subfigure}
\caption{Product \& Division of Stamps}
\label{fig:both}
\end{figure}

A similar correspondence exists for lm-varieties of languages.   
The \emph{diagonal direct product} of the stamps $\varphi_1: \Sigma^+ \to S_1$ and $\varphi_2: \Sigma^+ \to S_2$ is the stamp 
$\varphi:w \mapsto (\varphi_1(w),\varphi_2(w))$ from $\Sigma^+$ onto the 
subsemigroup $\varphi(\Sigma^+) \subseteq S_1\times S_2$. 
The stamp $\varphi_1 : \Sigma^+ \rightarrow S_1$ {\em lm-divides} 
the stamp  $\varphi_2 : \Gamma^+ \rightarrow S_2$ 
if there exists a length-multiplying morphism 
$\alpha:\Sigma^+ \rightarrow \Gamma^+$ and a surjective morphism $\beta: \operatorname{Im}\,(\varphi_2 \circ \alpha) \rightarrow S_1$ 
such that $\varphi_1 = \beta \circ \varphi_2 \circ \alpha$. 
The operations are described in the \Cref{fig:both}.
A class of stamps $\bfV$ is an
{\em lm-pseudovariety of stamps} if it is closed under finite diagonal direct products and lm-divisions.

\begin{theorem}[Straubing \cite{Straubing02}]
Lm-pseudovarieties of stamps and lm-varieties of languages are in one-to-one correspondence.
\end{theorem}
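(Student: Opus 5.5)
The statement is Straubing's Eilenberg-type correspondence. I would follow the classical proof of Eilenberg's theorem, adapted to stamps and length-multiplying morphisms. The two maps are as follows. To an lm-pseudovariety of stamps $\bfV$ associate $\calV$, where $\calV(\Sigma^+)$ is the set of languages $L\subseteq\Sigma^+$ whose syntactic stamp $\eta_L$ lies in $\bfV$; equivalently, those recognized by some stamp in $\bfV$ with domain $\Sigma^+$. Conversely, to an lm-variety of languages $\calV$ associate $\bfV$, the class of stamps $\varphi:\Sigma^+\to S$ such that $\varphi^{-1}(x)\in\calV(\Sigma^+)$ for every $x\in S$. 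Three things must be proved: each map lands in the right kind of object, and the two maps are mutually inverse.

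\textbf{Step 1.} I show $\calV$ is an lm-variety. I first check that a language recognized by some $\varphi\in\bfV$ has $\eta_L\in\bfV$. The reason is that $\eta_L$ factors as $\beta\circ\varphi$ with $\beta$ surjective, which is an lm-division with $\alpha=\mathrm{id}$.
\begin{itemize}
\item Boolean operations: if $L_1,L_2$ are recognized by $\varphi_1,\varphi_2$, then every Boolean combination of them is recognized by the diagonal direct product.
\item Quotients: $a^{-1}L$ and $La^{-1}$ are recognized by the same stamp as $L$.
\item Inverse morphisms: for a length-multiplying $\alpha:\Gamma^+\to\Sigma^+$, the corestriction $\varphi\circ\alpha:\Gamma^+\to\operatorname{Im}(\varphi\circ\alpha)$ lm-divides $\varphi$ (take $\beta$ the identity), and it recognizes $\alpha^{-1}(L)$.
\end{itemize}

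\textbf{Step 2.} I show $\bfV$ is closed under diagonal direct products and lm-division.
\begin{itemize}
\item Products: preimages of a pair $(s_1,s_2)$ are intersections $\varphi_1^{-1}(s_1)\cap\varphi_2^{-1}(s_2)$, so they lie in $\calV$.
\item Division: if $\varphi_1=\beta\circ\varphi_2\circ\alpha$, then
$\varphi_1^{-1}(x)=\bigcup_{y\in\beta^{-1}(x)}\alpha^{-1}(\varphi_2^{-1}(y))$,
which is a finite union of inverse lm-images of languages in $\calV$.
\end{itemize}

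\textbf{Step 3 (mutual inverses).} Starting from $\bfV$, passing to $\calV$ and back, every $\varphi\in\bfV$ has all its fibers in $\calV$, since they are recognized by $\varphi$. Conversely, suppose a stamp $\varphi$ has all fibers with syntactic stamps in $\bfV$. Then $\varphi$ lm-divides the diagonal product of those syntactic stamps, because $\varphi(w)$ is determined by which fiber contains $w$. So $\varphi\in\bfV$.

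Starting from $\calV$, passing to $\bfV$ and back, I must show that $L\in\calV(\Sigma^+)$ implies $\eta_L^{-1}(x)\in\calV(\Sigma^+)$ for every $x$. This is the main obstacle. The definition of lm-variety only grants quotients by letters, so I would first iterate to obtain quotients $u^{-1}Lv^{-1}$ by arbitrary words. I would then use the representation (\ref{E syn dist}) of each syntactic class as a finite Boolean combination of two-sided quotients, i.e.\ Lemma~\ref{L reduce}. Here care is needed with positive quotients and empty words, but these are only bookkeeping issues.

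Finally, every stamp in $\bfV$ is recovered from the languages it recognizes, and every language in $\calV$ is recognized by its own syntactic stamp. Hence the two maps are inverse bijections.
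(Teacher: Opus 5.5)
The paper gives no proof of its own here: it quotes the theorem from Straubing. Your proposal is the classical Eilenberg-style argument that Straubing adapts, and it is correct. Its key step — every syntactic class $\eta_L^{-1}(x)$ is a finite Boolean combination of two-sided quotients of $L$, hence lies in $\calV(\Sigma^+)$ — is exactly the paper's Lemma~\ref{L reduce}. The one inclusion you only gesture at, that $\eta_L\in\bfV$ forces $L\in\calV(\Sigma^+)$, is immediate: $L=\bigcup_{x\in\eta_L(L)}\eta_L^{-1}(x)$ is a finite union of members of $\calV(\Sigma^+)$.
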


If $\bfV$ is a variety of semigroups, then the collection of all stamps whose stable semigroup is in $\bfV$ forms an  lm-pseudovariety of stamps, denoted by $\bfQ\bfV$ \cite{Straubing02}.

\subsection{Identities on Semigroups}
\label{sec:identity}

An \emph{identity} is an equation $t=t'$, where
$t$ and $t'$ are finite words built from a set of variables
using multiplication and $\omega$-power. 
For example $xy=yx$, $x = x^\omega$, and $(xy^\omega z)^\omega=xy^\omega$ are all identities. 
A semigroup $S$ {\em satisfies} the identity $t=t'$ if the two sides evaluate to the same element under every assignment of the variables to elements of $S$.
For instance, the first two identities are satisfied by $S$ if, and only if,
it is commutative and every element of $S$ is idempotent. By convention, we use variables $e,f,\dots$ as shorthand for idempotent elements.


\section{The Regular Languages of Constant Circuit Complexity}\label{S Characterization}

This section contains various characterizations of regular languages with constant circuit complexity. 
It is organized as follows.
We begin in \S\,\ref{sec:c_k_t} by introducing the modular threshold-counting congruence $\congC_k^t$ followed by its refinement $\approx_k^t$  (\S\,\ref{sec:approx_k}) that is used in the $O(1)$ characterization theorem. The pseudovariety $\QESL$ that characterizes constant circuit complexity is presented in \S\,\ref{sec:qesl}. 
The $O(1)$ characterization theorem is proved in \S\,\ref{sec:O(1)}. 
The two difficult directions of the proof are presented as separate subsections. 
In \S\,\ref{S Lower Bound const}, it is shown that if the syntactic stamp 
of a language is not in $\QESL$, then it has $\Omega(\log n)$ circuit complexity. 
Finally, in 
\S\,\ref{S Congruence const}, it is shown that if the syntactic stamp of a language 
is in $\QESL$, then it is a finite union of equivalence classes of the congruence $\approx_k$ for 
some $k\in \bbN_{>0}$.

\subsection{The Modular Threshold-Counting Congruence $\congC_k^t$}
\label{sec:c_k_t}

\begin{samepage}
\begin{definition}[Modular Projection $\pi_{i,k}$]
	Let $k\in\bbN_{>0}$. For a word $w=a_0\dots a_{n-1}\in \Sigma^*$ 
and $i\in\bbZ$ we define
\begin{align}
	\pi_{i,k}(w) = \prod_{j\in [0,n-1],\atop
	j\equiv_k i}a_j.
\end{align}
 \end{definition}
\end{samepage}

\begin{samepage}
\begin{definition}[$\cont_{i,k}$ and $\cont_{i,k}^{\bowtie\,t}$]
    Given a threshold $t\in\bbN_{>0}$, a comparison operation 
${\bowtie}\in\{=,<,>,\leq,\geq\}$,
and a word $w\in \Sigma^*$ let 
$\cont^{\bowtie\, t}(w)=\{a\in \Sigma:|w|_a\bowtie t\}$.
For each $k\in\bbN_{>0}$ and each $i\in\bbN$,  
	let  $\cont_{i,k}=\cont \mathbin{\circ} \pi_{i,k}$ and
more generally, $\cont^{\bowtie\, t}_{i,k} = \cont^{\bowtie\, t} \mathbin{\circ} \pi_{i,k}$.
\end{definition}
\end{samepage}

\begin{definition}[The Modular Threshold-Counting Congruence $\scrC_k^t$]
Let $k,t\in \bbN_{>0}$.  The relation $\congC_k^t$ on $\Sigma^+$ is defined as follows: $u \congC_k^t v$ if
\begin{align}
  |u|\equiv_k |v| \text{ and }   \left |\pi_{i,k}(u)\right |_a \equiv^t \left |\pi_{i,k}(v)\right |_a 
\end{align}
	for all $a\in \Sigma$ and $i\in[0,k-1]$.
\end{definition}

Clearly, $\congC_k^t$ is an equivalence relation. 
Note that every word of length $<k$ forms a singleton equivalence class.
Also, $u \congC_k^1 v$ if $|u|\equiv_k|v|$ and $\cont_{i,k}(u)=\cont_{i,k}(v)$ 
for all $i\in\bbN$.

\begin{lemmarep}
Let $k,t \in \bbN_{>0}$. 
\begin{enumerate}
	\item  $\congC_{k}^t$ is a congruence.
    \item  $\congC_{k}^t$ refines $\congC_{s}^t$ whenever $s|k$.
\end{enumerate}
\label{lemma:prop-cont}
\label{L cont}
\end{lemmarep}
\begin{appendixproof}
	\noindent (1)\quad Assume that 
         $u \congC_k^t v$. Therefore by definition $|u|\equiv_k |v|$ and 
$\left|\pi_{i,k}(u)\right |_a \equiv^t \left |\pi_{i,k}(v)\right |_a$ for all $a\in \Sigma$ and $i\in [0,k-1]$. Let $x,y\in \Sigma^*$. Clearly, $|xuy|\equiv_k |xvy|$.  Let $i\in [0,k-1]$ and $a\in \Sigma$. Then,
        \begin{align}\left|\pi_{i,k}(xuy)\right|_a&=\left |\pi_{i,k}(x)\right|_a+\left |\pi_{i-|x|,k}(u)\right|_a+\left|\pi_{i-|xu|,k}(y)\right|_a\\
        &\equiv^t\left |\pi_{i,k}(x)\right|_a+\left |\pi_{i-|x|,k}(v)\right|_a+\left|\pi_{i-|xv|,k}(y)\right|_a\\
        &=\left|\pi_{i,k}(xvy)\right|_a.
        \end{align}
  Since $a$ and $i$ were arbitrary, we conclude that 
	$\left|\pi_{i,k}(xuy)\right |_a \equiv^t \left |\pi_{i,k}(xvy)\right |_a$ for all $a\in \Sigma$ and $i\in [0,k-1]$. Hence, $xuy \congC_k^t xvy$.      
        \newline
\noindent (2)\quad 
	Assume that $s|k$ and $u \congC_k^t v$.  
	By assumption it follows $|u| \equiv_s |v|$.  
	Let $a\in \Sigma$ and $i\in [0,s-1]$. Then,
        \begin{align}
        \left |\pi_{i,s}(u)\right|_a&=\sum_{\substack{j\in [0,k-1],\\ 
		j\equiv_s i }}\left|\pi_{j,k}(u)\right|_a&\text{(Since $s|k$)}\\
        &\equiv^t \sum_{\substack{j\in [0,k-1],\\ 
		j\equiv_s i }}\left|\pi_{j,k}(v)\right|_a&\text{(Since $u \congC_k^t v$)}\\
		&=\left |\pi_{i,s}(v)\right|_a.&\text{(Since $s|k$)}
        \end{align}
      Therefore  $u \congC_s^t v$.
\end{appendixproof}

	\subsection{The $O(1)$ Congruence}

\label{sec:approx_k}
	In this section we introduce a refinement of $\congC_k^t$ and use it to 
	define the $O(1)$ congruence.

\begin{definition}[$\alpha_k,\omega_k,\kappa_k$]
	Let $k\in\bbN$ and let $u\in\Sigma^*$ be a word with $|u|\geq k$.
	By $\alpha_{k}(u)$ (\resp~$\omega_{k}(u)$) we denote
	the unique prefix (\resp~suffix) of $u$ of length $k$. 
In case $|u|>2k$ let $\kappa_{k}(u)$ denote the 
unique factor of $u$ such that $u = \alpha_{k}(u)\cdot \kappa_{k}(u) \cdot \omega_{k}(u)$.
\end{definition}

\ificalp
\else
We are now ready to define the relation $\approx_k^t$.

\begin{definition}[$\approx_k^t$]\label{FIL equiv}
	For each $k,t\in\bbN_{>0}$ define the relation 
	$\approx_k^t$ on $\Sigma^+$:
	$u\approx_k^t v$ if 
\begin{itemize}
    \item $u=v$, or 
     \item $|u|,|v| > 2k$ and
	     \begin{itemize}
		     \item $\alpha_k(u)=\alpha_k(v)$, 
		     \item $\omega_k(u)=\omega_k(v)$, 
			     and
		     \item $\kappa_k(u) \congC_{k}^t  \kappa_k(v)$.
	     \end{itemize}
     \end{itemize}
\end{definition}
 \fi   
Note that if $u\approx_k^t v$ and $|u|,|v|>2k$, then $|u|\equiv_k |v|$. 
Moreover, observe that each word of length $<3k$ constitutes a singleton
equivalence class of $\approx_k^t$.

\begin{lemmarep} Let $k,t\in \bbN_{>0}$.
\begin{enumerate}
   \item $\approx_k^t$ is a congruence on $\Sigma^+$.
   \item $\approx_k^t$ refines
	$\approx_s^t$ whenever
	$s|k$.\label{lemma:equiv_k_congruence}
    \end{enumerate}
\end{lemmarep}
\begin{appendixproof}
	\noindent(1)\quad To prove $\approx_k^t$ is an equivalence relation, 
	it suffices to prove that $\approx_k^t$ is an equivalence relation
	on $\Sigma^{> 2k}$. 
	On $\Sigma^{> 2k}$ we have that equality under $\alpha_k$,
	$\omega_k$ and $\congC_k^t$-equivalence under $\kappa_k$ are equivalence relations, respectively.
	Hence their conjunction defines an equivalence relation.

Let us argue that for all $u,v\in\Sigma^+$ and all
	$a\in\Sigma$ we have that $u\approx_k^t v$ implies $ua\approx_k^t va$.
	The case when $u=v$ is trivial.

	Let us assume $|u|,|v|>2k$. 
	Since $\alpha_k(u)=\alpha_k(v)$, trivially we have
	$\alpha_k(ua)=\alpha_k(u)=\alpha_k(v)=\alpha_k(va)$.
	Since $\omega_k(u)=\omega_k(v)$ we have 
	$\omega_k(ua)=\omega_{k-1}(u)a=\omega_{k-1}(v)a=\omega_{k}(va)$.
Since $|u|\equiv_k |v|$, 
	it follows $|ua|\equiv_k |va|$.
	To finish the proof 
	observe that $\kappa_k(ua) =  \kappa_k(u) \cdot a'$ and 
	$\kappa_k(va) =  \kappa_k(v) \cdot a'$,
	where $a'=(\omega_{k}(u))_{0}=(\omega_{k}(v))_{0}$. 
	Since $\kappa_k(u) \congC_k^t  \kappa_k(v)$, 
	it follows by Point (1) of Lemma~\ref{L cont} 
	that $\kappa_k(ua) \congC_k^t  \kappa_k(va)$. 
	Thus $ua \approx_k^t va$. By a symmetric argument $au \approx_k^t av$. Thus $\approx_k^t$ is a congruence.\\
    \newline
	\noindent (2)\quad Assume that $s|k$ and 
	$u\approx_k^t v$.
	Without loss of generality let us assume $|u|,|v|>2k$.
	Since $s|k$, equality under $\alpha_k$ implies
	equality under $\alpha_s$, equality under $\omega_k$
	implies equality under $\omega_s$. Finally, we have
    $\kappa_s(u) = 
		x\kappa_k(u)y$ and 
$\kappa_s(v) = 
		x\kappa_k(v)y$ where  $x=
\omega_{k-s}(\alpha_k(u))$ and $y=\alpha_{k-s}(\omega_k(u))$. 
Since $\kappa_k(u)\congC_k^t \kappa_k(v)$, we have $\kappa_k(u)\congC_s^t \kappa_k(v)$ 
	by Point (2) of Lemma~\ref{L cont}. 
	Also, $x\kappa_k(u)y\congC_s^t 
x\kappa_k(v)y$ since 
	$\congC_s^t$ is a congruence (by Point (1) of Lemma~\ref{L cont}). 
	Thus $\kappa_s(u) \congC_s^t \kappa_s(v)$. Hence $u \approx_s^t v$. 
\end{appendixproof}

We define the following congruence that is used to characterize the
regular languages with constant circuit complexity.
\begin{definition}[$O(1)$ Congruence] $\approx_k$ is $\approx_k^1$.
	\label{D O(1)}
\end{definition}

\subsection{The Pseudovariety $\ESL$}
\begin{samepage}
\label{sec:qesl}

We recall the pseudovariety $\ESL$ of finite semigroups, and also introduce corresponding lm-pseudovariety of stamps $\QESL$.  
	\begin{definition}\label{D ESL}
	By $\ESL$ we denote the pseudovariety of all finite
	semigroups $S$ that are characterized by the identities 
	(see \S\,\ref{sec:identity})
\begin{align}
epxqf&=epxxqf \label{eq:Xeq1}\\
epxyqf&=epyxqf\label{eq:Xeq2}
\end{align}
	for all $e,f\in E(S)$ and $p,q,x,y\in S$.
By $\QESL$ we denote the lm-pseudovariety of
	stamps $\varphi:\Sigma^+\to S$ whose stable semigroup 
	is in $\ESL$.
\end{definition}

\begin{example}
\label{example:const}
    The stable semigroups of the syntactic stamps of the 
	languages $L_1=(ab)^+, L_2=b^*a(a+b)^*$, given in \Cref{fig:ab} and \Cref{fig:1a} respectively, are both idempotent and commutative, i.e., 
	they are in the pseudovariety $\bfJone$. 
	Hence they are in $\ESL$ also.
    
    The stable semigroups of the syntactic stamps of  $L_3=a^+b^+$ and $L_4=b^*ab^*a(a+b)^*$ are given in \Cref{fig:aabb} and \Cref{fig:2a} respectively. The former does not satisfy the identity $epxyqf=epyxqf$ for the assignment $e=p=x=a,y=q=f=b$. The latter does not satisfy the identity 
$epxqf=epxxqf$ for the assignment $e=p=q=f=b, x=a$. Thus both are not in $\ESL$.
\begin{figure}
\centering
\captionbox{Syntactic and stable semigroups of $L_1=(ab)^+$.
	The stability index of $\eta_{L_1}$ is two.
	$\stab(\eta_{L_1})$ is in $\bfJone$, hence
	$\eta_{L_1}$ is in $\QESL$.\label{fig:ab}}
[.4\textwidth]{\begin{tikzpicture}
    \draw (0,0) rectangle (2,2);
    \draw (0,1) -- (2,1);
    \draw (1,0) -- (1,2);
    \draw (0.5,-1.5) rectangle (1.5,-.5);

\node at (0.5,0.5) {$ba$};
 \node at (.25,0.75) {*};
 \node at (0.5,1.5) {$a$};
 \node at (1.5,0.5) {$b$};
 \node at (1.5,1.5) {$ab$};
  \node at (1.25,1.75) {*};
\node at (1,-1) {$aa$};
 \node at (.75,-.75) {*};

  \draw (3.5,0) rectangle (4.5,1);
    \draw (5,0) rectangle (6,1);
    \draw (4.25,-1.5) rectangle (5.25,-.5);
     \node at (4,0.5) {$ba$};
     \node at (3.75,0.75) {*};

    \node at (5.5,0.5) {$ab$};
    \node at (5.25,0.75) {*};

    \node at (4.75,-1) {$aa$};
    \node at (4.5,0.-.75) {*};
\end{tikzpicture}}\qquad
\captionbox{Syntactic and stable semigroup of $L_2=b^*a(a+b)^*$. 
	The stability index of $\eta_{L_2}$ is one.
	$\stab(\eta_{L_2})$ is in $\bfJone$, hence $\eta_{L_2}$ is
	in $\QESL$.\label{fig:1a}}
[.4\textwidth]{\begin{tikzpicture}
    \draw (0,0) rectangle (1,1);
    \draw (0,-1.5) rectangle (1,-.5);
     \node at (0.5,0.5) {$b$};
     \node at (0.25,0.75) {*};

    \node at (0.5,-1) {$a$};
    \node at (0.25,-.75) {*};
\end{tikzpicture}}
\end{figure}

\begin{figure}
\centering\captionbox{Syntactic and stable semigroup of $L_3=a^+b^+$. 
	The stability index of $\eta_{L_3}$ is two.
	$\stab(\eta_{L_3})$ is not in $\ESL$ since $aaabbb=ab\neq ba =aababb$.
	Hence $\eta_{L_3}$ is not in $\QESL$.\label{fig:aabb}}
[.4\textwidth]{\begin{tikzpicture}
    \draw (0,0) rectangle (1,1);
    \draw (1.5,0) rectangle (2.5,1);
    \draw (.75,-1.5) rectangle (1.75,-.5);
    \draw (.75,-3) rectangle (1.75,-2);
     \node at (0.5,0.5) {$a$};
     \node at (0.25,0.75) {*};

    \node at (2,0.5) {$b$};
    \node at (1.75,0.75) {*};

    \node at (1.25,-1) {$ab$};
    \node at (1.25,-2.5) {$ba$};
    \node at (1,0.-2.25) {*};
\end{tikzpicture}}
\qquad\centering\captionbox{Syntactic and stable semigroup of $L_4=b^*ab^*a(a+b)^*$. 
	The stability index of $\eta_{L_4}$ is two.
	$\stab(\eta_{L_4})$ is not in $\ESL$ since 
	$bbabb=a\neq aa =bbaabb$.
Hence $\eta_{L_4}$ is not in $\QESL$.
	\label{fig:2a}}
[.4\textwidth]{\begin{tikzpicture}
    \draw (0,0) rectangle (1,1);
    \draw (0,-1.5) rectangle (1,-.5);
     \draw (0,-3) rectangle (1,-2);

      \node at (0.5,0.5) {$b$};
     \node at (0.25,0.75) {*};

    \node at (0.5,-1) {$a$};
    \node at (0.5,-2.5) {$aa$};
    \node at (0.25,-2.25) {*};
\end{tikzpicture}}
\end{figure}

\end{example}

The following lemma gives an alternative characterization of $\ESL$.
\begin{samepage}
	\begin{lemmarep}\label{L cont EJ1}
A finite semigroup $S$ belongs to $\ESL$ if, and only if, it satisfies the following property,
where $\pi:S^+\to S$ denotes the evaluation morphism:
		\begin{align}
\cont(u)\setminus \{e,f\}=\cont(v)\setminus \{e,f\}\quad\Longrightarrow\quad \pi(euf)=\pi(evf)
			\label{E pi}
		\end{align}
		for all $e,f\in E(S)$ and all words $u,v\in S^+$.
\end{lemmarep}
\end{samepage}
\begin{appendixproof}
	\noindent
($\Longleftarrow$)\quad Any semigroup $S$ satisfying the property stated in the lemma also 
	satisfies the identities (\ref{eq:Xeq1}) and (\ref{eq:Xeq2}), and thus belongs to $\ESL$.

\noindent
($\Longrightarrow$)\quad 
	Assume that $S$ is in $\ESL$. 
	In the following let $e,f\in E(S)$.
	For each $n\geq 2$, each $u=u_0\ldots u_{n-1} \in S^+$
	and each $i\in [0,n-2]$ let $u_{(i,i+1)}$ be the transposed word
\begin{align}
	u_{(i,i+1)}&=u_0\ldots u_{i-1}u_{i+1}u_iu_{i+2}\ldots u_{n-1}.
\end{align}
	We claim that \begin{align}
	\pi(euf)&=\pi(eu_{(i,i+1)}f). \label{E swap}
\end{align} 
Indeed we observe,
 \begin{align}
\pi(euf)&=
	 \pi(\underbracket[.5pt]{e}\,\underbracket[.5pt]{eu_0\ldots u_{i-1}}\,\underbracket[.5pt]{u_i}\,\underbracket[.5pt]{u_{i+1}}\,
\underbracket[.5pt]{u_{i+2}\ldots u_{n-1}f}\,\underbracket[.5pt]{f})
	 & &\text{(Since $e,f\in E(S)$)}\\
&=\pi(eeu_0\ldots u_{i-1}u_{i+1}u_{i}u_{i+2}\ldots u_{n-1}ff)&&(\text{Applying  $epxyqf=epyxqf$)}\\        
	 &=\pi(eu_{(i,i+1)}f).
\end{align}

Since adjacent transpositions generate all permutations of a word, we conclude that $\pi(euf)=\pi(eu'f)$ for any permutation $u'$ of $u$. 

Further if $u_i=u_{i+1}$, for $i\in [0,n-2]$, then
we claim that 
\begin{align}
    \pi(euf)&=\pi(eu_0\ldots u_iu_{i+2}\ldots u_{n-1}f). \label{E reduce}
\end{align} 
Indeed we note,
\begin{align}
    \pi(euf)&=\pi(\underbracket[.5pt]{e}\underbracket[.5pt]{eu_0\ldots u_{i-1}} \underbracket[.5pt]{u_{i}}\underbracket[.5pt]{u_{i+1}}\underbracket[.5pt]{u_{i+2}\ldots u_{n-1}f}
    \underbracket[.5pt]{f})\\
    &=\pi(eeu_0\ldots u_{i-1}u_{i}u_{i+2}\ldots u_{n-1}ff)&&
	\text{(Applying  $epxxqf=epxqf$)}\\
    &=\pi(eu_0\ldots u_{i-1}u_iu_{i+2}\ldots u_{n-1}f).
\end{align}

Assume that $\cont(u)\setminus\{e,f\}=\cont(v)\setminus\{e,f\}=
	\{s_1,\ldots, s_k\}\subseteq S$ for some $k\in\bbN$.
	Assume that $e\not=f$ (when $e=f$ the analysis is similar and hence omitted).
	Since $e^{|u|_e}s_1^{|u|_{s_1}}\ldots s_k^{|u|_{s_k}}f^{|u|_f}$
	is a permutation of $u$
	we deduce
\begin{align}
	\pi(euf)&=\pi(eeuff)\\
	&=\pi(ee^{|u|_e+1}s_1^{|u|_{s_1}}\ldots s_k^{|u|_{s_k}}f^{|u|_f+1}f)\\
&=\pi(ees_1\ldots s_kff)\\
    &=\pi(ee^{|v|_e+1}s_1^{|v|_{s_1}}\ldots s_k^{|v|_{s_k}}f^{|v|_f+1}f)\\
    &=\pi(eevff)\\
    &=\pi(evf).
\end{align}
\end{appendixproof}

\begin{corollary}
The following are identities of $\ESL$, where $e$ and $f$ range over idempotents.
\begin{enumerate}
\item $exf=exef=efxf$
\item$exyf=exeyf=exfyf$
\end{enumerate}
\label{E exyf}
\end{corollary}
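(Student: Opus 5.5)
Both identities follow from Lemma~\ref{L cont EJ1}: we apply it to the idempotents $e,f$ and to suitably chosen words $u,v\in S^+$ over the alphabet $S$, and in each case compare alphabetic contents after discarding $e$ and $f$. Here $\pi:S^+\to S$ is the evaluation morphism. Throughout, we use $e=ee$ and $f=ff$ to pad the outer letters so that every expression has the shape $\pi(e\,w\,f)$ for some $w\in S^+$.

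For (1), take $u=x$ and $v=xe$. Then $\cont(u)\setminus\{e,f\}=\{x\}\setminus\{e,f\}=\cont(v)\setminus\{e,f\}$, since the extra letter $e$ is removed. Lemma~\ref{L cont EJ1} gives $\pi(exf)=\pi(exef)$, that is, $exf=exef$. Symmetrically, $u=x$ and $v=fx$ give $exf=efxf$.

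For (2), take $u=xy$ and $v=xey$, resp.\ $v=xfy$. The contents agree once $e,f$ are removed, so $exyf=exeyf=exfyf$.

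One subtlety deserves a check. If $x$ itself happens to equal $e$ or $f$, the contents may differ before removal, but the hypothesis of Lemma~\ref{L cont EJ1} only compares contents after removing $e$ and $f$, so the argument is unaffected. I do not expect a real obstacle: the only point needing care is this removal convention, and it is exactly what makes the lemma applicable.
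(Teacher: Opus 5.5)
Your proposal is correct and is the derivation the paper intends: the corollary is stated as an immediate consequence of Lemma~\ref{L cont EJ1} with no separate proof, and your instances ($u=x$ against $v=xe$ or $v=fx$, and $u=xy$ against $v=xey$ or $v=xfy$) are exactly the ones needed, because the extra idempotent letter vanishes once $e$ and $f$ are removed from the contents. The padding with $ee$ and $ff$ is unnecessary, since for example $exef=\pi(e\cdot xe\cdot f)$ already has the shape $\pi(e\,w\,f)$ that the lemma requires.
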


The operation $\bfE\bfV$, called \emph{essentially} $\bfV$, for a pseudovariety 
$\bfV$ of semigroups is studied in~\cite{Grosshans21}, where the pseudovarieties 
$\bfV$ satisfying the equation $\bfE\bfV=\bfV \vee \bfL\bfI$ are characterized.
More details on the notation $\bfE\bfV$ can be found in Footnote~1 of the same reference.
Incidentally, $\bfJone$ is a variety that does not satisfy this identity (\cite{Grosshans21}, Proposition 13). Pseudovarieties of the form $\bfQ\bfE\bfV$ are further studied in the context of programs over monoids in \cite{GrosshansMS22}, where it is conjectured that  $\bfQ\bfE\bfV = \bfE\bfV * \bfMod$ if $\bfV$ is local. 
The equality $\bfQ\ESL= \ESL * \bfMod$ does hold, but we do not prove it in this paper.

\subsection{$O(1)$ Characterization Theorem}\label{S Main Result const}
\label{sec:O(1)}

Before stating the main result of this section we require 
a further definition.

\begin{definition}\label{D Cont}
	For each $k$-tuple $(\Gamma_0,\dots,\Gamma_{k-1})$ of 
	non-empty subsets of a 
	finite alphabet $\Sigma$
	we define
	\begin{align}\Cont_k(\Gamma_0,\dots,\Gamma_{k-1})=\Sigma^{k\N_{>0}}\cap
	\bigcap_{i\in[0,k-1]}\cont_{i,k}^{-1}(\Gamma_i).
	\end{align}
\end{definition}

\begin{samepage}
	\begin{theorem}[$O(1)$ Characterization]\label{T const main}
Let $L\subseteq\Sigma^+$ be a regular language.
Then the following statements are equivalent:
	\begin{enumerate}
		\item $\opc(L)\in O(1)$.
		\item $\opc(L)\in o(\log n)$.
		\item $\eta_L\in\QESL$.
		\item $L$ is a union of $\approx_k$-classes
			for some $k\in\bbN_{>0}$.
        \item $L$ is a disjoint finite union of finite languages and languages of the 
		form $u\Cont_k(\Gamma_0,\ldots, \Gamma_{k-1})v$, where $k\in\bbN_{>0}$,
			$\emptyset\not=\Gamma_0,\dots,\Gamma_{k-1}\subseteq\Sigma$, and $u ,v\in \Sigma^+$.
		\item $L$ is a finite Boolean combination of finite languages and languages
			of the form
			$u(\Gamma_{0}\dots \Gamma_{k-1})^+v$,
			where $k\in\bbN_{>0}$, $\emptyset\not=\Gamma_0,\dots,\Gamma_{k-1}\subseteq\Sigma$ 
			and $u,v\in\Sigma^+$.
		\item $L$ is definable in 
			$\rmFO^1[\Sigma,  +\omega, \min, \max, \rmmod]$.
        \item $L$ is definable in 
			$\rmFO^1[\Sigma, \mathrm{reg}]$.
        \item $L$ is definable in 
			$\rmFO^1[\Sigma, \mathrm{arb}]$.
            
	\end{enumerate}
\end{theorem}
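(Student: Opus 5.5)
We prove the equivalences along the cycle (1)$\Rightarrow$(2)$\Rightarrow$(3)$\Rightarrow$(4)$\Rightarrow$(5)$\Rightarrow$(6)$\Rightarrow$(7)$\Rightarrow$(8)$\Rightarrow$(9)$\Rightarrow$(1). Several links are routine. (1)$\Rightarrow$(2) is trivial. (7)$\Rightarrow$(8)$\Rightarrow$(9) hold because $+\omega$, $\min$, $\max$ and $\rmmod$ are regular numerical predicates, and regular predicates are arbitrary ones.

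For (9)$\Rightarrow$(1) we use that an $\rmFO^1$ sentence is a Boolean combination of atoms and of sentences $\exists x\,\psi(x)$, where $\psi$ is quantifier-free in the single variable $x$. For fixed $n$, every numerical atom in $\psi$ evaluates to a constant at each position $i$, and each letter atom is a test $w_i\in\Gamma$. Hence $\exists x\,\psi(x)$ becomes a single $\vee$-gate over input gates $(i,\Gamma_i)$, with a constant-size correction for the cases where $\psi(i)$ is constantly true or false. The outer Boolean combination is a constant-size formula over these gates. This gives size $O(1)$, uniformly in $n$.

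(4)$\Rightarrow$(5): fix $k$ and take $u=\alpha_k(w)$, $v=\omega_k(w)$ of length $k$. By \Cref{D O(1)}, the $\approx_k$-class of a word $w$ with $|w|>2k$ is determined by $u$, $v$, $|\kappa_k(w)|\bmod k$ and the tuple $(\cont_{i,k}(\kappa_k(w)))_i$. If $|\kappa_k(w)|\equiv_k r\neq 0$, we append the first $k-r$ letters of $v$ to the middle part. Equivalently, we pass to a class description in which the middle has length divisible by $k$ and the suffix has length between $1$ and $k$. This is a finite disjoint union of sets $u'\Cont_k(\Gamma_0,\dots,\Gamma_{k-1})v'$. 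The short words ($|w|<3k$) are singleton classes and go into the finite part.

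(5)$\Rightarrow$(6): by inclusion–exclusion, $\Cont_k(\Gamma_0,\dots,\Gamma_{k-1})$ equals $(\Gamma_0\cdots\Gamma_{k-1})^+$ minus the union of the sets $(\Gamma_0\cdots\Gamma_i'\cdots\Gamma_{k-1})^+$ with $\Gamma_i'=\Gamma_i\setminus\{a\}$. Concatenating with fixed $u,v$ commutes with these differences, since $u$ and $v$ are fixed-length.

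(6)$\Rightarrow$(7): we write $u(\Gamma_0\cdots\Gamma_{k-1})^+v$ with $\min+j$ atoms for $u$, $\max-j$ atoms for $v$ (via $+\omega$), a modular condition on the length, and a single $\forall x$ saying that every position in the middle range with $x-|u|\equiv i\bmod k$ carries a letter in $\Gamma_i$. The middle range is expressed through $\min+|u|\le x$, which is encodable with one variable using $+\omega$ and $\max$. Finite languages are handled as in Example~\ref{example:logic-const}.

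(3)$\Rightarrow$(4) is the second difficult direction. Let $s$ be the stability index. We choose $k$ to be a multiple of $s$ times the exponent of $\stab(\eta_L)$, large enough that $\eta_L(\alpha_k(w))$ and $\eta_L(\omega_k(w))$ can be written as $p\,e$ and $f\,q$ with idempotents $e,f\in\stab(\eta_L)$. Cut the middle part $\kappa_k(w)$, whose length is $\equiv 0\bmod k$ after the normalization above, into blocks of length $k$. Each block's image is a product of stable elements, one per residue of $s$-blocks. The key step is to apply Lemma~\ref{L cont EJ1}, with an extended letter alphabet (letters at each residue modulo $k$), to show that $\eta_L(w)$ depends only on $u$, $v$ and the set of (residue, letter) pairs occurring in the middle. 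Idempotents can be inserted between blocks, so commutation and deduplication in $\stab(\eta_L)$ reduce each block sequence to a canonical one. The main difficulty is to justify that the relevant content of $\kappa_k(w)$ is captured by $\cont_{i,k}$ rather than by content over $k$-blocks. For this we write each block as a product of single-letter contributions of the form $e\,(x^{i}ax^{k-1-i})\,e$, using a fixed idempotent power $x^{\omega}$ padding, and use Corollary~\ref{E exyf} to split a block into letter-wise factors.

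(2)$\Rightarrow$(3) is the lower bound and the main obstacle. Suppose $\stab(\eta_L)$ violates one of the identities in \Cref{D ESL} through witnesses in $\Sigma^s$. By Proposition~\ref{P stable reduce} and Theorem~\ref{T asymptotics}(2) (note that $\log n$ is mild), it suffices to show $\opc(K)\in\Omega(\log n)$ for a language $K=(\eta_L\circ\flat_s)^{-1}(P)$ over the alphabet $\Sigma^s$ in which the witnesses are letters. We would then consider inputs of the form $e^{*}p\,z\,q\,f^{*}$ with a padded middle $z$, and make the following argument. A circuit of size $g=o(\log n)$ has at most $g^{O(1)}$ wire-relevant patterns, so it distinguishes few positions. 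If $2^{O(g)}\ll n$, then by pigeonhole two or three positions in the $e$-padded region are treated identically: every gate reads them through the same subsets. Swapping or duplicating letters $x,y$ at such positions, both of which sit among $e$-letters, does not change the circuit output. Yet by the violated identity it changes membership in $K$, because $e\,p\,x\,q\,f\neq e\,p\,xx\,q\,f$ or $e\,p\,xy\,q\,f\neq e\,p\,yx\,q\,f$. The delicate part is to define "treated equally" correctly: positions whose labels in every input gate satisfy a constraint that makes the swapped words indistinguishable, with a counting bound of $2^{2^{|\Sigma|}\cdot g}$ classes. This gives the contradiction for infinitely many $n$.
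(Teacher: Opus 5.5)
\textbf{Overall.} Your cycle of implications is the paper's, and your lower-bound idea ($\sim_C$-classes, at most $2^{2^{|\Gamma|}g}$ of them, plus Proposition~\ref{P swap}) is also the paper's. The direction (3)$\Rightarrow$(4), however, has a genuine gap, and the lower bound needs one extra reduction.

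\textbf{The gap in (3)$\Rightarrow$(4).} The identities of $\ESL$ only let you insert the \emph{flanking} idempotents (Corollary~\ref{E exyf}: $exyf=exeyf$). By Lemma~\ref{L cont EJ1}, inserting any other idempotent changes the content and in general the value. So padding with a fixed letter power $x^\omega$, in order to split a block into factors $e\,(x^iax^{k-1-i})\,e$, is unjustified, and in fact false. Take $L=(ab)^+$, whose stamp lies in $\QESL$. Here $\eta_L(x^m)$ is the zero for every letter $x$ and every $m\ge 2$, so all your letter-wise factors vanish, while $e\,(abab)\,e=ab$ for $e=\eta_L(ab)$. A related slip: $\eta_L(\alpha_k(w))$ need not be of the form $pe$ with $e$ idempotent (e.g.\ $L=b^*a$ and $w\in b^{k-1}a\Sigma^+$). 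The idempotent has to be found \emph{inside} the prefix, as in Lemma~\ref{L pq idem}.

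The missing idea is Lemma~\ref{lemma:idem-insert}:
\begin{itemize}
\item Take the idempotent word $p\in\Sigma^s$ obtained at the border and reduce to $pu'p$ via $exf=exef$.
\item After the $i$-th letter, insert the conjugate $e_i=r_ip\ell_i$ of $p^2$, where $p=\ell_ir_i$ and $|\ell_i|\equiv_s i+1$. This is legitimate precisely because $p^3=\ell_ie_ir_i$, so $e_i$ itself appears as a flanking idempotent around $(r_iv')(v''\ell_i)$.
\item Regrouping gives $s$-blocks $\lambda_i(a)$, i.e.\ $p$ with its position $i\bmod s$ replaced by $a$, separated by copies of $p$. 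The content of $\block_s(pu'p)$ is then determined by the sets $\cont_{i,s}(u)$, and Lemma~\ref{L cont EJ1} concludes.
\end{itemize}
Lengths not divisible by $sN$ are handled afterwards by passing to $\approx_{2sN}$.

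\textbf{The reduction missing in (2)$\Rightarrow$(3).} The change of membership is not explained by $epxyqf\neq epyxqf$ or $epxqf\neq epxxqf$. The equivalent positions come from pigeonhole and are separated by $e$-padding. So your test words evaluate to $epxeyqf$ versus $epyexqf$, or to $epexeqf$ versus $epexexeqf$. You need Lemma~\ref{L identities}:
\begin{itemize}
\item A failure of commutation yields a failure of $epxeyqf=epyexqf$. This requires $x$ to sit directly after $p$ and $y$ directly before $q$. Padding on both sides is too weak: the stamp of $a^+b^+$ satisfies $epexeyeqf=epeyexeqf$.
\item A failure of $epxqf=epxxqf$ is transferred to $epexeqf\neq epexexeqf$ by way of $epxyqf=epexeyeqf$, which uses commutation.
\end{itemize}
Hence the paper's case split, with three equivalent positions carrying $(x,e,e)$ versus $(x,e,x)$ in the second case.

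\textbf{Two routine links also need repair.}
\begin{itemize}
\item In (4)$\Rightarrow$(5), appending letters of $v$ to the middle does not produce sets $u'\Cont_k(\Gamma_0,\dots,\Gamma_{k-1})v'$, because the enlarged middle is forced to end with those letters. Instead, move the last $r$ letters $a_0\cdots a_{r-1}$ of the middle into the suffix, and split on whether each $a_i$ already occurs earlier in its residue class ($A_i$ versus $A_i\dotminus a_i$).
\item In (6)$\Rightarrow$(7), you must require the middle range to be nonempty. Otherwise $uv$ and short overlapping words satisfy your formula.
\end{itemize}
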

\end{samepage}
\begin{example}
	By the above theorem the languages $L_1=(ab)^+$ and $L_2=b^*a(a+b)^*$ from 
	\Cref{example:const} have constant circuit complexity since
	$\eta_{L_1},\eta_{L_2}\in\QESL$.
	The languages $L_3=a^+b^+$ and $L_4=b^*ab^*a(a+b)^*$ have circuit complexity 
	$\Omega(\log n)$ --- recall that $f\in o(g)$ if, and only if, $f\not\in\Omega(g)$.
	In fact, one can prove that $\opc(a^+b^+)\in\Theta(n)$ 
	and that $\opc(b^*ab^*a(a+b)^*)\in\Theta(\log n)$. 
\end{example}

\begin{proof}
We show 
$(1) \Longrightarrow (2) \Longrightarrow (3) \Longrightarrow 
(4) \Longrightarrow (5) \Longrightarrow (6)
\Longrightarrow (7)\Longrightarrow (8)\Longrightarrow (9)\Longrightarrow (1).$
\medskip

	\begin{samepage}
\noindent $(1) \Longrightarrow  (2)$\quad
Trivial.
	\end{samepage}
\medskip

	\begin{samepage}
\noindent 
	$(2) \Longrightarrow  (3)$\quad
Follows from Theorem~\ref{T const lower} proved in \S~\ref{S Lower Bound const}.
	\end{samepage}
\medskip

	\begin{samepage}
		\noindent
	$(3) \Longrightarrow  (4)$\quad
		By Corollary~\ref{C const main} of Proposition~\ref{P const tech} proved in 
		\S\,\ref{S Congruence const}.
	\end{samepage}
\medskip

	\begin{samepage}
		\noindent
\noindent$(4) \Longrightarrow  (5)$\quad Each equivalence class of $\approx_k$ is either a singleton 
		set, or a set of the form  $uKv$ where $u,v \in \Sigma^{k}$ and 
\begin{align}
	K&= \Sigma^{r+k\bbN_{>0}} \cap\bigcap_{i\in[0,k-1]} 
	\cont_{i,k}^{-1}(A_i)\\
	&=\{ w\in \Sigma^{+} : |w| \equiv r \bmod k,\cont_{i,k}(w)=A_i 
	\text{ for all } i\in[0,k-1]\}
\end{align}
		for some $r\in[0,k-1]$ and 
		$\emptyset\not=A_i\subseteq \Sigma$ for each $i\in[0,k-1]$. 
For each $A\subseteq\Sigma$ and each $a\in A$ we define
		$A\dotminus a$ as $A\setminus\{a\}$ if $|A|>1$ and $\{a\}$ otherwise.
We write $K$ as a disjoint union $K'$ of languages by performing a case analysis
on the last $r$ letters, and whether or not they appear in their respective residue classes
		previously:
\begin{align}
    K'\quad =\quad
    \biguplus_{\substack{
    x=a_0\ldots a_{r-1}\\ ~\in A_0\ldots A_{r-1}
    }}\ {\mathlarger{\biguplus}_{(X_0,\dots,X_{r-1})\in\prod_{i=0}^{r-1} 
	\{A_i, A_i\dotminus a_i\}}}
	 \Cont_k(X_0,\ldots, X_{r-1},A_r,\ldots, A_{k-1}) \cdot x.
\end{align}
		We claim that $K=K'$.
To show the right-to-left inclusion assume some $w\in K'$.
		Then $|w|\equiv_k r$  and $\cont_{i,k}(w)=A_i$ for
		each $i\in[0,k-1]$. 
		Hence $w\in K$.
Conversely assume that $w\in K$. 
		Then $w=vx$
		for words $v\in\Sigma^{k\N_{>0}}$ and $x\in\Sigma^r$.
		Let us write $x=a_0\dots a_{r-1}$, where $a_i\in A_i$ for each $i\in[0,r-1]$.
		Let $X_i=\cont_{i,k}(v)\subseteq A_i$ for each $i\in[0,r-1]$.
		Since $\cont_{i,k}(w)\supseteq\cont_{i,k}(v)\supseteq\cont_{i,k}(w) 
		\setminus \{a_i\}=A_i \setminus \{a_i\}$, in fact we have 
		$X_i\in\{A_i,A_i\dotminus a_i\}$ for each $i\in[0,r-1]$.
		Thus, $v\in\Cont_k(X_0,\dots,X_{r-1},A_r,\dots,A_{k-1})$
		and we obtain that $w=vx\in K'$, as required.

Therefore we can write $uKv$ as a disjoint union of languages
		of the form $uK''xv$, where $K''$ is a language
		of the form $\Cont_k(\Gamma_0,\dots,\Gamma_{k-1})$,
		where $\Gamma_i\subseteq\Sigma$ for each $i\in[0,k-1]$.
	\end{samepage}

	\begin{samepage}
		\medskip

		\noindent
		\noindent$(5) \Longrightarrow  (6)$\quad Since $\Gamma_0,\dots,\Gamma_{k-1}$
		are non-empty we have
\[\Cont_k(\Gamma_0,\ldots, \Gamma_{k-1})= 
		\bigcap_{i\in[0,k-1]} \left[(\Sigma^i \Gamma_{i}\Sigma^{k-i-1})^+ 
		\mathbin{\big\backslash}
		\bigcup_{\emptyset\not=\Gamma_i'\subsetneq \Gamma_i}(\Sigma^i \Gamma_{i}'
		\Sigma^{k-i-1})^+\right].
\] 
	\end{samepage}

	\begin{samepage}
\noindent$(6) \Longrightarrow  (7)$\quad
Since $\rmFO^1[\Sigma, +\omega, \min, \max, \rmmod]$ is closed under Boolean operations, it suffices to show that there is a formula defining a given word $w$ as well as the sets of the form 
$L=a_0\ldots a_{m-1} (\Gamma_{0}\Gamma_{1}\dots \Gamma_{k-1})^+b_0\ldots b_{n-1}$. \Cref{example:logic-const} gives a formula $\varphi$ defining a word $w$, 
		and $\varphi_L$ below defines $L$.
		Let us first define the auxiliary formula
\begin{align}
	&x \in [\min + m, \max - n] \defeq \bigwedge_{ i\in[0,m-1]} \min+i \neq x~\wedge \bigwedge_{i\in[0,n-1]} x+i \neq \max.
\end{align}
	For each $i\in[0,k-1]$ let $i'=(m+i)\bmod k\in[0,k-1]$.
		We define
\begin{align}
	\varphi_L&\defeq \bigwedge_{i\in[0,m-1]} \exists x\, (\min+i=x \wedge P_{a_i}(x)) \wedge \bigwedge_{i\in[0,n-1]} \exists x\, (x+i = \max \wedge P_{b_{n-1-i}}(x))\label{E formula 1}\\
	&~~\wedge\bigwedge_{i\in[0,k-1]} \forall x\,(x \in [\min + m, \max - n] \wedge x \equiv i' \bmod k \to \bigvee_{a\in \Gamma_i} P_{a}(x))\label{E formula 2}\\
	&~~\wedge\max\equiv m+n-1\bmod k\ \wedge
	\exists x\, (x\in[\min + m, \max - n])\label{E formula 3}
\end{align}
	\end{samepage}
The part (\ref{E formula 1}) expresses that our word begins
with $a_0\dots a_{m-1}$ and ends with $b_0\dots b_{n-1}$.
The part (\ref{E formula 2}) expresses that 
a position $p\equiv m+i \bmod{k}$ outside the prefix and suffix must carry a letter from 
$\Gamma_i$.
The part (\ref{E formula 3}) expresses that the length of the word is 
in residue class $m+n\bmod k$ and that the word has length at least $m+n+1$.
\medskip

\begin{samepage}

\noindent $(7) \Longrightarrow  (8)$\quad
Trivial.
\medskip
\end{samepage}

\begin{samepage}
\noindent $(8) \Longrightarrow  (9)$\quad
Trivial.
\medskip
\end{samepage}

\begin{samepage}
    
\noindent$(9) \Longrightarrow  (1)$\quad It suffices to show that every language, not necessarily regular,  defined by an  $\rmFO^1[\,\Sigma,\mathrm{arb}]$ sentence can be computed by a circuit family of size $O(1)$.  Below we freely use the fact that languages 
	recognized by circuit families of constant size are closed under Boolean operations.

Let $\varphi$ be an $\rmFO^1[\, \Sigma, \mathrm{arb}]$ sentence defining $L$ and 
let $x$ be the unique variable used by $\varphi$. Without loss of generality, we can assume that $\varphi$ is a Boolean combination of formulas of the form $Q x\, \psi$, where $Q\in \{\exists, \forall\}$ and $\psi$ is quantifier-free. 
	It suffices to construct a circuit family of 
	constant size for each formula of the form $Q x\, \psi$, where $Q\in \{\exists, \forall\}$. Since $\forall x\, \psi \equiv \neg \exists x\, \neg \psi$ we can limit our attention to formulas of the form $\exists x\, \psi$. 

Consider a formula $\exists x\, \psi \in \rmFO^1[\,\Sigma, \mathrm{arb}]$,
	where $\psi$ is quantifier-free. We write $\psi$ in disjunctive normal form and distribute the quantifier over the disjunction. 
	Again, it suffices to construct a circuit family 
	$\calC=(C_n)_{n\in\bbN_{>0}}$
	of constant size computing the set of words
	satisfying a formula of the form $\exists x\, \xi$ where $\xi$ is a conjunction of literals. 
	Let $c_0,\ldots, c_{\ell-1}$ be the constants used by $\xi$. Note that the negative literal 
	$\neg P_a(x)$ is equivalent to the disjunction $\vee_{b \in \Sigma \setminus \{a\}} P_b(x)$. 
	By grouping the numerical predicates together, and by replacing the negative letter predicates, it is easy to see that the formula $\exists x\, \xi$ is equivalent to a 
	formula of the form
\begin{equation}
	\exists x\, \xi'(x,c_0,\ldots, c_{\ell-1}) \wedge \bigvee_{a\in \Gamma} P_a(x) \wedge \bigwedge_{i \in [0,\ell-1]} \bigvee_{a\in \Lambda_i} P_a(c_i),
\end{equation}
where $\Gamma,\Lambda_i \subseteq \Sigma$ for $i\in [0,\ell-1]$ and $\xi'$ is a conjunction of numerical predicates or their negations. 
Let $n\in \bbN_{>0}$, and let 
	$m_0,\ldots, m_{\ell-1}\in [0,n-1]$ be the interpretation of the constants 
	$c_0,\ldots, c_{\ell-1}$ for words of length $n$. 
	Let $P\subseteq [0,n-1]$ be the set of positions that satisfy the 
	formula $\xi'(x,m_0,\ldots, m_{\ell-1})$ in a word of length $n$. Then the 
	$n$-circuit $C_n$ realizing 
\begin{equation}
\bigvee_{p\in P} (p, \Gamma) \wedge \bigwedge_{i\in [0,\ell-1]} (m_i, \Lambda_i) 
\end{equation}
computes
the words of length $n$ that satisfy the formula $\exists x\, \xi$. 
The family $(C_{n})_{n\in \bbN_{>0}}$ accepts the set of all words satisfying the formula $\exists x\, \xi$ and it is of constant size. Hence the direction follows.
\end{samepage}

\end{proof}

\subsubsection{The Neutral Letter Languages of Constant Circuit Complexity}

\label{neutral letter}
A language $L\subseteq \Sigma^+$ has a \emph{neutral letter} if there exists a letter $c\in \Sigma$ such that $uv\in L$ if, and only if, $ucv\in L$ for all 
$u,v\in \Sigma^*$ such that $uv\in\Sigma^+$. 
A language having a neutral letter is called a \emph{neutral letter language}.
The languages $b^*a(a+b)^*
$ and $b^*ab^*a(a+b)^*
$ from \Cref{example:const} have the neutral letter $b$, whereas $(ab)^+$ and $a^+b^+$  from the same example do not.

The syntactic semigroup of a neutral letter language $L$ with the neutral letter $c$ is a monoid with the identity $\eta_L(c)$. Moreover, the stable semigroup of its syntactic morphism is the entire syntactic semigroup.
A language $L\subseteq \Sigma^+$ is \emph{commutative} if  
$uxyv\in L$ if, and only if, $uyxv \in L$ for all $u,v,x,y\in \Sigma^*$. 
It is \emph{idempotent} if $uxv\in L$ if, and only if, $uxxv\in L$ for 
all $u,v\in \Sigma^*$ and $x\in \Sigma^+$.
Note that if $S(L)$ is both idempotent and commutative, then so is $L$.

\begin{corollary}\label{C Neutral Letter}
    Let $L\subseteq \Sigma^+$ be a regular neutral letter language. Then, 
	the following statements are equivalent:
    \begin{enumerate}
	    \item $\opc(L)\in O(1)$.
        
        \item  $L$ is idempotent and commutative.
          \item $L$ is a disjoint union of $\congC_1^1$-classes.
     
	  \item $L$ is a Boolean combination of languages of the form $\Gamma^+$ 
		with $\Gamma\subseteq \Sigma$.
\item $L$ is definable in $\rmFO^1[\Sigma]$.   \end{enumerate}
    
\end{corollary}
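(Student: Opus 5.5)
The plan is to prove the cycle $(1)\Rightarrow(2)\Rightarrow(3)\Rightarrow(4)\Rightarrow(5)\Rightarrow(1)$ and to lean on Theorem~\ref{T const main} for the circuit-theoretic direction. The neutral-letter hypothesis will be used only through two facts stated just before the corollary. First, $S(L)$ is a monoid with identity $1=\eta_L(c)$. Second, $\stab(\eta_L)=S(L)$. Since $1$ is an idempotent, we may substitute $e=f=p=q=1$ in the identities of $\ESL$.

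\textbf{Step $(1)\Rightarrow(2)$.} By Theorem~\ref{T const main}, $\opc(L)\in O(1)$ gives $\eta_L\in\QESL$, so $\stab(\eta_L)=S(L)\in\ESL$. The substitution $e=f=p=q=1$ turns (\ref{eq:Xeq1}) into $x=xx$ and (\ref{eq:Xeq2}) into $xy=yx$ for all $x,y\in S(L)$. Hence $S(L)$ is idempotent and commutative, and by the remark preceding the corollary so is $L$.

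\textbf{Steps $(2)\Rightarrow(3)\Rightarrow(4)$.} Assume $L$ is idempotent and commutative. Commutativity lets us permute letters, and idempotency lets us collapse repeated letters in both directions. So every $w\in\Sigma^+$ can be brought to a canonical word $w'$, a fixed ordering of $\cont(w)$, without changing membership of $uwv$ in $L$. We need care when $u=v=\varepsilon$, where the definitions are stated with $u,v\in\Sigma^*$; this is fine. It follows that membership of $w$ in $L$ depends only on $\cont(w)$.

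Now $u\congC_1^1 v$ means $\cont(u)=\cont(v)$, because the length condition mod $1$ is vacuous. Thus $L$ is a union of $\congC_1^1$-classes. Each such class is $\{w:\cont(w)=\Gamma\}=\Gamma^+\setminus\bigcup_{\Gamma'\subsetneq\Gamma}\Gamma'^+$, which is a Boolean combination of the required form, and there are finitely many classes.

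\textbf{Step $(4)\Rightarrow(5)$.} The language $\Gamma^+$ is defined by $\forall x\,\bigvee_{a\in\Gamma}P_a(x)$. Nonemptiness is automatic since $L\subseteq\Sigma^+$ and words are nonempty. Closure of $\rmFO^1[\Sigma]$ under Boolean operations finishes this step.

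\textbf{Step $(5)\Rightarrow(1)$.} $\rmFO^1[\Sigma]$ is a fragment of $\rmFO^1[\Sigma,\mathrm{arb}]$. Implication $(9)\Rightarrow(1)$ of Theorem~\ref{T const main} therefore gives $\opc(L)\in O(1)$.

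The only step needing real care is $(1)\Rightarrow(2)$. There we must confirm that the neutral letter makes $S(L)$ a monoid equal to its stable semigroup, including when the stability index exceeds one. This holds because $\eta_L(c)^s=1$, so $\eta_L(\Sigma^s)\supseteq\eta_L(\Sigma)\cdot 1$ already covers everything. The identities of $\ESL$ then specialise directly to those of $\bfJone$. A secondary point is checking that idempotency with $x\in\Sigma^+$ lets us delete duplicate letters. This is just idempotency applied with $x$ a single letter.
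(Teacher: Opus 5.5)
Your proof is correct and essentially the paper's. For (1)$\Rightarrow$(2) you use Theorem~\ref{T const main}, the fact that $\stab(\eta_L)=S(L)$ is a monoid, and specialise the $\ESL$ identities at $e=f=p=q=1$; for (5)$\Rightarrow$(1) you use its implication $(9)\Rightarrow(1)$. The only difference is that you prove (2)$\Rightarrow$(3)$\Rightarrow$(4)$\Rightarrow$(5) directly, where the paper cites Diekert, Gastin and Kufleitner.

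One small fix: $\eta_L(\Sigma^s)\supseteq\eta_L(\Sigma)$ by itself does not yield $\stab(\eta_L)=S(L)$. You also need that $\eta_L(\Sigma^s)$ is closed under products, which holds because $\eta_L(\Sigma^s)\eta_L(\Sigma^s)=\eta_L(\Sigma^{2s})=\eta_L(\Sigma^s)$.
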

\begin{proof}
Equivalence of $(2),(3),(4),(5)$ is given by  \cite{SurveySmall}, Theorem 1.

\noindent
	$(5)\Longrightarrow(1)$\quad Follows immediately from $(9)\Longrightarrow(1)$ of Theorem~\ref{T const main}.

\noindent$(1) \Longrightarrow  (2)$ Assume that $L$ is a regular neutral letter language.
	By Theorem~\ref{T const main} it suffices to prove that if 
$\eta_L\in \QESL$, then $L$ is idempotent and commutative.
Since $L$ has a neutral-letter, $S(L)$ is a monoid and moreover $\stab(\eta_L)=S(L)$. 
	Therefore $S(L)$ is in the semigroup variety $\bfE\bfJone$. 
	Since every monoid in $\bfE\bfJone$ is idempotent and commutative, 
	the claim follows.
\end{proof}

\subsection{From Algebra to Circuit Complexity Lower Bounds}
\label{S Lower Bound const}

This section is devoted to proving the following theorem.
\begin{theorem}\label{T const lower}
Let $L$ be a regular language whose syntactic morphism
	is not in $\QESL$. Then $\opc(L)\in\Omega(\log n)$.
\end{theorem}

\begin{definition}\label{D simC}
	Let $n\in\bbN_{>0}$ and let 
	$C=(G,\prec,\lambda,g_{out})$ be an $n$-circuit over the 
alphabet $\Sigma$. We define the following equivalence relation $\sim_C$ 
	on $[0,n-1]$:
	\begin{align}
		i\sim_C j\quad\defeq\quad
		\forall g\in G_1\forall \Gamma\subseteq\Sigma:
		(i,\Gamma)\prec g\Longleftrightarrow (j,\Gamma)\prec g
	\end{align}
\end{definition}

For convenience we identify an input gate $g\in G_0$ with its unique label
$\lambda(g)\in[0,n-1]\times 2^{\Sigma}$.
\begin{remark}\label{R simC}
Let $C$ be an $n$-circuit and let $P\subseteq[0,n-1]$ be an equivalence class of 
	$\sim_C$.
	Then for each $g\in G_1$ and each $\Gamma\subseteq\Sigma$, either 
	$(i,\Gamma)\prec g$ for all $i\in P$, or 
	$(i,\Gamma)\not \prec g$ for all $i\in P$. 
\end{remark}
The following proposition states that the circuit is invariant under content-preserving
relabeling of $\sim_C$-equivalent positions.

\begin{propositionrep}\label{P swap}
Let $C$ be an $n$-circuit over $\Sigma$ and let 
	$P\subseteq[0,n-1]$ be a set of $\sim_C$-equivalent positions.
	Let $w=a_0\dots a_{n-1}$ and $w'=a_0'\dots a_{n-1}'\in\Sigma^n$ be words 
	such that
	$a_i=a_i'$ for all $i\in[0,n-1]\setminus P$ and
	$\{a_i\mid i\in P\}=\{a_i'\mid i\in P\}$.
	Then $w\in L(C)$ if, and only if, $w'\in L(C)$.
\end{propositionrep}
\begin{appendixproof}
	Let $C=(G,\prec,\lambda,g_{out})$ be an $n$-circuit
and let $P\subseteq[0,n-1]$ be an equivalence class of 
	$\sim_C$. Let $w$ and $w'$ be as in the statement of the proposition. 
		For each equivalence
	class $J$ of $\sim_C$ and $\Gamma \subseteq \Sigma$  we have:
	\begin{align}
		w\models \bigvee_{j \in J} (j,\Gamma) 
		&\quad\Longleftrightarrow\quad 
\exists s \in J : a_s \in \Gamma\\[-10pt]
        &\quad\Longleftrightarrow\quad
\exists t \in J : a_t' \in \Gamma && \text{ (By assumption on $w$ and $w'$)}\\
&\quad\Longleftrightarrow\quad
		w'\models \bigvee_{j \in J} (j,\Gamma) 
        \label{eq:JGammavee}
	\end{align}
Likewise, 
\begin{align}
		w\models \bigwedge_{j \in J} (j,\Gamma) 
		&\quad\Longleftrightarrow\quad 
\forall s \in J : a_s \in \Gamma\\[-10pt]
        &\quad\Longleftrightarrow\quad
\forall t \in J : a_t' \in \Gamma && \text{ (By assumption on $w$ and $w'$)}\\
&\quad\Longleftrightarrow\quad
		w'\models \bigwedge_{j \in J} (j,\Gamma) 
        \label{eq:JGamma}
	\end{align}
    
By induction on
	$\prec^+$ we show that for all $g\in G_1$ we have
	$w\models g$ if, and only if, $w'\models g$.
    For the induction base, without loss of generality assume that $g\in G_1$ is an $\vee$-gate only connected to the input gates. Clearly, if $w\models g$ then there 
	exists $(j,\Gamma)\prec g$ such that $w \models (j,\Gamma)$. 
	Thus $w\models \bigvee_{i \sim_C j} (i,\Gamma)$ and therefore
$w'\models \bigvee_{i \sim_C j} (i,\Gamma)$ by (\ref{eq:JGammavee}).
    By  Remark~\ref{R simC} we have $(i,\Gamma)\prec g$ for all $i \sim_C j$. 
	Thus, $w' \models g$. 
	By a symmetric argument, if 
	$w' \models g$ then $w \models g$. When $g$ is an $\wedge$-gate the argument 
	is analogous. This completes the base case. 
    
For the induction step assume  an $\vee$-gate $g\in G_1$.
If $w\models g$ then $w\models g'$ for some $g' \prec g$. If $g' \in G_1$, then it follows that $w' \models g'$ by IH and hence $w' \models g$.  Otherwise if $g'=(j,\Gamma) \in G_0$, then $w \models \bigvee_{i \sim_C j} (i,\Gamma)  \Longleftrightarrow w' \models \bigvee_{i \sim_C j} (i,\Gamma)$ and  $w' \models g$ by repeating the argument in the base case. Thus $w' \models g$. By a symmetric argument, if $w' \models g$ then $w\models g$.
The case when $g$ is an $\wedge$-gate is analogous. 
\end{appendixproof}

\begin{definition}[Separating Circuit] \label{D separator}
Let $n\in\bbN_{>0}$ and let $L,L'\subseteq\Sigma^n$ be languages.
	An $n$-circuit $C$ over $\Sigma$
	\emph{separates} $(L,L')$ if $L\subseteq L(C)$ and
	$L'\cap L(C)=\emptyset$.
\end{definition}

The heart of our lower bound is the following proposition.
The two identities in the statement are obeyed by
all semigroups in $\ESL$ by Lemma~\ref{L cont EJ1}.

\begin{samepage}
	\begin{proposition}\label{P eqconst}
Let $L\subseteq\Sigma^+$ be a regular language. 
If $\stab(\eta_L)$
falsifies one or both of the identities below,
		then $\opc(L)\in\Omega(\log n)$.
		Let $e,f\in E(\stab(\eta_L))$ denote idempotents.
		\begin{enumerate}
			\item $epxeyqf=epyexqf$
			\item $epexeqf=epexexeqf$
		\end{enumerate} 
\end{proposition}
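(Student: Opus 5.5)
We argue by contradiction: assuming $\opc(L)\in o(\log n)$, we force a small circuit to identify enough positions that a swap flips membership. First we reduce to a stable setting. Let $s$ be the stability index of $\eta_L$ and consider the block alphabet $\Sigma^s$. Every element of $\stab(\eta_L)=\eta_L(\Sigma^s)$ is the image of a single letter of $\Sigma^s$. By Proposition~\ref{P stable reduce} we have $K_x\reduce L$ for every $x\in\stab(\eta_L)$, and finite unions of such $K_x$ also reduce to $L$ (the Boolean combinations compose). By Theorem~\ref{T asymptotics}(2), with the mild function $\log n$, it therefore suffices to prove the $\Omega(\log n)$ lower bound for a suitable language $K=(\eta_L\circ\flat_s)^{-1}(P)$ with $P\subseteq\stab(\eta_L)$.

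Next we fix the witnesses. Pick letters $\hat e,\hat f,\hat p,\hat q,\hat x,\hat y\in\Sigma^s$ mapping to a falsifying assignment $e,f,p,q,x,y$. In case (1) set $P=\{epxeyqf\}$; in case (2) set $P=\{epexeqf\}$. For large $n$, consider words of the form
\[
\hat e^{a}\,\hat p\,\hat e^{m_1}\,z_1\,\hat e^{m_2}\,z_2\,\cdots\,\hat e^{m_r}\,\hat q\,\hat f^{b},
\]
where each $z_i\in\{\hat x,\hat y,\hat e\}$ and the padding uses powers of the idempotent. Because $e$ and $f$ are idempotent, the $\eta_L$-image of such a word depends only on the sequence of non-$\hat e$ letters placed in the middle slots, up to the $e$-padding between them. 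For instance, if exactly one slot carries $\hat x$ and one carries $\hat y$, the image is $epxeyqf$ or $epyexqf$ according to their order. In case (2), one $\hat x$ gives $epexeqf$ and two $\hat x$'s (separated by $\hat e$) give $epexexeqf$.

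The key step is the following. Let $C$ be an $n$-circuit of size $g$ computing $K\cap(\Sigma^s)^n$. Each non-input gate reads some set of input gates, so $\sim_C$ partitions positions by the set of pairs $(g,\Gamma)$ adjacent to them. The number of such sets is at most $2^{g\cdot 2^{|\Sigma|^s}}$, which means $\sim_C$ has at most that many classes. If $g\le c\log n$ for a small constant $c$, the number of classes is $n^{o(1)}$ or at most $n^{1/2}$, so by pigeonhole some class $P'$ contains at least two positions $i<j$ that are separated by at least one gap, i.e.\ $j>i+1$. (The prefix and suffix markers need only a constant number of fixed positions, so we choose $i,j$ inside the middle zone, which is possible once the class has at least four members.) Now place letters as follows. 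In case (1), put $\hat x$ at $i$ and $\hat y$ at $j$, with $\hat e$ everywhere else in the middle; the swapped word puts $\hat y$ at $i$ and $\hat x$ at $j$. In case (2), use the word with $\hat x$ at $i$ and $\hat e$ at $j$, versus the word with $\hat x$ at both $i$ and $j$. To make the content sets over $P'$ agree in case (2), add a third position $k\in P'$ carrying $\hat x$ in both words, so that one word has one $\hat x$ from $\{i,k\}$ placed next to\ldots{} but this changes the count. Instead, compare $\hat x$ at $\{i,j\}$ with $\hat x$ at $\{i,j,k\}$ and $\hat e$ elsewhere in $P'$: contents over $P'$ are $\{\hat x,\hat e\}$ in both. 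Both the two-$\hat x$ and three-$\hat x$ patterns evaluate, via idempotence of $e$, to $epexexeqf$ and $epexexexeqf$, so we should instead compare one $\hat x$ (plus $\hat e$ on the rest of $P'$) against two $\hat x$'s. This requires $P'$ to contain $\hat e$ positions in both words, giving contents $\{\hat x,\hat e\}$ in both. In every case Proposition~\ref{P swap} says $C$ cannot distinguish the two words, yet exactly one of them lies in $K$ by the falsified identity. This is a contradiction, so $\opc(K)(n)>c\log n$ for all large $n$, and in particular $\opc(K)\in\Omega(\log n)$.

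The main obstacle is the bookkeeping in this step. We must guarantee that $P'$ lies inside the middle zone, that non-adjacency provides the separating $\hat e$'s, and that the content condition of Proposition~\ref{P swap} holds while the images differ, especially in case (2), where the positions of $P'$ not used by $\hat x$ must be filled with $\hat e$. The counting bound on the number of $\sim_C$-classes relative to the required $|P'|\ge 4$ also has to be checked.
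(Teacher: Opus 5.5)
Your case (2) comparison, once the false starts are removed, is correct and is the paper's own argument. It puts one $\hat x$ against two $\hat x$'s on three $\sim_C$-equivalent positions, with $\hat p,\hat q$ at fixed positions. This works because the identity $epexeqf=epexexeqf$ already contains the $e$-padding your words produce.

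Case (1), however, has a genuine gap. You fix $\hat p$ and $\hat q$ near the ends and put $\hat x,\hat y$ at positions $i<j$ chosen by the circuit. So in general there are copies of $\hat e$ between $\hat p$ and $\hat x$, and between $\hat y$ and $\hat q$. Idempotence only collapses $e^m$ to $e$; it does not let $p$ or $x$ absorb an $e$. Hence your two words evaluate to $epexeyeqf$ and $epeyexeqf$, not to $epxeyqf$ and $epyexqf$ as you claim. Failure of (1) does not make these two values differ.

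Concretely, take $L=\{a,b,c\}^*ab\{a,b,c\}^*$. Its stability index is $2$ and $\stab(\eta_L)=S(L)$. $S(L)$ consists of a zero (words containing $ab$) and the four classes determined by ``starts with $b$'' and ``ends with $a$''. For every idempotent $e$ one checks $eS(L)e=\{e,0\}$. Hence $exeye=eyexe$ and $exexe=exe$ hold identically. So your case (1) words and your case (2) words always have equal images, and your argument gives no lower bound for $L$. Yet (1) fails for $L$: with $e=f=q=y=\eta_L(cc)$, $p=\eta_L(ca)$ and $x=\eta_L(bc)$ we get $epxeyqf=0\neq epyexqf$.

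The missing idea is to let $\hat p$ and $\hat q$ travel with the chosen positions. Find $\sim_C$-equivalent $k,\ell$ with $2\le k$ and $k+2\le \ell\le n-3$. Then compare $\hat e^{k-1}\hat p\,\hat x\,\hat e^{\ell-k-1}\hat y\,\hat q\,\hat f^{n-\ell-2}$ with the word obtained by swapping $\hat x$ and $\hat y$. These words have length $n$ and differ only at $k$ and $\ell$, since positions $k-1$ and $\ell+1$ carry $\hat p,\hat q$ in both. Their images are exactly $epxeyqf$ and $epyexqf$. The pair may depend on $C$, because the circuit must be correct on every input of length $n$. This is precisely the paper's family $u_{n,i,j},v_{n,i,j}$ with $i=k-1$ and $j=\ell-k-1$.
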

\end{samepage}
\begin{proof}
Let us fix a regular language $L\subseteq\Sigma^+$,
let 
$S=\stab(\eta_L)$ be the stable semigroup of $\eta_L$,
	and let $s$ denote its stability index.
In the following let $\Gamma=\Sigma^s$ and let
	$\varphi=\eta_L\circ\flat_s:\Gamma^+\rightarrow S$.

\medskip

	\noindent (1)\quad 
	Assume that $S$ does {\em not} satisfy the identity $epxeyqf=epyexqf$
	for some $e,f\in E(S)$ and $p,q,x,y\in S$.
	Since $\varphi^{-1}(epxeyqf)\reduce L$ by Proposition~\ref{P stable reduce}
		it suffices to prove 
	$\opc(\varphi^{-1}(epxeyqf))\in\Omega(\log n)$
	by Point (2) of Theorem~\ref{T asymptotics}.

	Let us fix letters $a_e,a_f,a_p,a_q,a_x,a_y\in\Gamma$ such that
	$\varphi(a_b)=b$ for all $b\in\{e,f,p,q,x,y\}$.
	In the following assume $n\geq 6$. We define 
	$P_n=\{(i,j)\in[1,n-1]^2\mid i+j<n-4\}$
	and for each $(i,j)\in P_n$ we define the words
	\begin{align}
		u_{n,i,j}&=a_e^ia_pa_xa_e^{j}a_ya_qa_f^{n-(i+j+4)}
		\in\varphi^{-1}(epxeyqf)\\
		v_{n,i,j}&=a_e^ia_pa_ya_e^{j}a_xa_qa_f^{n-(i+j+4)}
		\not\in\varphi^{-1}(epxeyqf)
	\end{align}
	We remark that $|u_{n,i,j}|=|v_{n,i,j}|=n$.
	Let \begin{align}
		U=\{u_{n,i,j}\mid (i,j)\in P_n\}\text{ and }
	V=\{v_{n,i,j}\mid (i,j)\in P_n\}.
	\end{align}
	Clearly, every $n$-circuit computing $\varphi^{-1}(epxeyqf)\cap\Gamma^n$ must
	separate $(U,V)$.

	To prove Point (1) it thus suffices to show that every
	$n$-circuit $C$ that separates $(U,V)$
	satisfies $|C|\geq\log_{2^{2^{|\Gamma|}}}(n/6)$.
	For the sake of contradiction, let us assume that there is
	an $n$-circuit 
	$C=(G,\prec,\lambda,g_{out})$
	that separates $(U,V)$
	and that satisfies $|C|<\log_{2^{2^{|\Gamma|}}}(n/6)$.
	Note that $\sim_C$ from Definition~\ref{D simC} has at most
	\begin{align}
		\left(2^{2^{|\Gamma|}}\right)^{|G_1|}
	\end{align}
	equivalence classes.
	Since 
	\begin{align}
		|C|=|G_1|<\log_{2^{2^{|\Gamma|}}}(n/6)
\end{align}
	by assumption 
	we have 
	\begin{align}
	\left(2^{2^{|\Gamma|}}\right)^{|G_1|}<n/6.
\end{align}
By the pigeonhole principle among the positions $[0,n-1]$ there exist seven positions
that are all $\sim_C$-equivalent.
In particular, discarding the first two and last two positions,
among the positions $[2,n-3]$ there are at least three $\sim_C$-equivalent positions left.
Thus,
 there exist two $\sim_C$-equivalent positions $k,\ell\in[2,n-3]$ such that $k+2\leq\ell$.
Let $i=k-1$ and $j=\ell-k-1$. 
We claim that $(i,j)\in P_n$. Indeed, $i=k-1\geq 2-1=1\leq \ell-k-1=j$
and $i+j=(k-1)+(\ell-k-1)=\ell-2\leq n-5<n-4$.
We observe that
the letters $a_x$ and $a_y$ (\resp~$a_y$ and $a_x$) appear 
at position $k=i+1$ and $\ell=i+j+2$ respectively in the word $u_{n,i,j}$
(\resp~in the word $v_{n,i,j}$).
Moreover, positions $k$ and $\ell$ are the only positions
where $u_{n,i,j}$ and $v_{n,i,j}$ differ.
Since $k\sim_C \ell$ we have $u_{n,i,j}\in L(C)$ if, and only if,
$v_{n,i,j}\in L(C)$ by Proposition~\ref{P swap}, clearly
contradicting our assumption that $C$ separates $(U,V)$.
\medskip

	\noindent (2)\quad 
Let us now assume that the stable semigroup $S$ of the syntactic
morphism $\eta_L$ does not satisfy the identity $epexeqf=epexexeqf$.
As before, we choose letters $a_b\in\Gamma$ for each $b\in\{e,p,x,q,f\}$.
Assume $n>6$ in the following.
For each 
$i\in[1,n-6]$ we define the words
\begin{align}
	u_{n,i}=a_ea_pa_e^ia_xa_e^{n-(i+5)}a_qa_f \in \varphi^{-1}(epexeqf).
\end{align}
Moreover we define $Q_n=\{(i,j)\in[1,n]^2\mid i+j<n-6\}$ and for each 
$(i,j)\in Q_n$ the words
\begin{align}
	w_{n,i,j}=a_ea_pa_e^ia_xa_e^ja_xa_e^{n-(i+j+6)}a_qa_f\not\in \varphi^{-1}(epexeqf).
\end{align}
Fix any $(i,j)\in Q_n$.
Observe that $|u_{n,i}|=|w_{n,i,j}|=n$ and moreover $u_{n,i}$ 
can be obtained from $w_{n,i,j}$ by substituting
the letter $a_x$ at position $i+j+3$ by $a_e$.
Let 
\begin{align}
	X=\{u_{n,i}\mid i\in[1,n-6]\}\text{ and }
Y=\{w_{n,i,j}\mid (i,j)\in Q_n\}.
\end{align}
Thus, it suffices to show that each $n$-circuit $C$ that
separates $(X,Y)$ satisfies 
\begin{align}
	|C|\geq \log_{2^{2^{|\Gamma|}}}(n/8).
\end{align}
Towards a contradiction, let us assume that there is an $n$-circuit $C$ that 
separates $(X,Y)$ and that 
satisfies \begin{align}
	|C|<\log_{2^{2^{|\Gamma|}}}(n/8).  \end{align}
The following analysis is similar to the one in Point (1).
	By the pigeonhole principle there exist nine distinct
	positions that are all $\sim_C$-equivalent.
	Discarding the first three and the last three positions, there exist
	three $\sim_C$-equivalent positions $k<\ell<m$ in $[3,n-4]$, 
	so in particular $2\leq m-k$.
	Let $i=k-2$ and let $j=m-k-1$.
	We claim that $(i,j)\in Q_n$. Firstly, we have $i,j\geq 1$ since
	$i=k-2\geq 3-2=1=2-1\leq m-k-1=j$.
	Secondly, we have
	$i+j=(k-2)+(m-k-1)=m-3\leq n-7<n-6$.
	Recall that $u_{n,i}$ can be obtained from $w_{n,i,j}$ by
	replacing (the rightmost occurrence of) the letter $a_x$ 
	at position $m=i+j+3$ by $a_e$. Hence $u_{n,i}$ and $w_{n,i,j}$ 
	agree on all positions
	except for position $m=i+j+3$. In particular they agree
	on positions $k$ and $\ell$ that are labeled by the letters
	$a_x$ and $a_e$ respectively.
	Wrapping up, the words $u_{n,i}$ and $w_{n,i,j}$ agree on all positions
	in $[0,n-1]\setminus\{k,\ell,m\}$ and constitute the set 
	of letters $\{a_e,a_x\}$ on the set of $\sim_C$-equivalent
	positions $\{k,\ell,m\}$.
	By Proposition~\ref{P swap}
	we have $u_{n,i}\in L(C)$ if, and only if, $w_{n,i,j}\in L(C)$,
	contradicting the assumption that
	$C$ indeed separates $(X,Y)$.
	\end{proof}

The next lemma introduces further semigroup identities that
are helpful in linking the identities of $\ESL$ with the more specific ones that appeared
in Proposition~\ref{P eqconst}.

\begin{samepage}
	\begin{lemmarep}\label{L identities}
Let $e,f$ denote idempotents.
The following implications of identities hold true 
on all finite semigroups:
	\begin{align}
		&&  epxeyqf&=epyexqf\label{E exy}\\
		&\Longrightarrow&  epxyqf&=epyxqf\label{E xy}\\
		&\Longrightarrow& epxyqf&=epexeyeqf\label{E exeye}
	\end{align}
\end{lemmarep}
\end{samepage}
\begin{appendixproof}
	Let us first prove~(\ref{E exy}) implies
	(\ref{E xy}).
	We observe
	\begin{align}
		epxyqf &=\underbracket[.5pt]{e}\,\underbracket[.5pt]{e}\,\underbracket[.5pt]{e}\,
		\underbracket[.5pt]{e}\,\underbracket[.5pt]{p}\,\underbracket[.5pt]{xyq}\,\underbracket[.5pt]{f}&&\text{}\label{E rewrite first}\\
		&=\underbracket[.5pt]{e}\,\underbracket[.5pt]{e}\,\underbracket[.5pt]{p}\,
		\underbracket[.5pt]{e}\,\underbracket[.5pt]{e}\,\underbracket[.5pt]{xyq}\,\underbracket[.5pt]{f}&&\text{(Applying
		the identity~(\ref{E exy}))}\\
		&=\underbracket[.5pt]{ee}\,\underbracket[.5pt]{p}\,\underbracket[.5pt]{e}\,\underbracket[.5pt]{e}\,\underbracket[.5pt]{x}\,
		\underbracket[.5pt]{yq}\,\underbracket[.5pt]{f}\\
		&=\underbracket[.5pt]{ee}\,\underbracket[.5pt]{p}\,\underbracket[.5pt]{x}\,\underbracket[.5pt]{e}\,\underbracket[.5pt]{e}\,
		\underbracket[.5pt]{yq}\,\underbracket[.5pt]{f}&&\text{(Applying
		the identity~(\ref{E exy}))}\\
&=\underbracket[.5pt]{e}\,\underbracket[.5pt]{p}\,\underbracket[.5pt]{x}\,\underbracket[.5pt]{e}\,
		\underbracket[.5pt]{y}\,\underbracket[.5pt]{q}\underbracket[.5pt]{f}&&\text{(Rewriting $ee=e$)}
		\label{E rewrite last}\\
		&=\underbracket[.5pt]{e}\,\underbracket[.5pt]{p}\,\underbracket[.5pt]{y}\,\underbracket[.5pt]{e}\,
		\underbracket[.5pt]{x}\,\underbracket[.5pt]{q}\underbracket[.5pt]{f}&&\text{(Applying
		the identity~(\ref{E exy}))}
		\\
    &=epyxqf.&&\text{(Applying (\ref{E rewrite first}--\ref{E rewrite last})
		in reverse)}
	\end{align}
	This shows that the identity~(\ref{E exy}) implies (\ref{E xy}).
Next we prove~(\ref{E xy}) implies
	(\ref{E exeye}).

\begin{align}
epxyqf&=\underbracket[.5pt]{e}\,\underbracket[.5pt]{e}\,\underbracket[.5pt]{e}\,\underbracket[.5pt]{p}\,
		\underbracket[.5pt]{xyq}\,\underbracket[.5pt]{f}
		\\
        &=\underbracket[.5pt]{e}\,\underbracket[.5pt]{e}\,\underbracket[.5pt]{p}\,\underbracket[.5pt]{e}\,
		\underbracket[.5pt]{xyq}\,\underbracket[.5pt]{f}
		&&\text{(Applying
		the identity~(\ref{E xy}))}
		\\
        &=\underbracket[.5pt]{e}\,\underbracket[.5pt]{pe}\,\underbracket[.5pt]{e}\,\underbracket[.5pt]{x}\,
		\underbracket[.5pt]{yq}\,\underbracket[.5pt]{f}
		\\
        &=\underbracket[.5pt]{e}\,\underbracket[.5pt]{pe}\,\underbracket[.5pt]{x}\,\underbracket[.5pt]{e}\,
		\underbracket[.5pt]{yq}\,\underbracket[.5pt]{f}
		&&\text{(Applying
		the identity~(\ref{E xy}))}
		\\
        &=\underbracket[.5pt]{e}\,\underbracket[.5pt]{pexe}\,\underbracket[.5pt]{e}\,\underbracket[.5pt]{y}\,
		\underbracket[.5pt]{q}\,\underbracket[.5pt]{f}
		\\
    &=\underbracket[.5pt]{e}\,\underbracket[.5pt]{pexe}\,\underbracket[.5pt]{y}\,\underbracket[.5pt]{e}\,
		\underbracket[.5pt]{q}\,\underbracket[.5pt]{f}&&\text{(Applying
		the identity~(\ref{E xy}))}
		\\
    &=epexeyeqf.
\end{align}
\end{appendixproof}

We are now ready to give the proof of Theorem~\ref{T const lower}.

\subsubsection{Proof of Theorem~\ref{T const lower}}
\label{S T const proof}
\begin{proof}
	Let $L\subseteq\Sigma^+$ be regular.
Assume that its syntactic morphism
	$\eta_L:\Sigma^+\rightarrow S(L)$ is not in $\QESL$.
	Hence $\stab(\eta_L)$ does not satisfy
	one or both of the identities~(\ref{eq:Xeq1}) and
	(\ref{eq:Xeq2}). 
	In case $\stab(\eta_L)$ does not satisfy the identity (\ref{eq:Xeq2}),
	then it does not satisfy the identity $epxeyqf=epyexqf$
	by Lemma~\ref{L identities}. 
	By Point (1) of Proposition~\ref{P eqconst} we have $\opc(L)\in\Omega(\log n)$.

	Otherwise assume $\stab(\eta_L)$ satisfies the identity (\ref{eq:Xeq2}),
	that is	$epxyqf=epyxqf$, but not the identity (\ref{eq:Xeq1}), that is 
	$epxqf=epxxqf$.
	Due to the implication
(\ref{E xy}) $\Longrightarrow$(\ref{E exeye})
we may assume the identity (\ref{E exeye}).

Thus
	\begin{align}
		epxqf&=\underbracket[.5pt]{e}\,\underbracket[.5pt]{e}\,\underbracket[.5pt]{p}
		\,\underbracket[.5pt]{x}\,\underbracket[.5pt]{q}\,\underbracket[.5pt]{f}&&\\
		&=\underbracket[.5pt]{e}\,\underbracket[.5pt]{e}\,\underbracket[.5pt]{ep}
		\,\underbracket[.5pt]{ex}\,\underbracket[.5pt]{eq}\,\underbracket[.5pt]{f}&&\text{By (\ref{E exeye})}\\
		&=epexeqf
	\end{align}
	and 
	\begin{align}
		epxxqf&=epexexeqf&&\text{By (\ref{E exeye})}
	\end{align}
	Hence $\stab(\eta_L)$ does not satisfy the identity
	$epexeqf=epexexeqf$.
	By Point (2) of Proposition~\ref{P eqconst} 
	we conclude $\opc(L)\in\Omega(\log n)$.
\end{proof}

\subsection{From Algebra to Congruence}
\label{S Congruence const}

This section is devoted to proving the following proposition. 

\begin{proposition}\label{P const tech}
	Let $\varphi:\Sigma^+\to S$ be a stamp in $\QESL$ with stability index $s\in\bbN_{>0}$. 
	Let $N=|\stab(\varphi)|$. 
	For each $u,v \in \Sigma^{+}$ we have that 
	if $u \approx_{2sN} v$, then $\varphi(u)=\varphi(v)$. 
\end{proposition}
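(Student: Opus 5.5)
Let $k=2sN$ and let $u\approx_k v$ with $u\neq v$, so $|u|,|v|>2k$, $u=\alpha x\omega$ and $v=\alpha y\omega$ with $|\alpha|=|\omega|=k$, $|x|\equiv_k|y|$ and $\cont_{i,k}(x)=\cont_{i,k}(y)$ for all $i$. The plan is to cut everything into blocks of length $s$, so that the images of blocks lie in $S'=\stab(\varphi)$, which is in $\ESL$. Then we use Lemma~\ref{L cont EJ1} to show that the value of $\varphi$ on the middle part, sandwiched between idempotents, depends only on a content-type invariant.

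\textbf{Step 1: produce idempotents in the prefix and suffix.} The prefix $\alpha$ has length $2sN$, i.e.\ $2N$ blocks of length $s$, each mapping into $S'$. Consider the products $\varphi(\beta_1\cdots\beta_j)$ of the first $j$ blocks for $j=1,\dots,N+1$. By pigeonhole two of them coincide, which by the standard argument yields a factorization $\alpha=\alpha_1\alpha_2\alpha_3$ with $\varphi(\alpha_1)\varphi(\alpha_2)=\varphi(\alpha_1)$ and $\alpha_2$ a nonempty multiple of $s$ in length. Replacing $\alpha_2$ by a suitable power shows $\varphi(\alpha_1\alpha_2)=\varphi(\alpha_1)e$ with $e=\varphi(\alpha_2)^\omega\in E(S')$. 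Doing the same in $\omega$ gives $f\in E(S')$. We can in fact arrange that $\varphi(\alpha)=p_0 e p_1$ and $\varphi(\omega)=q_1 f q_0$ with $p_1,q_1$ images of words of length multiple of $s$. Since $u$ and $v$ share $\alpha$ and $\omega$, these factorizations are common to both.

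\textbf{Step 2: reduce to a statement about $S'^+$.} Write $\varphi(u)=p_0\,\pi(e\,W_u\,f)\,q_0$, where $W_u\in S'^+$ is obtained by chopping $p_1 x q_1$ (a word whose length is a multiple of $s$, possibly after shifting block boundaries into $\alpha,\omega$) into $s$-blocks and taking their images. Similarly define $W_v$. Note that $|x|\equiv|y| \bmod k$ guarantees that $p_1xq_1$ and $p_1yq_1$ are both divisible into $s$-blocks in the same phase. By Lemma~\ref{L cont EJ1} it suffices to show $\cont(W_u)\setminus\{e,f\}=\cont(W_v)\setminus\{e,f\}$. This content equality is too strong to hold directly, so I would instead use the freedom of inserting idempotents. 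By Corollary~\ref{E exyf}, $e\,W\,f$ may be rewritten with $e$ inserted between any two letters of $W$. The remaining task is to show that both $e W_u f$ and $e W_v f$ equal $e\,\prod_{z\in Z} z\,f$ for a common set $Z$ of block values, where $Z$ is determined by $\cont_{i,k}$ data.

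\textbf{Step 3 (main obstacle): showing block contents agree.} Block values depend on $s$ consecutive letters jointly, whereas $\approx_k$ only records which letters occur at each residue, not which $s$-tuples occur. The key idea is to use idempotency and commutativity (valid between $e$ and $f$) to \emph{split} blocks. I would show that inside $e\cdots f$ the product of a block $b_0\cdots b_{s-1}$ can be replaced by a product over single-letter contributions. To do this, pad with $e$: the word $e\,\varphi(\gamma b_j\gamma')\,e$ for fixed context words $\gamma,\gamma'$ taken from the prefix, of complementary lengths so that the total is a multiple of $s$, isolates the letter $b_j$ at residue $j$. Then, using $exyf=exeyf$ and $x\mapsto x^2$ (the $\ESL$ identities), one shows
\begin{align}
e\,\varphi(b_0\cdots b_{s-1})\,f=e\prod_{j}\varphi(\gamma_j b_j\gamma_j')\,f,
\end{align}
after duplicating the block $s$ times and commuting. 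This makes $\varphi(u)$ a function of the set of pairs (letter, residue mod $s$) occurring in $x$ together with $|x| \bmod s$, and both are determined by the $\approx_{k}$ data since $s\mid k$. The factor $2N$ (rather than $N$) in $k$ is needed to leave room in $\alpha$ and $\omega$ both for extracting $e,f$ and for providing the context words $\gamma_j,\gamma_j'$. Making this splitting rigorous is where I expect most of the work, and I would check it first on small cases such as $s=2$.
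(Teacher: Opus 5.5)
Your architecture matches the paper's: extract idempotents from the common prefix and suffix, sandwich the middle part, and finish with Lemma~\ref{L cont EJ1} once every $s$-block value depends on a single pair (residue, letter). However, Step~3 is the entire difficulty, it has a genuine gap, and the mechanism you propose for it cannot work.

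The identities of $\ESL$ hold in $\stab(\varphi)$, whose elements are images of whole $s$-blocks. Duplicating $\beta=\varphi(b_0\cdots b_{s-1})$ and commuting copies of it only ever yields products of $e$, $\beta$ and $f$. The values $\varphi(\gamma_jb_j\gamma_j')$ can only appear after re-cutting the underlying \emph{word} at a different block phase. That step uses the letters of $\Sigma$, which the identities alone cannot see, and your sketch does not contain it. Moreover, your displayed formula cannot hold for arbitrary context words taken from the prefix. It is only guaranteed when the contexts are pieces of the very word that represents the idempotent.

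The missing idea is as follows. Since $\stab(\varphi)=\varphi(\Sigma^s)$, the idempotent is $\varphi(p)$ for a single word $p=p_0\cdots p_{s-1}\in\Sigma^s$. Write $p=\ell_ir_i$ with $\ell_i=p_0\cdots p_{i\bmod s}$. The conjugate $e_i=r_ip\ell_i$ has length $2s$ and is idempotent, because $e_ie_i=r_ip(\ell_ir_i)p\ell_i=r_ip^3\ell_i=_\varphi e_i$. For $w=w'w''$ with $|w'|=i+1$ and $|w|\in s\bbN_{>0}$ one has
\[
p^3wp^3=\ell_i\,e_i\,(r_iw')(w''\ell_i)\,e_i\,r_i ,
\]
where $r_iw'$ and $w''\ell_i$ are nonempty with lengths divisible by $s$. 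So $exyf=exeyf$ applies in this \emph{shifted} blocking and gives $p^3wp^3=_\varphi p^3w'e_iw''p^3$, i.e.\ it inserts $e_i$ right after the letter $w_i$.

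Insert such an $e_i$ after every letter of $u$ and re-block at phase $0$. This turns $pup$ into a $\varphi$-equivalent word whose $s$-blocks are $p$ and, for each $i$, the word $p$ with its letter at position $i\bmod s$ replaced by $u_i$ (Lemmas~\ref{lemma:idem-insert} and~\ref{lemma:pup=pvp}). Only then is the block content determined by the sets $\cont_{r,s}(u)$, $r\in[0,s-1]$, so that Lemma~\ref{L cont EJ1} applies.

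Note also that the contexts come from $p$ itself, so no extra room in $\alpha,\omega$ is needed for them. In the paper, the factor $2$ in $2sN$ serves a different purpose: it allows stripping a common suffix of length $<sN$, reducing to $\approx_{sN}$-equivalent words whose lengths are divisible by $sN$.
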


Before we discuss the proof strategy for Proposition~\ref{P const tech}
let us show that implication
$(3)\Longrightarrow (4)$ of Theorem~\ref{T const main} 
can be derived from Proposition~\ref{P const tech}.

\begin{corollary}\label{C const main}
If the syntactic morphism $\eta_L$ of a regular language
$L\subseteq\Sigma^+$ is in $\QESL$, then 
	there exists a $k\geq 1$ such that $L$ is a union of 
$\approx_k$-classes.
\end{corollary}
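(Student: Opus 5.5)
The plan is to apply Proposition~\ref{P const tech} directly to the syntactic stamp $\eta_L:\Sigma^+\to S(L)$. Recall that $\eta_L$ is surjective onto the finite semigroup $S(L)$, so it is a stamp in the sense of \S\,\ref{S Semigroups}. By assumption it lies in $\QESL$. Let $s$ be its stability index and $N=|\stab(\eta_L)|$, and set $k=2sN$. Since $s\geq 1$ and $\stab(\eta_L)$ is non-empty, we have $k\geq 1$.

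Proposition~\ref{P const tech} then says that $u\approx_k v$ implies $\eta_L(u)=\eta_L(v)$. Hence every $\approx_k$-class is contained in a single fiber $\eta_L^{-1}(x)$ with $x\in S(L)$. Because $L$ is recognized by its syntactic morphism, we can write $L=\eta_L^{-1}(\eta_L(L))$. So $L$ is a union of fibers of $\eta_L$, and each fiber is a union of $\approx_k$-classes. Concretely, if $u\in L$ and $u\approx_k v$, then $\eta_L(v)=\eta_L(u)\in\eta_L(L)$, which gives $v\in L$. Therefore $L$ is a union of $\approx_k$-classes. That $\approx_k$ is a congruence, and in particular an equivalence relation, is already guaranteed by the lemma in \S\,\ref{sec:approx_k}, so the notion of class makes sense.

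There is no real obstacle at this step, since all the substance is in Proposition~\ref{P const tech}. The only point to check is that its hypotheses are met: $\eta_L$ must be a stamp in $\QESL$, and the parameters $s$ and $N$ must be exactly those of $\eta_L$.
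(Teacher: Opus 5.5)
Your proposal is correct and matches the paper's proof: both set $k=2sN$, invoke Proposition~\ref{P const tech} to conclude that $\approx_k$-equivalent words have the same image under $\eta_L$, and deduce that $L=\eta_L^{-1}(\eta_L(L))$ is a union of $\approx_k$-classes. Your extra checks, that $\eta_L$ is a stamp and that $k\geq 1$, are fine but add nothing essential.
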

\begin{proof}
Assume the syntactic morphism 
		$\eta_L:\Sigma^+\rightarrow S(L)$ of $L$ is in $\QESL$.
		Let $s\in\bbN_{>0}$ be the stability index of $\eta_L$ 
		and let $N=|\stab(\eta_L)|$.
	By Proposition~\ref{P const tech} 
	for $k=2sN$ for all $u,v\in\Sigma^+$ we have that
	if $u\approx_k v$, then $\eta_L(u)=\eta_L(v)$. 
		Thus $\eta_L^{-1}(x)$ is a union of $\approx_k$-classes 
		for each $x\in S(L)$. 
		Hence $L=\eta_L^{-1}(\eta_L(L))$ itself is a 
		union of equivalence classes of $\approx_k$.
\end{proof}

For proving Proposition~\ref{P const tech} it
comes in handy to give an alternative formulation of how a stamp
maps words whose length is a multiple of its stability index
to elements of its stable semigroup.
Let $\varphi:\Sigma^+\to S$ be a stamp with stability index $s\in\bbN_{>0}$. Let $\sigma_\varphi: (\Sigma^s)^+\to (\stab(\varphi))^+$ be the unique letter-to-letter morphism defined by the map $\sigma_\varphi(w)=\varphi(w)$ for $w \in \Sigma^s$. Let $\pi$ be the evaluation morphism on $\stab(\varphi)$. Notice that for each word $w\in \Sigma^{s\bbN_{>0}}$,
\begin{align}
   \varphi(w)=\pi\circ \sigma_\varphi(\block_s(w))\label{E pi-sigma}.
\end{align}
\begin{samepage}
\begin{definition}
In the context of a morphism $\varphi:\Sigma^+\to S$ we write
$u=_\varphi v$ to mean $\varphi(u)=\varphi(v)$. 
A word $w\in\Sigma^+$ is {\em idempotent} if $\varphi(w)$ is idempotent, i.e., $w=_\varphi ww$.
\end{definition}
\end{samepage}

The heart of our argument for proving Proposition~\ref{P const tech} will be the next lemma.
It states that if $s$ is the stability index of the morphism and if 
$p\in\Sigma^s$ is an idempotent, then any
conjugate of $p^2$ is an idempotent as well; moreover it states
that given any word $u\in\Sigma^{s\bbN_{>0}}$ we have
$\varphi(pup)=\varphi(pu'p)$, where $u'$ is a word obtained
from $u$ by introducing a suitably shifted conjugate 
of $p^2$ after every letter of $u$.

\begin{samepage}
\begin{lemma}\label{L core algebra}
	Let $\varphi : \Sigma^+ \to S$ be a stamp whose
	stable semigroup satisfies the identity $exyf=exeyf$ with stability index $s\in\bbN_{>0}$. 
Let $p=p_0\ldots p_{s-1}\in \Sigma^s$ be an idempotent and 
for each $i\in\bbN$ let
\begin{align}
e_i = \begin{cases} 
p^2 &\text{if $i\equiv_s -1$,}\\
	p_{(i+1) \bmod s}\ldots p_{s-1} p p_0\ldots p_{i \bmod s} &\text{otherwise.}
\end{cases}
\end{align}
	Then $e_i$ is an idempotent as well for each $i\in\N$.
	Moreover, for all $u=u_0\ldots u_{m-1} \in \Sigma^{s\bbN_{>0}}$
	we have $pup=_\varphi pu_0e_0u_1e_1\ldots u_{m-1}e_{m-1}p$.
\label{lemma:idem-insert}
\end{lemma}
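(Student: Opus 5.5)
The plan is to prove both claims by direct manipulation of words. For the first claim, fix $i$, let $i'=i \bmod s$, and split $p=x_iy_i$ with $x_i=p_0\ldots p_{i'}$ and $y_i=p_{i'+1}\ldots p_{s-1}$. When $i'=s-1$ we have $x_i=p$ and $y_i=\varepsilon$. In every case
\begin{align}
e_i=y_ix_iy_ix_i=y_i\,p\,x_i,
\end{align}
which agrees with the definition; in particular $e_i=p^2$ when $i\equiv_s-1$. Then
\begin{align}
e_ie_i=y_i\,p\,x_iy_i\,p\,x_i=y_i\,p^3\,x_i=_\varphi y_i\,p\,x_i=e_i,
\end{align}
because $\varphi(p)$ is idempotent. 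So each $e_i$ is idempotent. Moreover $|e_i|=2s$, so $\varphi(e_i)\in\stab(\varphi)$.

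For the second claim I would use the identity $exyf=exeyf$ in the following form. Let $A\,E\,X\,Y\,F\,B$ be a word in which $E$ and $F$ are idempotent words whose lengths are multiples of $s$, and $X,Y$ have lengths in $s\bbN_{>0}$. Then $\varphi(X),\varphi(Y),\varphi(E),\varphi(F)\in\stab(\varphi)$, and multiplying the identity by $\varphi(A)$ and $\varphi(B)$ shows that a copy of $E$ may be inserted between $X$ and $Y$. In words: an idempotent $E$ can be copied to any later point whose distance from the end of $E$ is a positive multiple of $s$, provided some idempotent $F$ begins at a positive multiple of $s$ after that point.

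Next I would build the right-hand side from $p u p$, inserting $e_0,e_1,\ldots,e_{m-1}$ one at a time. First replace $pup$ by $p^3up^3$, which is valid since $p=_\varphi p^3$. Suppose $e_0,\ldots,e_{i-1}$ are already inserted after $u_0,\ldots,u_{i-1}$. Since each $|e_j|=2s$, these insertions do not change any residue mod $s$. Write the leading $p^3$ as $x_i\,(y_ix_iy_ix_i)\,y_i$, so that $e_i$ occurs inside it. The distance from the end of this occurrence to the point just after $u_i$ is $|y_i|+(i+1)+2si\equiv (s-i'-1)+(i'+1)\equiv 0 \pmod s$. Likewise write the trailing $p^3$ as $x_i\,(y_ix_iy_ix_i)\,y_i$ and take $F=e_i$ to be the occurrence starting after the first $x_i$. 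The distance from the point after $u_i$ to the start of $F$ is $(m-1-i)+2s(m-1-i)+|x_i|\equiv -1-i+i'+1\equiv 0\pmod s$, using $s\mid m$.

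Both distances are positive for $s\ge1$, after adding $s$ if necessary (possible since we have spare copies of $p$). So the insertion principle lets us copy $e_i$ right after $u_i$. After all $m$ steps we collapse $p^3$ back to $p$ at both ends and obtain $pup=_\varphi pu_0e_0\ldots u_{m-1}e_{m-1}p$.

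The main obstacle is the bookkeeping of offsets. The identity only permits insertions at distances that are multiples of $s$ from an idempotent on each side. That is why conjugates of $p^2$, rather than $p$ itself, must be inserted, and why both ends are padded to $p^3$ so that correctly shifted idempotent factors exist on the left and on the right. The congruence computations above are what make this work, including the boundary cases $i'=s-1$ and $i=m-1$.
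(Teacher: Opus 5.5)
Your proposal is correct and is essentially the paper's proof. It uses the same factorization $e_i=r_i\,p\,\ell_i$ (your $y_i\,p\,x_i$) to get idempotency, and the same single application of $exyf=exeyf$ with $e=f=e_i$ taken as the conjugate occurrence inside $p^3=\ell_i e_i r_i$ at both ends, iterated over $i$.

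One slip in the bookkeeping: the term $2s(m-1-i)$ in your right-hand distance should not be there, since $e_{i+1},\dots,e_{m-1}$ have not yet been inserted at step $i$. It is $\equiv 0 \pmod s$, so your conclusion stands. Also, both distances are already positive multiples of $s$, so no extra padding by $s$ is needed.
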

\end{samepage}
\begin{proof}
For each $i\in\bbN$ we define the following:
\begin{align}
\ell_i&=p_0\dots p_{i \bmod s}\in\Sigma^+\\
	r_i&=(\ell_i)^{-1}p=p_{(i\bmod s) +1}\dots p_{s-1}\in\Sigma^*
\end{align}
Clearly, $\ell_ir_i=p$, $|\ell_i|\equiv_s i+1$, and
	$|r_i|\equiv_s -(i+1)$.
In particular, $\ell_i=p$ and $r_i=\varepsilon$
	whenever $i\equiv_s s-1$.
		
	We further remark that $e_i=(r_i\ell_i)^2=r_i p\ell_i$,
	so in particular $|e_i|\equiv_s 0$ and $\varphi(e_i)\in\stab(\varphi)$.
	Moreover, $\ell_ie_ir_i= \ell_ir_ip\ell_ir_i=p^3$.
	Bearing in mind that $p$ is an idempotent we observe
\begin{align}
    e_ie_i&=
    r_{i}p\ell_ir_{i}p\ell_i\\
    &=_\varphi r_{i}p^3\ell_i\\
    &=_\varphi r_{i}p\ell_i\\
    &= e_i,
\end{align}
thus $e_i$ is idempotent for each $i\in\N$.

	We claim that for all words $v=v_0\ldots v_{n-1}\in \Sigma^{s\bbN_{>0}}$ 
	and for all $i \in [0,n-1]$ we have
\begin{align} 
\label{eq:idem-insert}
	p^3vp^3&=_\varphi p^3\alpha_{i+1}(v)e_i \omega_{n-(i+1)}(v)p^3.
\end{align}

	Let $v'=\alpha_{i+1}(v)=v_0\ldots v_i$ and $v''=\omega_{n-(i+1)}(v)=v_{i+1}\ldots v_{n-1}$. 
	Therefore $|r_{i}v'|\equiv_s -(i+1)+(i+1)\equiv_s 0$. 
	Likewise $|v''\ell_i|\equiv_s n-(i+1)+(i+1)\equiv_s 0$
	since $s|n$.
Hence, both $\varphi(r_{i}v')$ and  $\varphi(v''\ell_i)$ 
are in $\stab(\varphi)$.
	Using the identity $exyf=exeyf$ 
	to show the equality between (\ref{E Cor exyf1}) and
	(\ref{E Cor exyf2}), we observe that 
\begin{align}
p^3vp^3
&=\ell_i e_i r_{i}v'v'' \ell_i e_i r_{i}\\
	&=\ell_i\underbracket[.5pt]{e_i} 
	\underbracket[.5pt]{r_{i} v'}
	\underbracket[.5pt]{v''\ell_i} 
	\underbracket[.5pt]{e_i} r_{i}\label{E Cor exyf1}\\
	&=_\varphi \ell_i e_i r_{i} v'e_iv''\ell_i e_i r_{i}\label{E Cor exyf2}\\
	&=_\varphi p v'e_i v'' p\\
&=_\varphi p^3v'e_iv''p^3.
\end{align}
Hence the claim is proved. 

Let $u=u_0\ldots u_{m-1}\in \Sigma^{s\bbN_{>0}}$ 
	be as stated in the lemma.
Since $pup=_\varphi p^3up^3$, to finish the proof, it suffices to show 
	that $p^3up^3=_\varphi p^3u_0e_0u_1e_1\ldots u_{m-1}e_{m-1}p^3$. 

Let $u^{(i)}=(\prod_{j=0}^{i}u_je_j)  \omega_{m-(i+1)}(u)$ for 
each $i\in [0,m-1]$. 
We prove by induction on $i$ that 
$p^3up^3=_\varphi p^3 u^{(i)}p^3$. For the induction base, 
$p^3up^3=_\varphi p^3u^{(0)}p^3$ by 
(\ref{eq:idem-insert}). For the inductive step, for all $i\in[0, m-2]$
we have
\begin{align}
	p^3up^3&=_\varphi p^3u^{(i)}p^3&&
	\text{(By induction hypothesis)}\\
	&=_\varphi p^3\big(\prod_{j=0}^{i}u_je_j\big)\omega_{m-(i+1)}(u)p^3\\
	&=_\varphi p^3\big(\prod_{j=0}^{i+1}u_je_j\big) \omega_{m-(i+2)}(u)p^3 &&\text{(Applying (\ref{eq:idem-insert}) to the word $u^{(i)}$)}\\
	&=_\varphi p^3u^{(i+1)}p^3.
\end{align}
 Therefore, $pup=_\varphi p^3up^3=_\varphi p^3u^{(m-1)}p^3=_\varphi pu_0e_0u_1e_1\ldots u_{m-1}e_{m-1}p$, as required.
\end{proof}

The following is a technical lemma used in the proof of Proposition~\ref{P const tech}.

\begin{lemma}
	Let $\varphi : \Sigma^+ \to S$ be a stamp in $\QESL$ with stability index $s\in\bbN_{>0}$. 
	Let $u,v \in \Sigma^{s\bbN_{>0}}$ be  words such that 
	$u \congC_s^1 v$.
Then $pup=_\varphi pvp$
for all idempotents $p\in \Sigma^s$.
\label{lemma:pup=pvp}
\end{lemma}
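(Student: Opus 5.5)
The plan is to use Lemma~\ref{L core algebra} to interleave $u$ and $v$ with idempotents, and then apply Lemma~\ref{L cont EJ1} in $\stab(\varphi)$, which lies in $\ESL$. Fix an idempotent $p\in\Sigma^s$. Write $u=u_0\ldots u_{m-1}$ and $v=v_0\ldots v_{m'-1}$, with $m,m'\in s\bbN_{>0}$. Corollary~\ref{E exyf} says the identity $exyf=exeyf$ holds in $\ESL$, so the hypothesis of Lemma~\ref{L core algebra} is met. That lemma gives
\begin{align*}
pup&=_\varphi pu_0e_0u_1e_1\ldots u_{m-1}e_{m-1}p,\\
pvp&=_\varphi pv_0e_0v_1e_1\ldots v_{m'-1}e_{m'-1}p.
\end{align*}
The $e_i$ here are idempotents of length $\equiv_s 0$, and each $e_i$ depends only on $i\bmod s$.

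The next step is to regroup each product into blocks of length $s$, so that it becomes a word over $\stab(\varphi)$ to which the evaluation morphism $\pi$ applies. I group each letter $u_i$ with the idempotent conjugate $e_{i-1}$ preceding it. Put $c_i=\varphi(e_{i-1}u_ie_i)$ for $i\bmod s\neq 0$, and take suitable boundary terms otherwise. Then $e_{i-1}u_ie_i$ has length $\equiv_s 1\not\equiv 0$ in general, which is not a multiple of $s$.

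A better choice is to write $e_{i-1}=r_{i-1}p\ell_{i-1}$ as in the proof of Lemma~\ref{L core algebra}. Then $\ell_{i-1}u_ir_i$ has length $i+1-1-i \equiv_s 0$ when $i\not\equiv_s 0$, and the same congruence holds in the other cases. Hence $pup=_\varphi \pi(p\,x_0\,p\,x_1\,p\cdots x_{m-1}\,p)$ with $x_i=\varphi(\ell_{i-1}u_ir_i)\in\stab(\varphi)$, writing $p$ for $\varphi(p)$. The factor $x_i$ is determined by the pair $(i\bmod s,u_i)$, because $\ell_{i-1}$ and $r_i$ depend only on $i\bmod s$. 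The degenerate indices ($i=0$, where $\ell_{-1}$ is to be read as $p$ or $\varepsilon$ depending on the residue) need checking, but the structure is uniform.

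Now everything is a word over $\stab(\varphi)$ of the form $e\,w\,e$, where $e=\varphi(p)$ is idempotent and $w$ consists of the letters $x_i$ interleaved with copies of $e$. The content of $w$ minus $\{e\}$ is $\{\varphi(\ell_{j-1}ar_j)\mid j\in[0,s-1],a\in\cont_{j,s}(u)\}$, up to possibly $e$ itself. The hypothesis $u\congC_s^1 v$ says exactly that $\cont_{j,s}(u)=\cont_{j,s}(v)$ for every $j$ (the length condition is automatic since both lengths are in $s\bbN$). So the two words over $\stab(\varphi)$ have the same content outside $\{e\}$. Lemma~\ref{L cont EJ1} with $e=f=\varphi(p)$ then gives $pup=_\varphi pvp$.

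I expect the main obstacle to be the index bookkeeping in the regrouping step. It has to be checked that every block $\ell_{i-1}u_ir_i$, together with the surrounding copies of $p$, has length divisible by $s$, including the wrap-around residues where $\ell$ or $r$ is $p$ or empty. It also has to be checked that the block letters depend only on the residue and the letter $u_i$, so that content equality transfers. If this regrouping proves awkward, I would instead prove directly that the $e_i$-interleaved words can be cut into $s$-blocks of the form $(\text{conjugate of }p)\cdot(\text{letter})$.
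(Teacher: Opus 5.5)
Your proposal is correct and follows the paper's proof essentially step by step. Like the paper, you apply Lemma~\ref{L core algebra}, rewrite $e_i=r_ip\ell_i$ so that $pu'p$ splits into $s$-blocks $\ell_{i-1}u_ir_i$ separated by copies of $p$, transfer $\cont_{j,s}(u)=\cont_{j,s}(v)$ via the block letters, and conclude with Lemma~\ref{L cont EJ1} for $e=f=\varphi(p)$. The one bookkeeping point you flagged is settled by reading $\ell_{-1}$ as $p$ uniformly, which is legitimate because $\varphi(p\,u_0r_0)=\varphi(p\,p\,u_0r_0)$ by idempotency of $p$. With that convention each block letter depends only on $(i\bmod s,u_i)$, which is exactly what the paper's $\lambda'_i(a)=p\lambda_i(a)$ for $i\equiv_s 0$ achieves.
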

\begin{proof}
	Let $u,v \in \Sigma^{s\bbN_{>0}}$ be such that 
	$u \congC_s^1 v$. 
	Assume that $u=u_0\ldots u_{m-1}\in \Sigma^{s\bbN_{>0}}$ 
	and $v=v_0\ldots v_{n-1}\in \Sigma^{s\bbN_{>0}}$, where $m,n\in s\bbN_{>0}$. 
Let $p=p_0\ldots p_{s-1}\in \Sigma^s$ be idempotent and for each $i\in\bbN$ let
\begin{align}
e_i = \begin{cases} 
p^2 &\text{if $i\equiv_s -1$,}\\
	p_{(i+1) \bmod s}\ldots p_{s-1} p p_0\ldots p_{i \bmod s} &\text{otherwise,}
\end{cases}
\end{align}
	as stated in Lemma~\ref{lemma:idem-insert}.
We have $pup=_\varphi pu'p$,
	where $u'=u_0e_0u_1e_1\ldots u_{m-1}e_{m-1}$
by \Cref{lemma:idem-insert}.
Recall that $\ell_i= p_0\ldots p_{i\bmod s}$,
	$r_i=(\ell_i)^{-1}p$ and $e_i=r_ip\ell_i$ for each $i\in\bbN$.

	For $i\in \bbN$ we define $\lambda_i(a) \in \Sigma^s$ and $\lambda'_i(a)$ as follows:
    \begin{align}
        \lambda_i(a) &= \begin{cases}
            ar_0 = a p_1 \ldots p_{s-1} &\text{if $i\equiv_s 0$}\\
            \ell_{i-1} a r_i = p_0 \ldots p_{(i-1)\bmod s} a p_{(i \bmod s)+1} \ldots p_{s-1} &\text{otherwise}
            \end{cases}\\
             \lambda'_i(a) &= \begin{cases}
            p \lambda_i(a) = \ell_{s-1}ar_0 &\text{if $i\equiv_s 0$}\\
           \lambda_i(a) = \ell_{i-1} a r_i &\text{otherwise}
        \end{cases}
    \end{align} 
We observe that
\begin{align}
	pu'&= p\prod_{i=0}^{m-1} u_ie_i  \\
&= p\prod_{i=0}^{m-1} u_ir_{i}p\ell_i \\
&= pu_0r_0p\ell_0 \prod_{i=1}^{m-1}  u_{i}r_{i}p\ell_{i} \\
&= pu_0r_0 p \Big(\prod_{i=1}^{m-1}  \ell_{i-1} u_{i}r_{i}p \Big)\ell_{m-1} \\
&= \lambda'_0(u_0) p \Big( \prod_{i=1}^{m-1} \lambda'_i(u_i) p \Big) p\\
&= \Big( \prod_{i=0}^{m-1} \lambda'_i(u_i) p \Big) p.
\end{align}

Again by Lemma~\ref{lemma:idem-insert} we obtain that 
	$pvp=_\varphi pv'p$,
	where $v'=v_0e_0v_1e_1\ldots v_{n-1}e_{n-1}$. Analogously as above
	we have
\begin{align}
 pv'&= p \prod_{i=0}^{n-1} v_ie_i 
=  \Big( \prod_{i=0}^{n-1} \lambda'_i(v_i) p \Big) p.
\end{align}
	Recall that since $u\congC_s^1 v$ we have $\cont_{i,s}(u)=\cont_{i,s}(v)$
	for all $i\in[0,s-1]$.
	We note that 
\begin{align}
    \cont(\block_s(pu'p)) &=\{\lambda_i(u_i)\mid i \in [0,m-1]\}\cup\{p\}
	& &\text{(Since $\lambda_i(u_i)\in\Sigma^s$)}\\
    &=\{\lambda_i(v_i)\mid i \in [0,n-1]\}\cup\{p\}&& (\text{Since $\cont_{i,s}(u) = \cont_{i,s}(v)$})\\
    &= \cont(\block_s(pv'p)) \label{E cont-pvp}.
\end{align}

Finally we have
	\begin{align}
		\varphi(pup)&=\varphi(pu'p)\\
        &=\pi\circ \sigma_\varphi(\block_s(pu'p)) &&(\text{By (\ref{E pi-sigma})})\\
        &=\pi\circ \sigma_\varphi(\block_s(pv'p)) &&(\text{By (\ref{E cont-pvp}) \& Lemma~\ref{L cont EJ1}})\\
        &=\varphi(pv'p)\\
        &=\varphi(pvp).
	\end{align}
\end{proof}

\begin{samepage}
\begin{lemma}[Proposition~6.34 in \cite{Pin2025Mathematical}]
\label{lem:prefidem}
Let $S$ be a finite semigroup and let $n = |S|$. 
	For every sequence $s_0, \dots, s_{n-1}$ of $n$ elements of $S$, 
	there exists
	an index $i \in [0,n-1]$ and an idempotent $e \in E(S)$ such that 
	$s_0 \cdots s_i \, e \;=\; s_0 \cdots s_i$
	and dually, there exists an index $j \in[0,n-1]$ 
	and an idempotent $f \in E(S)$ such that
	$f \, s_j s_{j+1} \cdots s_{n-1} \;=\; s_j s_{j+1} \cdots s_{n-1}$.
\end{lemma}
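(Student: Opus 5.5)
The plan is a pigeonhole argument on the $|S|+1$ prefix products, using the standard fact that some power of any element of a finite semigroup is idempotent. By the duality $S\mapsto S^{\mathrm{op}}$, the second claim (with $f$ on the left of a suffix product) follows from the first, so I will only prove the first.

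Let $n=|S|$ and consider the partial products $t_i=s_0\cdots s_i$ for $i\in[0,n-1]$. These are $n$ elements of $S$, so pigeonhole does not directly force a repetition. I therefore split into two cases.

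\emph{Case 1: the $t_i$ are not pairwise distinct.} Then $t_i=t_j$ for some $i<j$. Write $x=s_{i+1}\cdots s_j$, so that $t_i x=t_i$. Iterating gives $t_i x^m=t_i$ for every $m\geq 1$. Choosing $m$ so that $x^m=x^\omega$ is idempotent and setting $e=x^\omega$, we get $t_i e=t_i$.

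\emph{Case 2: the $t_i$ are pairwise distinct.} Then $\{t_0,\dots,t_{n-1}\}=S$, so every element of $S$ is a prefix product; in particular each idempotent is some $t_i$. Since a finite semigroup has at least one idempotent, take $e=t_i$ idempotent. Then $t_i e=ee=e=t_i$.

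In both cases the required index and idempotent exist. I do not expect a real obstacle here. The only subtle point is noticing that the count of $n$ products (rather than $n+1$) is handled by Case 2, where surjectivity of the map $i\mapsto t_i$ directly supplies an idempotent prefix product.
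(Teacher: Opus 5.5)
Your proof is correct. The paper gives no proof of its own and simply cites Proposition~6.34 of Pin's notes, and your argument is essentially the standard one behind that citation: either two prefix products $s_0\cdots s_i = s_0\cdots s_j$ coincide and $e=(s_{i+1}\cdots s_j)^\omega$ works, or the $n$ prefix products exhaust $S$ and one of them is idempotent. Your reduction of the dual claim to $S^{\mathrm{op}}$ with the reversed sequence is valid, since $S$ and $S^{\mathrm{op}}$ have the same idempotents and the index $j=n-1-i'$ stays in $[0,n-1]$.
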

\end{samepage}

The following lemma is a reformulation of the previous one tailored towards stamps.

\begin{samepage}
	\begin{lemma}\label{L pq idem}
    Let $\varphi : \Sigma^+ \to S$ be a stamp with stability index 
	$s\in\bbN_{>0}$ and let $N=|\stab(\varphi)|$ be the size of its stable semigroup.
	Then for all words $u\in\Sigma^{s\N_{>0}}$ with $|u|\geq 2sN$ there exist idempotents
	$p,q\in\Sigma^s$ and a factorization $u=xu'y$
	such that
	$x=_\varphi xp$, $qy=_\varphi y$, $|x|=(i+1)\cdot s$, 
	and $|y|=(j+1)\cdot s$ for some $i,j\in[0,N-1]$.
\end{lemma}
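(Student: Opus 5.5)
We derive this from Lemma~\ref{lem:prefidem}, applied to the stable semigroup $T=\stab(\varphi)$ with $N=|T|$, using the blocks of $u$ of length $s$ as the sequence.

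Write $\block_s(u)=\overline{v_0}\ldots\overline{v_{m-1}}$ with $m=|u|/s\geq 2N$. Each $\overline{v_k}\in\Sigma^s$, so $\varphi(\overline{v_k})\in\varphi(\Sigma^s)=T$. Apply Lemma~\ref{lem:prefidem} to the first $N$ blocks $t_k=\varphi(\overline{v_k})$, $k\in[0,N-1]$. This yields an index $i\in[0,N-1]$ and an idempotent $e\in E(T)$ with $t_0\cdots t_i\, e=t_0\cdots t_i$. Set $x=\overline{v_0}\ldots\overline{v_i}$, so that $|x|=(i+1)s$.

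Dually, apply the second half of Lemma~\ref{lem:prefidem} to the last $N$ blocks $t'_k=\varphi(\overline{v_{m-N+k}})$. This yields an index $j'\in[0,N-1]$ and an idempotent $f\in E(T)$ with $f\,t'_{j'}\cdots t'_{N-1}=t'_{j'}\cdots t'_{N-1}$. Set $y=\overline{v_{m-N+j'}}\ldots\overline{v_{m-1}}$; its length is $(N-j')s=(j+1)s$ with $j=N-1-j'\in[0,N-1]$.

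Because $m\geq 2N$, the prefix $x$ uses at most $N$ blocks from the front and $y$ at most $N$ blocks from the back, so they do not overlap. We then define $u'$ as the remaining middle part, giving $u=xu'y$ with $u'\in\Sigma^*$. If $u'=\varepsilon$ is not allowed, note that one could instead demand $|u|>2sN$, but the statement as given permits $u'\in\Sigma^*$.

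Next we realise the idempotents by words of length $s$. Since $e\in T=\varphi(\Sigma^s)$, choose $p\in\Sigma^s$ with $\varphi(p)=e$; then $\varphi(pp)=e^2=e=\varphi(p)$, so $p$ is an idempotent word. Similarly choose $q\in\Sigma^s$ with $\varphi(q)=f$. Then $\varphi(xp)=\varphi(x)e=\varphi(x)$ and $\varphi(qy)=f\varphi(y)=\varphi(y)$, as required.

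There is no real obstacle. The only points needing care are the non-overlap of the two factors, which is guaranteed by $|u|\geq 2sN$, and the fact that the idempotents lie in $\varphi(\Sigma^s)$, so they can be represented by words of length exactly $s$. This is precisely why the lemma is applied to $\stab(\varphi)$, using surjectivity of $\varphi$ from $\Sigma^s$ onto $\stab(\varphi)$.
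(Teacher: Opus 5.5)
Your proposal is correct and follows the paper's proof. Like the paper, you split off the first and last $N$ length-$s$ blocks, apply Lemma~\ref{lem:prefidem} in $\stab(\varphi)$ to each, and represent the resulting idempotents by words in $\Sigma^s$ via $\stab(\varphi)=\varphi(\Sigma^s)$. Your allowance for an empty middle factor $u'$ also matches the paper, where $z\in\Sigma^{s\N}$ may be empty.
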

\end{samepage}
\begin{proof}
	Let $u\in\Sigma^{s\N_{>0}}$ be such that $|u|\geq 2sN$.
	Write $u=\overline{u_0}\dots \overline{u_{N-1}}z
\overline{v_0}\dots \overline{v_{N-1}}$, where
	$\overline{u_0},\dots,\overline{u_{N-1}},
	\overline{v_0},\dots,\overline{v_{N-1}}\in\Sigma^s$
	and where $z\in\Sigma^{s\N}$.
	The lemma follows by applying Lemma~\ref{lem:prefidem}
	to the sequences 
	$\sigma_\varphi(\overline{u_0}),\dots,
	\sigma_\varphi(\overline{u_{N-1}})$
	and 
	$\sigma_\varphi(\overline{v_0}),\dots,
	\sigma_\varphi(\overline{v_{N-1}})$, respectively.
\end{proof}

The following lemma treats a special case 
of Proposition~\ref{P const tech},
namely when restricted to $\approx_k$-equivalent
words both of whose length is a multiple of $k$.
\begin{lemma}
	Let $\varphi:\Sigma^+\to S$ be a stamp in $\QESL$ with stability index $s\in\bbN_{>0}$. 
	Let $N=|\stab(\varphi)|$ and $k=sN$. 
	For each $u,v \in \Sigma^{+}$ we have that 
	if $u \approx_{k} v$ and $|u|\equiv_k |v| \equiv_k 0 $, 
	then $u=_\varphi v$. 
\label{lemma:alg-to-cong}
\end{lemma}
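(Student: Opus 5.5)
The plan is to peel off the prefix and suffix of length $k=sN$, locate idempotent blocks inside them, and then invoke Lemma~\ref{lemma:pup=pvp} on the middle part. If $u=v$ there is nothing to show. Otherwise $|u|,|v|>2k$, and since $|u|\equiv_k|v|\equiv_k 0$ we have $|u|,|v|\geq 3k$. Write
\[
u=\alpha\,\kappa_k(u)\,\omega \qquad\text{and}\qquad v=\alpha\,\kappa_k(v)\,\omega,
\]
where $\alpha=\alpha_k(u)=\alpha_k(v)$ and $\omega=\omega_k(u)=\omega_k(v)$. Here $\kappa_k(u)\congC_k^1\kappa_k(v)$, and both middle parts have length in $k\bbN_{>0}$, hence in $s\bbN_{>0}$. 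By Point (2) of Lemma~\ref{L cont}, since $s\mid k$, we obtain $\kappa_k(u)\congC_s^1\kappa_k(v)$.

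Next I apply Lemma~\ref{lem:prefidem} (in the stamp form of Lemma~\ref{L pq idem}) separately to the $N$ blocks of length $s$ in $\alpha$ and to the $N$ blocks of length $s$ in $\omega$. For the prefix this yields a factorization $\alpha=x\,x'$ with $|x|=(i+1)s$, together with an idempotent word $p\in\Sigma^s$ such that $x=_\varphi xp$. Here $p$ is chosen as a word of length $s$ representing the idempotent supplied by Lemma~\ref{lem:prefidem}, which exists because $\stab(\varphi)=\varphi(\Sigma^s)$. Dually, $\omega=y'\,y$ with $|y|=(j+1)s$ and an idempotent word $q\in\Sigma^s$ such that $qy=_\varphi y$. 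Note that $|x'|,|y'|\in s\bbN$.

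The main step is to bring $p$ and $q$ next to the middle parts:
\[
u=_\varphi x\,p\,(x'\kappa_k(u)y')\,q\,y,
\]
and likewise for $v$. Since $\congC_s^1$ is a congruence (Point (1) of Lemma~\ref{L cont}), we get $x'\kappa_k(u)y'\congC_s^1 x'\kappa_k(v)y'$, and both words lie in $\Sigma^{s\bbN_{>0}}$.

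The obstacle is that Lemma~\ref{lemma:pup=pvp} needs the same idempotent $p$ on both sides, while here we have $p$ on the left and $q$ on the right. The first option is to generalize Lemma~\ref{lemma:pup=pvp} to two idempotents, giving $pwq=_\varphi pw'q$; its proof adapts directly, since Lemma~\ref{L cont EJ1} already allows distinct $e,f$. The second option, which I would try first, is to reduce to a single idempotent by writing
\[
p\,w\,q=_\varphi p\,(p\,w\,q)\,q,
\]
and applying the identity $exyf=exeyf$ from Corollary~\ref{E exyf} to insert $p$ or $q$ where needed, so as to reduce to content arguments. Concretely, I would argue directly with Lemma~\ref{L cont EJ1}, using the blockwise contents that arise after inserting the conjugate idempotents of Lemma~\ref{lemma:idem-insert}. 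To do this, apply Lemma~\ref{lemma:idem-insert} with $p$ in the form $pwq=_\varphi pw\,p\,(\ldots)$. If that becomes messy, I would fall back on the two-idempotent version of Lemmas~\ref{lemma:idem-insert} and~\ref{lemma:pup=pvp}, whose proofs only used $exyf=exeyf$ and Lemma~\ref{L cont EJ1}, both of which are symmetric in $e$ and $f$.

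This gives $pwq=_\varphi pw'q$ for $w=x'\kappa_k(u)y'$ and $w'=x'\kappa_k(v)y'$. Multiplying by $x$ on the left and $y$ on the right then yields $u=_\varphi v$.
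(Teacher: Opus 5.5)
Your proposal follows the paper's proof: you strip off $\alpha_k$ and $\omega_k$, extract idempotent words $p,q\in\Sigma^s$ via Lemma~\ref{L pq idem}, rewrite $u=_\varphi xp\,w\,qy$ and $v=_\varphi xp\,w'\,qy$ with $w\congC_s^1 w'$, and conclude with Lemma~\ref{lemma:pup=pvp}. The one point where you stall is the mismatch of $p$ on the left and $q$ on the right, which you leave as a menu of options; the paper settles it in one line. Since $\varphi(p),\varphi(w),\varphi(q)\in\stab(\varphi)$ with $\varphi(p),\varphi(q)$ idempotent, Point~(1) of Corollary~\ref{E exyf} ($exf=exef$) gives $pwq=_\varphi pwpq$ and $pw'q=_\varphi pw'pq$. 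It therefore suffices to show $pwp=_\varphi pw'p$, which is Lemma~\ref{lemma:pup=pvp} verbatim. There is no need to re-run Lemma~\ref{lemma:idem-insert} on $pwp$. Your fallback, two-idempotent versions of Lemmas~\ref{lemma:idem-insert} and~\ref{lemma:pup=pvp}, would also go through: you insert conjugates of $p^2$ using $exyf=exeyf$ with $e$ a conjugate of $p^2$ and $f$ a conjugate of $q^2$, then apply Lemma~\ref{L cont EJ1} with $e\neq f$. It is, however, strictly more work for the same conclusion.
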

\begin{proof}
	Let $u,v\in \Sigma^+$ be such that
	$u \approx_k v$ and $|u|\equiv_k |v|\equiv_k 0 $.
	If $\min(|u|,|v|)\leq 2k$ then $u=v$ and therefore 
	$u=_\varphi v$ and the lemma is proved. 
	Hence, let us assume that $|u|,|v|> 2k$.

	Since $\alpha_k(u)=\alpha_k(v)$ and $\omega_k(u)=\omega_k(v)$
	by Lemma~\ref{L pq idem} there exist idempotents $p,q\in \Sigma^s$ and 
	factorizations $u=xu'y$, $v=xv'y$ 
	such that $|x| = (i+1)\cdot s$, $|y| = (j+1)\cdot s$ 
	for some $i,j\in[0,N-1]$,
$u=_\varphi xpu'qy$ and $v=_\varphi xpv'qy$. 
	Since $|p|,|u'|, |v'|, |q|$ are all multiples of $s$, 
we have
	$\varphi(p),\varphi(u'),\varphi(v'),\varphi(q) \in \stab(\varphi)$. 
	Hence $pu'q=_\varphi pu'pq$, and $pv'q=_\varphi pv'pq$
	by Point (1) of \Cref{E exyf}. 
	Consequently, $u=_\varphi xpu'qy=_\varphi xpu'pqy$ 
	and $v=_\varphi xpv'qy=_\varphi xpv'pqy$.
Thus, to prove the lemma, it suffices to show that 
	$pu'p=_\varphi pv'p$. 

Note that $u'=x^{-1}uy^{-1}=x'\kappa_k(u)y'$ and $v'=x^{-1}vy^{-1}=x'\kappa_k(v)y'$, where 
	$x'=x^{-1}\alpha_{k}(u)=x^{-1}\alpha_{k}(v)$ and
$y'=\omega_{k}(u)y^{-1}=\omega_{k}(v)y^{-1}$.
Since $u\approx_k v$ implies $\kappa_k(u)\congC_k^1 \kappa_k(v)$ and since 
	$\congC_k^1$ is a congruence 
	by Point (1) of \Cref{L cont} we have $u'=x'\kappa_k(u)y' \congC_k^1 x'\kappa_k(v)y' = v'$. 
	Since $s|k$ we also have $u'\congC_s^1 v'$ by Point (2) of Lemma~\ref{L cont}.
Recalling that the lengths of $u'$ and $v'$ are both divisible by
$s$ we therefore have
	$pu'p=_\varphi pv'p$ by \Cref{lemma:pup=pvp}.
\end{proof}

We are now ready to prove Proposition~\ref{P const tech}.

\subsubsection{Proof of Proposition~\ref{P const tech}}
\begin{proof}
Let $k=sN$.
    By \Cref{lemma:alg-to-cong} 
	for all $u,v\in\Sigma^+$ we have that
	if $u\approx_{k} v$ 
	and $|u|\equiv_{k} |v| \equiv_{k} 0$, then $u=_\varphi v$. 
	We claim that for all $u,v \in \Sigma^+$, if $u \approx_{2k} v$,
	then $u=_\varphi v$.

Assume that $u,v\in \Sigma^+$ such that $u\approx_{2k} v$. If $|u|$ or $|v|$ is $\leq 4k$,
	then $u=v$, so $u=_\varphi v$ and we are done. 
Next assume that $|u|,|v|>4k$. Since $u\approx_{2k}v$, 
	clearly $|u|\equiv_{2k} |v|$. 
	Let $u',v'\in \Sigma^+$ and $u'',v''\in \Sigma^*$ be such that
	$u=u'u''$, $v=v'v''$, $|u'|\equiv_{k} |v'|\equiv_{k} 0 $ 
	and $|u''|=|v''|<k$. Since $u$ and $v$ agree on their $2k$-length suffix, clearly $u''=v''$. 
	Hence to show that $u=_\varphi v$ it suffices to 
	show that $u'=_\varphi v'$. 
	To show $u'=_\varphi v'$ in turn, it is sufficient to prove 
	$u'\approx_{k} v'$ by \Cref{lemma:alg-to-cong}.

\begin{figure}
    \centering
    \begin{tikzpicture}
    \draw[thick] (0,0) rectangle (15,0.8);

    \draw[thick] (14,0) -- (14,0.8);
    \draw[thick] (12.5,0) -- (12.5,0.8);
   
    \draw[dashed] (12,0) -- (12,0.8);
    \draw[dashed] (3,0) -- (3,0.8);
    
  \draw[dashed] (13.5,0) -- (13.5,0.8);
    \draw[thick] (1.5,0) -- (1.5,0.8);
  
    \node at (7,0.4) {$u'$};
    \node at (14.5,0.4) {$u''$};

\draw[decorate,decoration={brace,amplitude=6pt,mirror}]
        (0,-0.3) -- (1.5,-0.3);
    \node at (.75,-0.8) {$\alpha_{k}(u)$};

\draw[decorate,decoration={brace,amplitude=6pt}]
        (0,1.1) -- (1.45,1.1);
    \node at (.75,1.6) {$\alpha_{k}(u')$};
 
    \draw[decorate,decoration={brace,amplitude=6pt,mirror}]
        (0,-1) -- (3,-1);
    \node at (1.5,-1.5) {$\alpha_{2k}(u)=\alpha_{2k}(v)$};

    \draw[decorate,decoration={brace,amplitude=6pt,mirror}]
        (13.5,-0.3) -- (15,-0.3);
    \node at (14.25,-0.8) {$\omega_{k}(u)$};
    \draw[decorate,decoration={brace,amplitude=6pt,mirror}]
        (12,-1) -- (15,-1);
    \node at (13.5,-1.5) {$\omega_{2k}(u)=\omega_{2k}(v)$};

    \draw[decorate,decoration={brace,amplitude=6pt}]
        (12.55,1.1) -- (14,1.1);
    \node at (12.75,1.6) {$\omega_{k}(u')=\omega_k\left(\omega_{2k}(u)(u'')^{-1}\right)$};

    \draw[decorate,decoration={brace,amplitude=12pt}]
        (1.55,1.1) -- (12.45,1.1);
    \node at (6.5,2) {$\kappa_{k}(u')=\omega_k\circ \alpha_{2k}(u)\cdot \kappa_{2k}(u) \cdot \alpha_{k-|u''|}\circ \omega_{2k}(u)$};

\end{tikzpicture}
    \caption{The word $u=u'u''$}
    \label{fig:u'u''}
\end{figure}
 
	So let us prove $u'\approx_{k} v'$. 
	Since $u\approx_{2k} v$, we have $\alpha_{2k}(u)=\alpha_{2k}(v)$ and $\omega_{2k}(u)=\omega_{2k}(v)$. 
	Therefore, $\alpha_{k}(u')=\alpha_{k}(u)=\alpha_{k}(v)=
	\alpha_{k}(v')$.  Also,
	$\omega_{k}(u')=\omega_{k}(\omega_{2k}(u)(u'')^{-1})=\omega_{k}(\omega_{2k}(v)(v'')^{-1})
	=\omega_{k}(v')$ (see \Cref{fig:u'u''}).
	It remains to show that $\kappa_{k}(u')\congC_{k}^1 \kappa_{k}(v')$.
We observe that 
\begin{align}
	\kappa_{k}(u')&=(\omega_k\circ \alpha_{2k}(u))\cdot \kappa_{2k}(u) \cdot 
	(\alpha_{k-|u''|}\circ \omega_{2k}(u))\label{E ao1}\\
	\kappa_{k}(v')&=(\omega_k\circ \alpha_{2k}(v))\cdot \kappa_{2k}(v) \cdot 
	(\alpha_{k-|v''|}\circ \omega_{2k}(v))\label{E ao2}\\
	&=(\omega_k\circ \alpha_{2k}(u))\cdot \kappa_{2k}(v) \cdot 
	(\alpha_{k-|u''|}\circ \omega_{2k}(u))\label{E ao3}
\end{align}
    
	Since $u\approx_{2k}v$ we have $\kappa_{2k}(u)\congC_{2k}^1 \kappa_{2k}(v)$. 
	By Point (2) of \Cref{lemma:prop-cont}
	we have $\kappa_{2k}(u)\congC_{k}^1 \kappa_{2k}(v)$. 
	Since $\congC_{k}^1$ is a congruence
	by Point (1) of \Cref{lemma:prop-cont} we have
	$\kappa_{k}(u')\congC_{k}^1 \kappa_{k}(v')$ by (\ref{E ao1}) and (\ref{E ao3}), 
	as required.
\end{proof}

\end{samepage}

\section{Decidability of Membership in $O(1)$ Circuit Complexity}\label{S PSPACE}

Recall that a {\em nondeterministic finite automaton (NFA)}
is a tuple $\calA=(Q,\Sigma,\Delta,q_0,F)$, where $Q$ is a finite set of
{\em states}, $\Sigma$ is a finite {\em alphabet}, $\Delta\subseteq Q\times\Sigma\times Q$
is a {\em transition relation}, $q_0\in Q$ is the {\em initial state} and $F\subseteq Q$
is a set of {\em final states}.
By $|\calA|=|Q|+|\Sigma|+|\Delta|$ we denote the {\em size} of $\calA$ and by
$L(\calA)$
we denote the {\em language} of $\calA$.

The {\em $O(1)$-membership problem} takes as input a regular 
language $L$ and asks whether $\opc(L)\in O(1)$.

\begin{samepage}
	\begin{theorem}\label{T constant PSPACE}
		The $O(1)$-membership problem for NFAs
		is $\mathbf{PSPACE}$-complete.
\end{theorem}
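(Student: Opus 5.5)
By Theorem~\ref{T const main} the $O(1)$-membership problem is equivalent to deciding whether $\eta_L\in\QESL$, i.e.\ whether $\stab(\eta_L)$ satisfies the identities (\ref{eq:Xeq1}) and (\ref{eq:Xeq2}). Both bounds will be derived from this reformulation.

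\emph{Upper bound.} Given an NFA $\calA=(Q,\Sigma,\Delta,q_0,F)$ with $n=|Q|$, I will work in the transition monoid of the subset automaton. Each word $w$ is represented by its Boolean $n\times n$ transition matrix $M_w$, which needs polynomial space. The syntactic semigroup is a quotient of the semigroup $T$ of these matrices. I will check the complement of the identities, placing the problem in $\mathbf{NPSPACE}=\mathbf{PSPACE}$ by Savitch's theorem. The algorithm first guesses the stability index of $\eta_L$ implicitly. To avoid computing it, it uses instead an exponent $s$ of $T$ such that $\varphi(\Sigma^s)=\varphi(\Sigma^{2s})$; for example $s=|T|!$ has the property that $\Sigma^{s}$ and its powers yield the same image. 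A stable element can then be witnessed by guessing a word letter by letter while counting its length modulo $s$. That counter has only polynomially many bits, since $|T|\le 2^{n^2}$ gives $\log(|T|!)\le n^2 2^{n^2}$ bits, which is still exponential.

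To fix this, I will use the fact that the stable semigroup of $\eta_L$ is the image of $\Sigma^{k\bbN_{>0}}$ for a suitable $k$. Membership of a matrix in that image can be verified by guessing the word and tracking its length modulo the stability index. I will bound the stability index by the exponent of the group of residues of lengths, namely the lcm of the cycle lengths of $T$'s action on $\bbN$. This gives a value of at most $2^{\mathrm{poly}(n)}$, so the counter fits in polynomial space. This bound is the point I consider the main obstacle and the one I would work out first.

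With this in place, the nondeterministic algorithm guesses words $w_e,w_f,w_p,w_q,w_x,w_y$ of lengths divisible by the chosen index, one letter at a time, storing only their matrices. It checks that $M_{w_e}$ and $M_{w_f}$ are idempotent modulo $\sim_L$. This is harmless if the identities are replaced by $e=e^\omega$: it suffices to use $w_e^{\omega}$, and $M^{\omega}$ is computable from $M$ in polynomial space. Finally it checks that the two sides of one identity differ syntactically. Syntactic inequality of two words $w,w'$ means there exist $u,v$ with $uwv\in L\not\ni uw'v$. Equivalently, there is a pair of state sets reachable as $q_0\cdot u$ and co-reachable via $v$ that distinguishes $M_w$ from $M_{w'}$, and such a pair can again be guessed letter by letter in polynomial space.

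\emph{Lower bound.} I will reduce from NFA universality, which is $\mathbf{PSPACE}$-hard, using Corollary~\ref{C Neutral Letter}. Given an NFA $\calA$ over $\Sigma$, let $\#,c$ be fresh letters, with $c$ intended to be neutral. I build an NFA for
\[K=\{w\in(\Sigma\cup\{\#,c\})^+ : \pi(w)\notin (\Sigma^*\#)^*\Sigma^* \text{ or every $\#$-separated block of } \pi(w) \text{ lies in } L(\calA)\}\cup\cdots,\]
where $\pi$ erases $c$, so that $c$ is neutral by construction. If $L(\calA)=\Sigma^*$, then $K$ becomes trivial (universal), hence idempotent and commutative. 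Otherwise, fix a rejected word $z$. Then membership of a word of the form $z\#z\#\cdots$ against permuted variants should break commutativity, for instance between $z\#$ and $\#z$ when block boundaries shift. I expect to adjust the exact shape of $K$ so that the syntactic monoid in the non-universal case is visibly not idempotent or not commutative. A simpler variant may also work: take $L(\calA)\cdot\#\cdot\Sigma^*$ together with $c$ padded into it. Constructing $K$ is polynomial, and Corollary~\ref{C Neutral Letter} converts the result to $O(1)$-membership.
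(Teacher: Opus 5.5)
Your upper bound depends on certifying, in polynomial space, that a guessed matrix represents an element of $\stab(\eta_L)$, and this step is missing. You correctly notice that counting lengths modulo $|T|!$ needs exponentially many bits. But your remedy, a $2^{\mathrm{poly}(n)}$ bound on the stability index via an lcm of cycle lengths, is only asserted, and the natural version of that argument does not give it. The period of $i\mapsto\mu(\Sigma^i)$ divides the lcm of the periods of the strongly connected components of the right Cayley graph of $T$. That graph has up to $2^{|Q|^2}$ vertices, so this lcm is a priori only doubly exponential, which is the $|T|!$ problem again.

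The missing idea is to bypass the stability index by pumping, as the paper does. The characterization is: $t\in\stab(\mu)$ if, and only if, $t=\mu(uvw)$ for words with $\mu(u)=\mu(uv)$, $|v|=d\le|T|$, $|uw|\le 2d|T|$, and $d$ dividing $|uw|$.
\begin{itemize}
\item For one direction, write $|uw|=rd$. Pumping $v$ leaves the image unchanged and realizes every length $(r+i)d$ with $i\geq 1$, among which are multiples of the stability index of $\mu$.
\item For the other direction, apply the pigeonhole principle to the prefixes of a witness of length $(|T|+1)!$ times that index. Then shorten $u$ and $w$ while preserving their lengths modulo $d$.
\end{itemize}
All numbers involved are then $2^{O(|Q|^2)}$. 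Combined with $\beta(\stab(\mu))=\stab(\eta_L)$ for the factorization $\eta_L=\beta\circ\mu$, the rest of your plan goes through: idempotents via $\omega$-powers, syntactic inequality via guessed contexts, and Savitch's theorem.

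Your lower bound is not yet a reduction, and the candidates you describe fail.
\begin{itemize}
\item The first disjunct in your definition of $K$ is vacuous, since every word over $\Sigma\cup\{\#\}$ lies in $(\Sigma^*\#)^*\Sigma^*$.
\item The block condition alone does not detect non-universality. For $L(\calA)=(\Sigma\setminus\{a\})^*$, which is not universal, $K$ is the set of words avoiding $a$. This is an alphabetic language of constant circuit complexity, and no permutation of $z\#z\#\cdots$ with $z=a$ changes membership.
\item The simpler variant $\pi^{-1}(L(\calA)\#\Sigma^*)$, with $\pi$ erasing $c$, fails in the other direction. For universal $\calA$ it contains $\#$ but not $\#\#$, so it is not idempotent and hence not of constant complexity.
\end{itemize}
A gadget of this kind can be repaired. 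For universality with respect to $\Sigma^+$, take $\pi^{-1}((L(\calA)\cup\{\varepsilon\})\#(\Sigma\cup\{\#\})^*)$. It is alphabetic in the universal case, while a rejected $z\in\Sigma^+$ gives $\#z\in K$ but $z\#\notin K$. You have not supplied such a repair, though.

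The paper avoids general NFAs altogether. It uses the fact that the standard hardness proof for universality yields NFAs with $L(\calA)\in\{\Sigma^+,\Sigma^+\setminus\{w\}\}$. Adding a self-loop on a fresh letter $\$$ at every state gives a neutral-letter language. In the universal case this language is $\Gamma^+\setminus\$^+$. Otherwise it accepts all preimages of $ww$ but none of $w$, so it is not idempotent and Corollary~\ref{C Neutral Letter} applies.
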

\end{samepage}

The theorem follows from 
Proposition~\ref{P constant PSPACE upper}
and 
Proposition~\ref{P constant PSPACE lower}.

\subsection{Upper Bound}

\begin{samepage}
\begin{proposition}\label{P constant PSPACE upper}
		The $O(1)$-membership problem for NFAs 
		is in $\mathbf{PSPACE}$.
\end{proposition}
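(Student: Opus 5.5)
By Theorem~\ref{T const main}, deciding $\opc(L(\calA))\in O(1)$ amounts to deciding whether $\eta_L\in\QESL$. In other words, we must decide whether $\stab(\eta_L)$ satisfies $epxqf=epxxqf$ and $epxyqf=epyxqf$ for all idempotents $e,f$ and all $p,q,x,y\in\stab(\eta_L)$. Since $\mathbf{PSPACE}=\mathbf{NPSPACE}=\mathbf{coNPSPACE}$, it suffices to give a nondeterministic polynomial-space procedure that guesses a violation.

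We represent elements of the syntactic semigroup through the transition monoid of $\calA$. To each word $w$ we associate the Boolean $|Q|\times|Q|$ matrix $M_w$, where $(M_w)_{q,q'}=1$ iff $q'$ is reachable from $q$ by reading $w$. The map $w\mapsto M_w$ is a morphism onto a finite semigroup $T$ that recognizes $L$, so $\eta_L$ factors through it. Each matrix takes $|Q|^2$ bits.

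Two facts make this work. First, $\eta_L(u)=\eta_L(v)$ iff for all words $x,y$ we have $xuy\in L\Leftrightarrow xvy\in L$. Equivalently, for all reachable row vectors $r=\mathbf{1}_{q_0}M_x$ and all co-reachable column vectors $c=M_y\mathbf{1}_F$ we need $rM_uc=rM_vc$. Here the subsets $r$ and $c$ can be guessed letter by letter in polynomial space. The empty contexts $x=\varepsilon$ and $y=\varepsilon$ are also allowed, and $xuy$ must be nonempty, which holds since $u$ is nonempty. Second, stable elements are exactly the images $\eta_L(w)$ with $|w|$ a multiple of $s$. Since $\eta_L(\Sigma^{s})=\eta_L(\Sigma^{ks})$ for all $k\ge1$, these are the images of words whose length is a positive multiple of $s$.

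The stability index $s$ may be exponential, so we cannot guess words of length $s$. Instead we work with the semigroup $T$ and its own stability index $t$. The exponent $|T|!$ is a multiple of the index of the relevant quotient property, so we track lengths modulo a number $m$ that is a common multiple of $s$ and $t$. Concretely, take $m=\operatorname{lcm}(1,\dots,2^{|Q|^2})$. Its binary representation is polynomial in size, and it is a multiple of every possible stability index, since stability indices of $T$ and of its quotients are at most $|T|$ up to exponential-size factors. I expect establishing this is a multiple of $s$ to be the delicate point. The plan is to prove that the set $\eta_L(\Sigma^{m\bbN_{>0}})$ equals $\stab(\eta_L)$ whenever $s\mid m$. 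This follows because $\Sigma^{s}$-images form a semigroup closed under products, and $\Sigma^{m}$-images equal $\Sigma^{s}$-images by stability.

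The algorithm then guesses matrices $M_e,M_f,M_p,M_q,M_x,M_y$ in the reachable part of $T$ whose lengths are $\equiv 0\bmod m$. It does so by reading guessed letters while maintaining the current matrix and a length counter modulo $m$. For $e$ we use a word $e$ and set $E=M_e^{|T|!}$, computed by iterated squaring on matrices, with the exponent again of polynomial bit size. This gives the idempotent power, and the idempotent powers of stable elements are themselves idempotent stable elements; the same applies to $f$. The algorithm then forms the two products for one of the identities and guesses contexts $r,c$ witnessing that the two matrices differ syntactically, as described above. Every step uses polynomial space, so non-membership lies in $\mathbf{NPSPACE}=\mathbf{PSPACE}$, and hence membership does too. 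The main obstacle is justifying the modular length bookkeeping, namely that counting lengths modulo the polynomially representable $m$ enumerates exactly $\stab(\eta_L)$. The remaining steps are standard Savitch-style arguments.
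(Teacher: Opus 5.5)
Your overall setup is sound: you reduce to the identities of $\QESL$ on $\stab(\eta_L)$, represent elements by Boolean transition matrices, refute $\eta_L(u)=\eta_L(v)$ by guessing a reachable row vector and a co-reachable column vector, and use $\mathbf{NPSPACE}=\mathbf{PSPACE}$. However, the step you yourself flag as delicate is a genuine gap.

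The modulus $m=\operatorname{lcm}(1,\dots,2^{|Q|^2})$ does \emph{not} have a polynomial-size binary representation. Since $\operatorname{lcm}(1,\dots,N)\geq 2^N$ for $N\geq 7$, $m$ has at least $2^{|Q|^2}$ bits, so a length counter modulo $m$ already needs exponential space. Independently, the claim that every stability index divides $m$ is unsupported. The stability index is the idempotent exponent of the \emph{set} $\mu(\Sigma)$ in the power semigroup of $T$, not of an element of $T$, and it can exceed $|T|$. For example, take two disjoint cycles of coprime lengths $p,q$; let $\mu(a)$ rotate the first and have no transitions on the second, and let $\mu(b)$ do the reverse. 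Then $|T|=p+q+1$ but the stability index is $pq$. The only general bound is exponential in $|S(L)|$, hence doubly exponential in $|Q|$, so you have no polynomially representable multiple of $s$ to count modulo.

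The exponent $|T|!$ you use for idempotent powers has the same defect. That one is easily repaired: guess a matrix $E$ for a stable element and check $E^2=E$ directly. This loses nothing, because if $\beta(t)$ is idempotent for some $t\in\stab(\mu)$, then $t^\omega\in\stab(\mu)$ is idempotent with $\beta(t^\omega)=\beta(t)$.

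The missing idea is a characterization of $\stab(\mu)$ that needs no global modulus, obtained by pumping a short loop. Namely, $t\in\stab(\mu)$ if and only if $t=\mu(uvw)$ for some $u,v\in\Sigma^+$ and $w\in\Sigma^*$ with $\mu(u)=\mu(uv)$, $|v|=d\in[1,|T|]$, $|uw|\equiv_d 0$ and $|uw|\leq 2d|T|$.

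For the ``if'' direction, write $|uw|=rd$. Then $\mu(uv^iw)=t$ for all $i$, and for $i=s_\mu(r+1)-r\geq 1$ the length $|uv^iw|=s_\mu(r+1)d$ is a multiple of the unknown stability index $s_\mu$ of $\mu$.

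For the ``only if'' direction, take a preimage of $t$ of length $(|T|+1)!\,s_\mu$. By pigeonhole, two of its first $|T|+1$ prefixes have the same image, which gives $\mu(u)=\mu(uv)$ with $d=|v|\leq|T|$; this $d$ divides the total length. Then shorten $u$ and $w$ by cutting out loops whose length is a multiple of $d$.

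With this characterization you only need counters modulo $d\leq|T|\leq 2^{|Q|^2}$ and lengths up to $2d|T|$, i.e.\ $O(|Q|^2)$ bits. Combined with $\beta(\stab(\mu))=\stab(\eta_L)$, where $\eta_L=\beta\circ\mu$, the rest of your procedure then goes through. This is how the paper's proof handles stable elements.
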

\end{samepage}
\begin{proof}
	For the $\mathbf{PSPACE}$ upper bound let $L=L(\calA)$ for some given 
	NFA $\calA=(Q,\Sigma,\Delta,q_0,F)$.
	Let us show that in space polynomial in $|\calA|$ one can decide if 
	$\opc(L)=O(1)$. 

	Let $\mathbb{B}^{Q\times Q}$ denote the multiplicative 
	semigroup of Boolean matrices of dimension $Q\times Q$.
	Let $\mu:\Sigma^+\to\mathbb{B}^{Q\times Q}$
	be the unique morphism satisfying $\mu(a)=t_a$,
	where $t_a(q,q')=1\Leftrightarrow (q,a,q')\in\Delta$ for each $a\in\Sigma$.
	Let $T$ denote the subsemigroup $\mu(\Sigma^+)$ of 
	$\mathbb{B}^{Q\times Q}$.
	
	Clearly, each element of $\mathbb{B}^{Q\times Q}$ can be stored in space polynomial in $|\calA|$
	and so can (matrix) multiplication in $\mathbb{B}^{Q\times Q}$ 
	be performed in polynomial space in $|\calA|$.

	Let $\beta:T\to S(L)$ denote the unique surjective morphism that 
	factors $\eta_L$ as $\eta_L=\beta\circ\mu$.
	In the following let $B=|T|$, let $s$ denote the stability index
	of $\eta_L$ and let $s_\mu$ denote the stability index of $\mu$.
	We observe that 
	\begin{align}
		\beta(\stab(\mu))=\beta(\mu(\Sigma^{s_\mu}))=
		\beta(\mu(\Sigma^{s\cdot s_\mu}))=
		\eta_L(\Sigma^{s\cdot s_\mu})=
		\eta_L(\Sigma^s)=\stab(\eta_L).\label{E stab eta}
	\end{align}

\medskip

\noindent
	{\em Claim 1.}
	Let $t\in\mathbb{B}^{Q\times Q}$. Then $t\in\stab(\mu)$ if, and only if, 
	there exist $d\in[1,B]$ and words $u,v\in\Sigma^+$ and $w\in\Sigma^*$ satisfying 
	the following:
	\begin{enumerate}
		\item $\mu(uvw)=t$,
		\item $\mu(u)=\mu(uv)$,
		\item $|v|=d$,
		\item $|uw|\leq 2dB$, and
		\item $|uw|\equiv_d 0$.
	\end{enumerate}
	\begin{proof}
		Let $s_\mu$ denote the stability index of $\mu$.

		\medskip
		\noindent
		$(\Longleftarrow)$ \quad Assume such $d\in[1,B]$
		and words $u,v\in\Sigma^+$ and $w\in\Sigma^*$ satisfying Points {(1)~--~(5)}.
		By Point (5) we have $|uw|\in d\N$, so there exists $r\in\bbN$
		such that $|uw|=rd$.
		Let $i=s_\mu(r+1)-r\in\bbN_{>0}$.
		Hence 
		\begin{align}
			|uv^iw|=i\cdot|v|+|uw|=(s_\mu(r+1)-r)d+rd=s_\mu(r+1)d\label{E uviw}.
		\end{align}
		Thus, we have
		\begin{align}
			t&=\mu(uvw)&\text{(By Point (1))}\\
			&=\mu(uv^{i}w)&\text{(By Point (2))}\\
			&\in\mu(\Sigma^{s_\mu(r+1)d})&\text{(By (\ref{E uviw}))}\\
			&=\mu(\Sigma^{s_\mu})&(\text{Since $s_\mu$
			is the stability index of $\mu$})\\
			&=\stab(\mu)
		\end{align}
		\noindent
		$(\Longrightarrow)$ \quad Assume $t\in\stab(\mu)$.
		Thus, for all $i\in s_\mu\N_{>0}$ there exists some word
		$x\in\Sigma^i$ such that $\mu(x)=t$.
		Let $i=(B+1)!\cdot s_\mu\in s_\mu\bbN_{>0}$ and let $x\in\Sigma^{i}$ be some word
		such that $\mu(x)=t$.
		By the pigeonhole principle there exists a factorization
		$x=uvw$ such that $\mu(u)=\mu(uv)$ 
		and where $|v|=d$ for some $d\in[1,B]$.
		Thus Points (1),(2) and (3) are satisfied.
		Since $|x|$ is a multiple of $d$ and $|v|=d$
		we have $|uw|\equiv_d 0$, hence Point (5) is
		also satisfied.
		It remains to prove Point (4).

		We claim that every word $z\in\Sigma^+$ of length $>dB$
		can be factorized as $z=z_1z_2z_3$ such that
		$\mu(z_1)=\mu(z_1z_2)$ and
		$z_2\in\Sigma^{d\bbN_{>0}}$.
		To prove the claim consider 
		the $B+1$ prefixes
		$\alpha_1(z),\alpha_{d+1}(z),\dots, \alpha_{dB+1}(z)$
		of $z$.
		The claim follows from the fact that, 
		by the pigeonhole principle,
		at least two of the above prefixes of $z$ have
		the same image under $\mu$.

		By the above claim if $u$ or $w$ has length
		$>dB$ they can be replaced by a
		smaller word with the same image under $\mu$
		and the same length modulo $d$.
		By possibly repeatedly applying the above shortening
		of words we may assume without loss of generality
		that $|u|,|w|\leq dB$, so $|uw|\leq 2dB$. 
		Thus Point (4) follows.
	\end{proof}

\medskip

\noindent
	{\em Claim 2.}
	Given $t\in\mathbb{B}^{Q\times Q}$ one can decide in 
	space polynomial in $|\calA|$
	if $t\in\stab(\mu)$.
	\begin{proof}
		Let $t\in \mathbb{B}^{Q\times Q}$.
		By Claim 1 it suffices to guess in polynomial space 
		$d\in[1,B]$ and three matrices 
		$t_u,t_v,t_w\in\mathbb{B}^{Q\times Q}$,
		and numbers $a,b\in[0,2dB]$ in binary and verify in polynomial space
		the following conditions. Below we assume
		$b\not=0$; the case when $b=0$ is analogous and hence omitted.
		\begin{samepage}
		\begin{enumerate}
			\item $a>0$, $a+b\equiv_d 0$, and $a+b\leq 2dB$,
			\item $t_u\in\mu(\Sigma^a)$
			\item $t_w\in\mu(\Sigma^b)$ 
			\item $t_v\in\mu(\Sigma^d)$ 
			\item $t_u=t_ut_v$
			\item $t=t_ut_vt_w$
		\end{enumerate}
		\end{samepage}
		Notably Points (2), (3), and (4) are verified
		on the fly since we cannot store exponentially
		long words.
	\end{proof}

\medskip

\noindent
	{\em Claim 3.}
	Given $t,t'\in\stab(\mu)$ one can decide in space polynomial in 
	$|\calA|$ if $\beta(t)=\beta(t')$.
	\begin{proof}
	For all $t\in\mathbb{B}^{Q\times Q}$ and $q\in Q$ we introduce
	the notation $t(q,F)=\bigvee_{f\in F}t(q,f)$.
		For a semigroup $S$ let $S^1$ denote the monoid obtained
		by adjoining an identity if necessary. 
		By definition of syntactic morphism, observe that
		\begin{align} \beta(t)=\beta(t')&\quad\Longleftrightarrow\quad
			\forall p,q\in S(L)^1: p\beta(t)q\in\eta_L(L)\Leftrightarrow
			p\beta(t')q\in\eta_L(L)\\
			&\quad\Longleftrightarrow\quad
			\forall t_p,t_q\in T^1: (t_ptt_q)(q_0,F)=(t_pt't_q)(q_0,F)\label{E pqcheck}
		\end{align}
		To verify (\ref{E pqcheck}) we enumerate in polynomial
		space all pairs $(t_p,t_q)\in(\mathbb{B}^{Q\times Q})^2$ and 
		first verify if $t_p,t_q\in T^1$. The latter in turn can be tested
		by verifying if $t_p$ and/or $t_q$ are the identity respectively,
		or by guessing on the fly (letter by letter) words $w_p$ and/or
		$w_q$ of length at most $B$ and checking if $\mu(w_p)=t_p$
		and $\mu(w_q)=t_q$ indeed holds.
		In case the former test fails, we continue with the next pair.
		In case the former test succeeds, we verify if
		$(t_ptt_q)(q_0,F)=(t_pt't_q)(q_0,F)$ in polynomial space.
	\end{proof}
	By Point (3) of Theorem~\ref{T const main} it suffices to verify if $\eta_L$ is
	in $\QESL$, i.e. if $\stab(\eta_L)$ satisfies both the identities 
	$epxqf= epxxqf$ and $epxyqf= epyxqf$ of Definition~\ref{D ESL},
	where $e$ and $f$ denote idempotents.
	We only discuss the verification of the former identity.

	In polynomial space we deterministically enumerate all possible assignments
	of representatives $\rho=(t_e,t_f,t_p,t_x,t_q)\in
	(\mathbb{B}^{Q\times Q})^5$
	for $e,f,p,x,q$ respectively. 
	Let $\rho=(t_e,t_f,t_p,t_x,t_q)$. By Claim 2 one can verify in
	polynomial space if $t$ is indeed in $\stab(\mu)$ 
	for all components $t$ of $\rho$. We skip the current assignment if 
	at least one component of $\rho$ is not in $\stab(\mu)$;
	in the latter case we continue with the next assignment.
	
	So let us assume $t\in\stab(\mu)$ for all components $t$ of $\rho$.
	For all of these we have $\beta(t)\in\stab(\eta_L)$ by
	(\ref{E stab eta}).

	Next we verify in polynomial space if $\beta(t_e)$ and $\beta(t_f)$
	are indeed idempotents. 
	Note that it is not necessary for $t_e$ to be idempotent
	for $\beta(t_e)$ to be idempotent.
	However, note that $t_e^\omega$, denoting the unique
	idempotent power of $t_e$, satisfies $\beta(t_e^\omega)=\beta(t_e)$
	if $\beta(t_e)$ is idempotent.
	Hence, it suffices to verify in polynomial
	space if $t_e$ and $t_f$ are actually idempotents in $T$
	themselves, which is easy.
	Again, if $t_e$ or $t_f$ is not idempotent, we skip the current assignment
	$\rho$.
	Next, we compute $t=t_et_pt_xt_qt_f$ and $t'=t_et_pt_xt_xt_qt_f$
	in polynomial space.
	Finally we verify if $\beta(t)=\beta(t')$ which is possible in 
	polynomial space by Claim 3. 
	We only accept if all assignments that are not skipped indeed
	satisfy the identity.
\end{proof}

\subsection{Lower Bound}
\begin{samepage}
\begin{proposition}\label{P constant PSPACE lower}
		The $O(1)$-membership problem for NFAs 
		is $\mathbf{PSPACE}$-hard under
		logarithmic space reductions.
\end{proposition}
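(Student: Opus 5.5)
We reduce from the NFA universality problem, or a closely related PSPACE-complete problem for NFAs, and use the characterization of neutral letter languages in Corollary~\ref{C Neutral Letter}. For neutral letter languages, constant circuit complexity coincides with being idempotent and commutative, equivalently with being alphabetic. The reduction therefore only has to produce an NFA for a neutral letter language that is alphabetic exactly when the original instance is a yes-instance.

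Concretely, let an NFA $\calA$ over $\Sigma$ be given. Universality of $L(\calA)\cap\Sigma^+$ is PSPACE-complete under logspace reductions, and we may assume $|\Sigma|\geq 2$. Take a fresh neutral letter $c$ and fresh letters $a,b$, and define
\begin{align}
K = \pi^{-1}\bigl(L(\calA)\,ab\bigr)\cup \pi^{-1}\bigl(\Sigma^*(a+b)\Sigma^*(a+b)\Sigma^*\bigr) \cup \pi^{-1}(\Sigma^+),
\end{align}
where $\pi$ erases $c$. The intent is that $K$ is a neutral letter language (the letter $c$ being neutral by construction) which is alphabetic if and only if $L(\calA)\supseteq\Sigma^+$. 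If $\calA$ is universal, the $ab$-gadget only adds words whose projection already lies in a content-determined set, so $K$ is alphabetic. If some $w\in\Sigma^+$ is rejected, then words with the same content but a different arrangement of $a$, $b$ and $w$ should be separated by $K$, contradicting commutativity. The exact gadget needs care, and I would adjust it (for instance, putting the marker in front, $abL(\calA)$, or using a single marker letter so that the commuting witness $w\,a$ versus $a\,w$ is clean) until both directions hold. The NFA for $K$ is built in logspace by adding $c$-self-loops to every state and a few extra states.

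The direction "$\calA$ universal $\Rightarrow$ $\opc(K)\in O(1)$" should follow by showing directly that $K$ is a finite Boolean combination of languages $\Gamma^+$ and then invoking Corollary~\ref{C Neutral Letter} (or Theorem~\ref{T const main}). The converse exhibits a non-commutativity witness from a rejected word and again applies Corollary~\ref{C Neutral Letter}.

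The main obstacle is designing the gadget so that universality of $\calA$ makes $K$ exactly content-determined, with no residual order dependence such as "$a$ before $b$". My first attempt would be $K=\pi^{-1}\bigl((\Sigma^*\setminus\{\text{$\Sigma$-words}\})\cup L(\calA)\bigr)$ over the alphabet $\Sigma\cup\{c\}$. Its $\Sigma$-part is $L(\calA)$, which is alphabetic iff it is a Boolean combination of contents. That is weaker than universality, so the reduction would instead be from the problem "is $L(\calA)$ alphabetic?". I would show that problem PSPACE-hard by a reduction from universality with an auxiliary letter $\#$, using $L' = L(\calA)\#\Sigma^*\cup\Sigma^*\#\Sigma^*\#(\Sigma+\#)^*\cup\Sigma^+$, and check the rearrangement argument there.
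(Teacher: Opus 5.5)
\textbf{There is a genuine gap: neither gadget you write down gives a working reduction.} Your overall plan is the right one: pad with a neutral letter by adding self-loops, then apply Corollary~\ref{C Neutral Letter}. You also correctly notice that padding alone only tests whether $L(\calA)$ is alphabetic, which is weaker than universality. But the proposal never bridges that gap, and both explicit gadgets fail.

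\begin{itemize}
\item In the first gadget, $\pi^{-1}(L(\calA)ab)\subseteq\pi^{-1}(\Sigma^*(a+b)\Sigma^*(a+b)\Sigma^*)$, so $K$ does not depend on $\calA$ at all. It is also never alphabetic, since $a\notin K$ while $aa\in K$.
\item In the $\#$-gadget, assume $\varepsilon\notin L(\calA)$, which holds under the positive-language convention you adopt. Then even a universal $\calA$ gives $\#\notin L'$ but $\#\#\in L'$, and $a\#\in L'$ but $\#a\notin L'$. So $L'$ is never alphabetic, and every instance is mapped to a no-instance.
\end{itemize}
Your sentence ``adjust it until both directions hold'' is exactly the content of the proof, so as it stands there is no reduction.

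\textbf{The missing idea, which the paper uses:} no gadget is needed if you start from the right hardness statement. The standard $\mathbf{PSPACE}$-hardness proof of NFA universality simulates a deterministic linear bounded automaton (Theorem~3.13 in~\cite{EsparzaBlondin2023}). It produces in logspace an NFA with either $L(\calA)=\Sigma^+$, or $L(\calA)=\Sigma^+\setminus\{w\}$ for a single word $w$ encoding the unique accepting run. Now add a $\$$-loop to every state, over $\Gamma=\Sigma\cup\{\$\}$.
\begin{itemize}
\item In the universal case you get $\Gamma^+\setminus\$^+$, which is alphabetic.
\item Otherwise you get a neutral letter language that contains every $\$$-padding of $ww$ but no $\$$-padding of $w$. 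It is therefore not idempotent, and Corollary~\ref{C Neutral Letter} gives constant circuit complexity fails.
\end{itemize}

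\textbf{A repair of your own route for arbitrary NFAs:} reduce from universality over $\Sigma^*$ instead, taking $L=L(\calA)\subseteq\Sigma^*$.
\begin{itemize}
\item If $L=\Sigma^*$, your $L'$ equals $(\Sigma\cup\{\#\})^+$.
\item If $w\notin L$ and $\varepsilon\notin L$, then $\#\notin L'\ni\#\#$.
\item If $w\notin L$ and $\varepsilon\in L$, then $\#w\in L'\not\ni w\#$.
\end{itemize}
Padding $L'$ with a neutral letter then finishes the argument. You did not carry out this check, and it is exactly where your written versions break.
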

\end{samepage}
\begin{proof}
	It is folklore that universality of NFAs is $\textbf{PSPACE}$-hard
	under logarithmic space reductions: we refer
	to Theorem~3.13 in~\cite{EsparzaBlondin2023}, 
	where for a fixed linear bounded automaton $\calB$ and some input $x$
	one computes in logarithmic space an NFA $\calA$ over
	some alphabet $\Sigma$ such that there is a word $w\in\Sigma^+$ 
	for which:
	\begin{align}
		L(\calA)=\begin{cases}
\Sigma^+\setminus\{w\} &\text{if $\calB$ accepts $x$}\\
			\Sigma^+&\text{otherwise}
			\label{E univ}
		\end{cases}
	\end{align}
	Let $\Gamma=\Sigma\cup\{\$\}$ for some fresh symbol $\$\not\in\Sigma$
	and let $\varphi:\Gamma^*\to\Sigma^*$ denote the monoid morphism
	that erases all occurrences of $\$$, i.e.
	$\varphi(a)=a$ for all $a\in\Sigma$ and $\varphi(\$)=\varepsilon$.
	We claim that one can compute from $\calA$ in logarithmic space
	an NFA $\calA'$ computing
	\begin{align}
		L(\calA')=\varphi^{-1}(L(\calA))
	\end{align}
	Such an NFA $\calA'$ can indeed easily be computed 
	from $\calA$ in logarithmic space
	by adding a $\$$-loop on every state.
	To finish the proof it suffices to show 
	that $\opc(L(\calA'))\in O(1)$ if, and only if, $L(\calA)=\Sigma^+$.

	\medskip

	\noindent
	$(\Longleftarrow)$\quad
	Assume $L(\calA)=\Sigma^+$. 
	Then $L(\calA')=\varphi^{-1}(L(\calA))=\Gamma^+\setminus \$^+$. 
	Clearly, $\opc(L(\calA'))\in O(1)$.

\medskip

	\noindent
	$(\Longrightarrow)$\quad
	Assume $L(\calA)\not=\Sigma^+$.
	By (\ref{E univ}) we have $L(\calA)=\Sigma^+\setminus\{w\}$
	for some word $w\in\Sigma^+$.
	Note that by definition 
	$L(\calA')=\varphi^{-1}(\Sigma^+\setminus\{w\})=\varphi^{-1}(\Sigma^+)
	\setminus\varphi^{-1}(w)$ has
	the neutral letter $\$$. 
	Note that for an idempotent and commutative language $K$
	a word $x\in K$ if, and only if, $xx\in K$.
	Since $\varphi^{-1}(ww)\subseteq L(\calA')$ and
	$L(\calA')\cap\varphi^{-1}(w)=\emptyset$ we conclude that
	$L(\calA')$ cannot be both idempotent and commutative.
	By Point (2) of Corollary~\ref{C Neutral Letter} we have 
	$\opc(L(\calA'))\not\in O(1)$, as required.
\end{proof}

\section{Discussion}\label{S Discussion}

\subsection{The Pseudovariety $\bfQ\EACom$ and $\approx_k^t$-languages}

The pseudovariety $\ACom$ of aperiodic and commutative semigroups is the 
generalization of $\bfJone$. Therefore it is natural to consider the lm-variety $\QEACom$ and 
its circuit complexity. 
Moreover the class of languages corresponding to $\QEACom$ contains 
the well-known threshold languages (of the form $\{ w \in \Sigma^+ \mid |w|_a > t\}$ for some $t\in \bbN$) from the literature~\cite{Friedman86,HASTAD1994200}.

\begin{definition}
	Let $\EACom$  denote the semigroup pseudovariety given by the identities 
\begin{align}
epx^{\omega+1} qf&=epx^{\omega}qf \label{eq:EAComeq1}\\
epxyqf&=epyxqf\label{eq:EAComeq2}
\end{align}
	for all $e,f\in E(S)$ and $p,q,x,y\in S$.
By $\QEACom$ we denote the lm-pseudovariety of
	stamps whose stable semigroup 
	is in $\EACom$.
\end{definition}

	\begin{lemmarep}\label{L cont EACom}
		Let $S$ be a semigroup in $\EACom$ with exponent $n$. Let $e,f\in E(S)$ 
		be idempotents. 
		Then for all words $u,v\in S^+$,
		if $\cont^{=t}(u)\setminus \{e,f\}=\cont^{= t}(v)\setminus \{e,f\}$ 
		for all $t \in [0,n-1]$ and 
        $\cont^{\geq n}(u)\setminus \{e,f\}=\cont^{\geq n}(v)\setminus \{e,f\}$, then $\pi(euf)=\pi(evf)$,
		where $\pi$ is the evaluation morphism.
\end{lemmarep}

\begin{appendixproof}
Let $u,v\in S^+$ be such that $\cont^{=t}(u)\setminus \{e,f\}=\cont^{= t}(v)\setminus \{e,f\}$ for all $t \in [0,n-1]$ and $\cont^{\geq n}(u)\setminus \{e,f\}=\cont^{\geq n}(v)\setminus \{e,f\}$.
Hence $\cont(u)\setminus \{e,f\}=\cont(v)\setminus \{e,f\}$ as well.

 Using  (\ref{eq:EAComeq2}), as in the proof of \Cref{L cont EJ1}, we obtain that $\pi(euf)=\pi(eu'f)$ for any permutation $u'$ of $u$. 
Assume that $e\not= f$ (when $e = f$ the analysis is similar and hence omitted).
Also assume that $\cont(u)\setminus\{e,f\}=\cont(v)\setminus\{e,f\}=
	\{s_1,\ldots, s_k\}\subseteq S$ for some $k\in\bbN$.
	Since $e^{|u|_e}s_1^{|u|_{s_1}}\ldots s_k^{|u|_{s_k}}f^{|u|_f}$
	is a permutation of $u$,
	we deduce
\begin{align}
    \pi(euf)&=\pi(ee^{|u|_e}s_1^{|u|_{s_1}}\ldots s_k^{|u|_{s_k}}f^{|u|_f}f)\\
    &=\pi(es_1^{i_1}\ldots s_k^{i_k}f) &&(\text{By (\ref{eq:EAComeq1})})
\end{align}
	where $i_j=\min(n,|u|_{s_j})$, thus 
	$i_j \equiv^n |u|_{s_j}$ for all $j\in [1,k]$.
By the assumption $|u|_{s_i}\equiv^n |v|_{s_i}$ for all $i\in [1,k]$. 
Hence we have $\pi(euf)=\pi(es_1^{i_1}\ldots s_k^{i_k}f)=\pi(evf)$.
\end{appendixproof}

\begin{corollary}
    The following are identities of $\EACom$, where $e$ and $f$ range over idempotents.
    \begin{enumerate}
\item $exf=exef=efxf$
\item$exyf=exeyf=exfyf$
\end{enumerate}
\label{E Acom-exyf}
\end{corollary}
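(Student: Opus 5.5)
The plan is to derive both items from Lemma~\ref{L cont EACom}, in exactly the way Corollary~\ref{E exyf} follows from Lemma~\ref{L cont EJ1}. Let $S\in\EACom$ with exponent $n$, and let $\pi$ be its evaluation morphism. Every claimed identity has the shape $\pi(euf)=\pi(evf)$, where $u,v\in S^+$ are words over the letters $x,y,e,f$. So it suffices to check the hypothesis of Lemma~\ref{L cont EACom}: after deleting the letters $e$ and $f$, the words $u$ and $v$ have the same letters with the same multiplicities. Equal multiplicities give $\cont^{=t}(u)\setminus\{e,f\}=\cont^{=t}(v)\setminus\{e,f\}$ for all $t\in[0,n-1]$, and also equal $\cont^{\geq n}$ after removing $e,f$.

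For item (1), I write $exef=e\,(xe)\,f$ and $efxf=e\,(fx)\,f$, and compare them with $exf=e\,(x)\,f$. After removing $e$ and $f$, each of the middle words $xe$, $fx$ and $x$ reduces to the single letter $x$. Lemma~\ref{L cont EACom} then gives all three products equal.

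For item (2), I proceed the same way with the middle words $xy$, $xey$ and $xfy$. Each reduces to one occurrence of $x$ and one of $y$.

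One subtlety needs a remark, because the letters are elements of $S$, not formal variables. If $x$ itself equals $e$ or $f$ as an element, it is also deleted. But then it is deleted from both sides, since $x$ occurs exactly once in every word compared. So the multiset of remaining letters still agrees. The same argument applies when $x=y$, or when $y\in\{e,f\}$.

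I expect no real obstacle. The only care needed is this coincidence check, and the point that the hypothesis concerns letter counts up to threshold $n$ rather than mere content. Since each non-idempotent letter appears once on every side, the counts agree exactly.
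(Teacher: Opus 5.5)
Your proposal is correct and matches the paper's intended argument. The paper states this corollary without proof as an immediate consequence of Lemma~\ref{L cont EACom}, just as Corollary~\ref{E exyf} follows from Lemma~\ref{L cont EJ1}: each side is written as $e\,u\,f$ with middle words $x, xe, fx$ (resp.\ $xy, xey, xfy$), whose letter multiplicities $|u|_a$ for $a\notin\{e,f\}$ coincide, and this automatically covers the coincidences $x\in\{e,f\}$ or $x=y$ that you flag.
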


\ificalp
\else

\begin{samepage}
\begin{definition}\label{D Acom Count}
	Let $k,t\in \bbN_{>0}$. 
	For each vector $\nu = (\nu_{i,a})_{i \in [0,k-1], a \in \Sigma}$, 
	where $\nu_{i,a} \in [0,t]$ we define the language
	\begin{align}
    \mathrm{Count}_k^t(\nu)&=\{ w \in \Sigma^{k\N_{>0}} : 
		\left|\pi_{i,k}(w)\right|_a \equiv^t \nu_{i,a}\text{ for all
		$i\in[0,k-1]$ and all $a\in\Sigma$}\}.
	\end{align}
\end{definition}
\end{samepage}

\begin{definition}\label{D Acom Thr}
	For each $k,t\in \bbN_{>0}$, $i\in [0,k-1]$ and $a \in \Sigma$ 
	the language $\mathrm{Thr}_{i,k}(a,t)$ is defined to be
    \begin{align}
        \mathrm{Thr}_{i,k}(a, t) &= 
	    \{ w \in \Sigma^{k\bbN_{>0}} : \left|\pi_{i,k}(w)\right|_a \geq t \}.
    \end{align}
\end{definition}
\fi

\begin{remark}
For 
$\nu=(\nu_{i,a})_{i\in[0,k-1],\,a\in\Sigma}
\in \{0,\ldots,t\}^{[0,k-1]\times\Sigma}$ 
we have
	\begin{align}
\mathrm{Count}_k^t(\nu)
=
\bigcap_{\substack{i\in[0,k-1]\\ a\in\Sigma}}
\begin{cases} \Sigma^{k\bbN_{>0}}\cap 
\overline{\mathrm{Thr}_{i,k}(a,1)},
	& \text{if }\nu_{i,a}=0,\\[1mm]
\mathrm{Thr}_{i,k}(a,\nu_{i,a})
\cap
\overline{\mathrm{Thr}_{i,k}(a,\nu_{i,a}+1)},
    & \text{if }0<\nu_{i,a}<t,\\[1mm]
\mathrm{Thr}_{i,k}(a,t),
    & \text{otherwise, i.e. }\nu_{i,a}=t.
\end{cases}
	\end{align}
\label{R Acom Count}
\end{remark}

\begin{samepage}
	\begin{theorem}\label{T QEAComm}
	Let $L\subseteq\Sigma^+$ be a regular language. 
		Then the following statements are equivalent:
	\begin{enumerate}
		\item $\eta_L\in\QEACom$.
		\item $L$ is a union of $\approx_k^t$-classes
			for some $k,t\in\bbN_{>0}$.

        \item $L$ is a disjoint finite union of finite languages and languages of the 
		form $u\mathrm{Count}_k^t(\nu)v$, where $k,t\in\bbN_{>0}$, 
			$\nu \in [0,t]^{[0,k-1]\times \Sigma}$ and $u ,v\in \Sigma^+$.
		\item $L$ is a finite Boolean combination of finite languages and languages
			of the form
			$u\mathrm{Thr}_{i,k}(a,t)v$,
			where $k,t\in\bbN_{>0}$, $i \in [0,k-1]$,
            $a \in\Sigma$, and $u,v\in\Sigma^+$.
        
		\item $L$ is definable in 
			$\rmC^1[\Sigma,  +\omega, \min, \max, \rmmod]$.
\item $L$ is definable in 
			$\rmC^1[\Sigma, \mathrm{reg}]$.
  \item $L$ is definable in 
			$\rmC^1[\Sigma, \mathrm{arb}]$.
    \end{enumerate}
\end{theorem}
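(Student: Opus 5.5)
The proof will mirror that of Theorem~\ref{T const main}, establishing the cycle $(1)\Rightarrow(2)\Rightarrow(3)\Rightarrow(4)\Rightarrow(5)\Rightarrow(6)\Rightarrow(7)\Rightarrow(1)$. The only algebraic direction that needs real work is $(1)\Rightarrow(2)$. Here I would follow \S\,\ref{S Congruence const} essentially verbatim, with $\congC_k^1$ replaced by $\congC_k^t$ and Lemma~\ref{L cont EJ1} replaced by Lemma~\ref{L cont EACom}. Concretely, let $\varphi$ be a stamp in $\QEACom$ with stability index $s$, let $N=|\stab(\varphi)|$, and let $n$ be an exponent of $\stab(\varphi)$. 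The goal is to show that $u\approx_{2sN}^{n} v$ implies $u=_\varphi v$. Lemma~\ref{lemma:idem-insert} only uses the identity $exyf=exeyf$, which holds in $\EACom$ by Corollary~\ref{E Acom-exyf}, so it applies unchanged. In the analogue of Lemma~\ref{lemma:pup=pvp}, $\block_s(pu'p)$ is the sequence of letters $\lambda_i(u_i)$ interleaved with $p$. The multiset of the $\lambda_i(u_i)$, counted up to threshold $n$, is determined by the counts $|\pi_{i,s}(u)|_a$ up to threshold $n$: the map $(i \bmod s, a)\mapsto \lambda_i(a)$ may not be injective, but threshold counts of a sum are determined by the threshold counts of the summands. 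Hence $u\congC_s^n v$ gives matching $\cont^{=t}$ and $\cont^{\ge n}$ data, up to the occurrences of $p$, which Lemma~\ref{L cont EACom} allows us to ignore, taking $e=f=\sigma_\varphi(p)$. The analogues of Lemma~\ref{lemma:alg-to-cong} and the boundary argument of Proposition~\ref{P const tech} then go through as before, using Lemma~\ref{L cont}, which is already stated for arbitrary $t$. The step I expect to need the most care is exactly this threshold bookkeeping for non-injective $\lambda$, and I would make it an explicit small lemma.

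For $(2)\Rightarrow(3)$, each nonsingleton $\approx_k^t$-class has the form $uKv$ with $|u|=|v|=k$ and $K$ a $\congC_k^t$-class. I would decompose $K$ by its last $r<k$ letters $x$, exactly as in $(4)\Rightarrow(5)$ of Theorem~\ref{T const main}. The counts of the prefix in $\Sigma^{k\N_{>0}}$ are determined up to threshold by subtracting the at most one occurrence per residue contributed by $x$; the ambiguity at the threshold, where a count of $t$ minus one may come from $t-1$ or from anything at least $t$, is resolved by taking a disjoint union over the two possibilities. This yields a disjoint union of languages $u\,\mathrm{Count}_k^t(\nu)\,xv$. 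The implication $(3)\Rightarrow(4)$ is Remark~\ref{R Acom Count}, together with the identity $\Sigma^{k\N_{>0}}=\mathrm{Thr}_{i,k}(a,0)$-style padding, or equivalently writing $\Sigma^{k\N_{>0}}\cap\overline{\mathrm{Thr}}$ as a Boolean combination in which the length condition is expressed by finitely many cases.

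For $(4)\Rightarrow(5)$, I would extend the formula $\varphi_L$ from the proof of $(6)\Rightarrow(7)$ of Theorem~\ref{T const main}, replacing the universal statement by $\exists^{\ge t}x\,(x\in[\min+m,\max-n]\wedge x\equiv i'\bmod k\wedge P_a(x))$. The implications $(5)\Rightarrow(6)\Rightarrow(7)$ are trivial.

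To close the cycle I would prove $(7)\Rightarrow(1)$ algebraically. Write a sentence as a Boolean combination of $\exists^{\ge t}x\,\xi$, with $\xi$ a conjunction as in the proof of Theorem~\ref{T const main}; the constants are then fixed positions and the set of eligible positions $P_n$ depends only on $n$. The language is regular by hypothesis, so I would show that $\stab(\eta_L)$ satisfies both identities of $\EACom$. This is the main obstacle, since arbitrary numerical predicates are involved. Here I would adapt the swap idea of Proposition~\ref{P swap}: a word $w$ is accepted according to threshold counts over finitely many position sets $P$, together with letters at boundedly many fixed positions. For commutation I would use the idempotents $e,f$ as long padding blocks to locate the constants, and a pigeonhole argument over the $2^{O(1)}$ membership patterns of positions to find two blocks with the same pattern, exactly as in the lower-bound proof of Proposition~\ref{P eqconst}. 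Swapping these two blocks, or replacing $x^{\omega+1}$ by $x^\omega$, which changes counts only beyond threshold once enough $x$-blocks are pumped, preserves acceptance. Translating this back through $\flat_s$ and Proposition~\ref{P stable reduce} would give the identities.
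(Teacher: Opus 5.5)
Your cycle $(1)\Rightarrow\dots\Rightarrow(7)\Rightarrow(1)$ is complete, and its first six links are essentially the paper's. The paper likewise reruns \S\,\ref{S Congruence const} with $\congC_s^{N!}$ and Lemma~\ref{L cont EACom}, absorbing the non-injectivity of $\lambda$ by summing threshold-equivalent counts. It also decomposes $\congC_k^t$-classes by their last $r<k$ letters, as you do.

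The genuine difference is how you close the cycle. The paper never proves $(7)\Rightarrow(1)$ directly. It goes $(7)\Rightarrow(5)$ via the Fijalkow--Paperman theorem, which replaces arbitrary unary predicates by regular ones when the language is regular. It then proves $(5)\Rightarrow(2)$ by a syntactic analysis of a $\rmC^1[\Sigma,+\omega,\min,\max,\rmmod]$ sentence, reading $k,t$ off its moduli and counting parameters. Finally it proves $(2)\Rightarrow(1)$ by checking both identities on $\Sigma^+/{\approx_k^t}$.

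You instead transplant the swap/pigeonhole technique of Proposition~\ref{P eqconst} from circuits to logic. For each length, acceptance is determined by the letters at boundedly many constant positions. It is also determined by the counts, up to threshold $M$ (the largest counting parameter), of (position class, letter) pairs over boundedly many position classes. Your route is elementary and avoids a deep external theorem, using regularity only to speak of $\eta_L$. The paper's route in return yields the logical collapse from $\mathrm{arb}$ to regular predicates, with $k,t$ read off directly from the formula.

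Some points must be made explicit for your $(7)\Rightarrow(1)$ to go through.

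\textbf{Aperiodicity.} The identity does not follow from pumping a run of consecutive $x$-blocks: keeping the length fixed forces neighbouring blocks to shift, and you have no control over the position classes involved. You must first prove commutation, as in the proof of Theorem~\ref{T const lower}:
\begin{itemize}
\item Show $epxeyqf=epyexqf$ by swapping two separated blocks of a common class, where a block class is an $s$-tuple of position classes and the blocks avoid the constants.
\item Deduce $epxyqf=epyxqf$ and $epxyqf=epexeyeqf$ by Lemma~\ref{L identities}.
\item Compare representatives of $ep(ex)^{n}eqf$ and $ep(ex)^{n+1}eqf$, with $n\geq M$ a multiple of an exponent. Place the copies of $x$ on pairwise non-adjacent blocks of one class, with at least $M+1$ blocks of that class still carrying $e$ in the first word. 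Then every (class, letter) count agrees up to threshold $M$.
\end{itemize}

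\textbf{Transfer tool.} Proposition~\ref{P stable reduce} transfers circuit bounds, not definability. Argue through the syntactic congruence instead: choose the representatives separately for each context $c,d\in\Sigma^*$ and each large length. This is legitimate because all representatives of, say, $epxeyqf$ have the same image under $\eta_L$.

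\textbf{Normal form.} You do not need $\xi$ to be a conjunction; distributing $\exists^{\geq t}$ over a disjunction requires exclusive disjuncts anyway. The class description above is all you use.

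\textbf{Step $(3)\Rightarrow(4)$.} $\mathrm{Thr}_{i,k}(a,0)$ is not an admissible generator. Use $\Sigma^{k\bbN_{>0}}=\bigcup_{i,a}\mathrm{Thr}_{i,k}(a,1)$, as the paper does. Together with injectivity of $w\mapsto uwv$, this makes $u(\cdot)v$ commute with Boolean operations.
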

\end{samepage}
\begin{proof}
We show 
$(1) \Longrightarrow (2) \Longrightarrow (1)$, $(2) \Longrightarrow (3) \Longrightarrow
(4) \Longrightarrow (5) \Longrightarrow (2)$ and $(5) \Longrightarrow (6)
\Longrightarrow (7)\Longrightarrow (5)$.
\medskip

\begin{samepage}
\noindent $(1) \Longrightarrow  (2)$\quad 
By Proposition~\ref{P log tech} proved in 
		\S\,\ref{S Congruence Acomm}.
	\end{samepage}
\medskip

	\begin{samepage}
\noindent 
	$(2) \Longrightarrow  (1)$\quad It suffices to show that the stamp $\varphi:\Sigma^+ \to \Sigma^+/\approx_k^t$ is in $\QEACom$. 
    Let $s$ be the stability index of $\varphi$ and let $N=|\stab(\varphi)|$. Let $s'=2k+kt|\Sigma|^k$. 
   If $w\in \Sigma^{s'}$ then $\block_k(\kappa_k(w))$ contains $t|\Sigma|^k$ blocks of words of length $k$. By pigeonhole principle there is a block $x\in \Sigma^k$ that appears at least $t$ times in $\block_k(\kappa_k(w))$. From this observation it is easy to derive that $w \approx_k^t w'$ for some $w' \in \Sigma^{2s'}$ and vice versa. Therefore $\varphi(\Sigma^{s'}) = \varphi(\Sigma^{2s'})$. Moreover,
    \begin{align}
\varphi(\Sigma^{s'})=\varphi(\Sigma^{s's})=\varphi(\Sigma^{s}) = \stab(\varphi).     
    \end{align}
    
	Take $e,f\in E(\stab(\varphi)), p,q,x,y\in\stab(\varphi)$
		and choose $t_\ell$ with $\varphi(t_\ell)=\ell$.
	It is easy to verify the following:
    \begin{align}
        t_et_pt_x^{N!}t_qt_f &\approx_k^t t_et_pt_x^{N!+1}t_qt_f\\
        t_et_pt_xt_yt_qt_f &\approx_k^t t_et_pt_yt_xt_qt_f
    \end{align}
    Therefore $\stab(\varphi)$ satisfies the identities of $\EACom$.
\end{samepage}
\medskip

\begin{samepage}
\noindent 
	$(2) \Longrightarrow  (3)$\quad Each equivalence class of $\approx_k^t$ is either a singleton 
		set, or a set of the form  $uK_wv$ where $u,v \in \Sigma^{k}$ and $K_w=\{w'\mid w' \calC_k^t w\}$ for some $w\in \Sigma^{\geq k}$. To prove the direction it suffices to consider 
		those $w\in\Sigma^{\geq k}$ such that $|w|\not\equiv_k 0$.
        
For a word $u\in \Sigma^+$, let $\nu_u \in [0,t]^{[0,k-1]\times \Sigma}$ denote the vector 
\begin{align}
\nu_u = \left (\min \left (\left | \pi_{i,k}(u)\right|_a,t \right) \right)_{i\in [0,k-1], a\in \Sigma}.
\end{align}

For vectors $\nu,\nu',\nu''  \in [0,t]^{[0,k-1]\times \Sigma}$, we write $\nu+\nu'=\nu''$ if $\nu_{i,a}+\nu_{i,a}' \equiv^t \nu_{i,a}''$ for all $i\in [0,k-1]$ and $a\in \Sigma$. 

Let $w\in \Sigma^{\geq k}$ be such that $|w| \not \equiv_k 0$. Let $r\in [1,k-1]$ be such that $|w| \equiv_k r$. We write $K_w$ as a disjoint union $K_w'$ of languages by performing a case analysis
on the last $r$ letters, and the number of times they appear in their respective residue classes previously:
\begin{align}
    K_w'\quad =\quad
    \biguplus_{x \in \Sigma^r}\ \biguplus_{\nu + \nu_x = \nu_w}
	 \mathrm{Count}_{k}^t(\nu) \cdot x.
\end{align}
		We claim that $K_w=K_w'$.
To show the right-to-left inclusion assume some $w'\in K_w'$.
		Then $|w'|\equiv_k r$  and $\nu_{w'}=\nu_w$. 
		Hence $w'\in K_w$.
Conversely assume that $w'\in K_w$. 
		Then $w'=vx$
		for words $v\in\Sigma^{k\N_{>0}}$ and $x\in\Sigma^r$.
		Since $\nu_{w'}=\nu_{v}+\nu_{x}$, in fact we have  $v\in\mathrm{Count}_k^t(\nu_v)$
		and we obtain that $w'=vx\in K_w'$, as required.

Therefore we can write $uK_wv$ as a disjoint union of languages
		of the form $uK''xv$, where $K''$ is a language
		of the form $\mathrm{Count}_k^t(\nu)$ for some  $\nu \in [0,t]^{[0,k-1]\times \Sigma}$.
	\end{samepage}
    \medskip

\begin{samepage}
\noindent 
	$(3) \Longrightarrow  (4)$\quad Follows from Remark~\ref{R Acom Count}. Note that for $k>0$
\begin{align}
    \Sigma^{k\bbN_{>0}} &= \bigcup_{(i,a) \in [0,k-1]\times \Sigma } \mathrm{Thr}_{i,k}(a,1).
\end{align}
    \end{samepage}

\begin{samepage}
\noindent 
	$(4) \Longrightarrow  (5)$\quad We repeat the proof of the corresponding direction in the proof of Theorem~\ref{T const main}. It suffices to show that there is a formula defining sets of the form 
$L=a_0\ldots a_{m-1} \mathrm{Thr}_{i,k}(a,t)b_0\ldots b_{n-1}$ where $m,n,k,t\in \bbN_{>0}$ and $i\in [0,k-1]$.

Let $i'=(m+i)\bmod k\in[0,k-1]$. We define
\begin{align}
	\varphi_L&\defeq \bigwedge_{i\in[0,m-1]} \exists x\, (\min+i=x \wedge P_{a_i}(x)) \wedge \bigwedge_{i\in[0,n-1]} \exists x\, (x+i = \max \wedge P_{b_{n-1-i}}(x))\label{E Acom formula 1}\\
	&~~\wedge  \exists^{\geq t} x\,(x \in [\min + m, \max - n] \wedge x \equiv i' \bmod k \wedge  P_{a}(x))\label{E Acom formula 2}\\
	&~~\wedge\max\equiv m+n-1\bmod k\ \wedge
	\exists x\, (x\in[\min + m, \max - n])\label{E Acom formula 3}
\end{align}
As before (\ref{E Acom formula 1}) expresses that the word begins
with $a_0\dots a_{m-1}$ and ends with $b_0\dots b_{n-1}$ and 
the conjunct (\ref{E Acom formula 2}) expresses that there are at least $t$ positions labelled $a$ that are neither among the first $m$ nor the last $n$ and are $i'\pmod{k}$.
The part (\ref{E Acom formula 3}) expresses that the length of word is 
in residue class $m+n\bmod k$ and that the word has length at least $m+n+1$.
\medskip
	\end{samepage}
    \medskip

\begin{samepage}
\noindent 
	$(5) \Longrightarrow  (2)$\quad By Proposition~\ref{P Logic-to-cong Acomm} proved in \S\,\ref{S Logic Acomm}.
	\end{samepage}
    \medskip

\begin{samepage}
\noindent 
	$(5) \Longrightarrow  (6)$\quad Trivial.
	\end{samepage}
    \medskip

    \begin{samepage}
\noindent 
	$(6) \Longrightarrow  (7)$\quad Trivial.
	\end{samepage}
\medskip

 \begin{samepage}
\noindent $(7) \Longrightarrow  (5)$\quad By Proposition~\ref{P Logic Acomm} proved in \S\,\ref{S Logic Acomm}.
\end{samepage}
\medskip

\end{proof}

\ificalp
\else
\begin{samepage}
	\begin{proposition}\label{P QEACom log}
    If $\eta_L \in \QEACom$ then $\opc(L) \in O(\log n)$.
\end{proposition}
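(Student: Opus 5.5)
By Theorem~\ref{T QEAComm}, direction (1)$\Longrightarrow$(4), a language $L$ with $\eta_L\in\QEACom$ is a finite Boolean combination of finite languages and languages of the form $u\,\mathrm{Thr}_{i,k}(a,t)\,v$. The plan is to put an $O(\log n)$ bound on each building block and then combine them. Since Theorem~\ref{T mild variety} shows that $\SIZE(O(\log n))\cap\REG$ is an lm-variety, it is in particular closed under Boolean operations. So it suffices to bound each building block, and the whole Boolean combination stays in $O(\log n)$.

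\emph{Finite languages.} A finite language contains no words of length $n$ for large $n$, so a constant-size circuit rejecting everything works. For the finitely many remaining lengths, any circuit is constant size.

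\emph{The languages $u\,\mathrm{Thr}_{i,k}(a,t)\,v$.} Fix $n$ and set $m=n-|u|-|v|$. If $m\notin k\bbN_{>0}$ the language is empty at this length. Otherwise the circuit is the conjunction of three parts:
\begin{itemize}
\item input gates checking the fixed prefix $u$ and suffix $v$, which costs $O(1)$ gates;
\item a circuit deciding whether at least $t$ of the positions $p\in P=\{|u|+j\mid j\in[0,m-1],\ j\equiv_k i\}$ carry the letter $a$.
\end{itemize}
The second part is a threshold-$t$ function, for constant $t$, of the at most $n$ Boolean inputs $(p,\{a\})$ with $p\in P$. By the results cited in the introduction \cite{HASTAD1994200,Friedman86}, such a function has unbounded fan-in circuits of size $O(\log n)$; the size is measured in non-input gates, and the wires to the inputs are free.

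The step I expect to be the main obstacle is making sure the cited construction really has $O(\log n)$ \emph{gates}. I would use the standard hashing construction. Fix a family of $r=O(\log n)$ colourings $h_1,\dots,h_r:P\to[1,t]$ such that every $t$-subset of $P$ is coloured injectively by some $h_\ell$; a random or explicit perfect hash family of size $O_t(\log n)$ does this. For each $\ell$, compute
\begin{align}
\bigwedge_{c=1}^{t}\ \bigvee_{p\in h_\ell^{-1}(c)}(p,\{a\}),
\end{align}
which uses $t+1$ gates, and take the disjunction over all $\ell$. This circuit accepts exactly the inputs with at least $t$ occurrences of $a$ in $P$:
\begin{itemize}
\item if there are fewer than $t$ such positions, no colouring can see all $t$ colours, so no conjunct fires;
\item if there are at least $t$, pick any $t$ of them; some $h_\ell$ colours them injectively, so the conjunct for that $\ell$ fires.
\end{itemize}
The total is $O(t\log n)=O(\log n)$ gates. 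Because $\log n$ is mild, Theorem~\ref{T asymptotics} and the Boolean closure give $\opc(L)\in O(\log n)$.
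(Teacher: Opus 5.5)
Your proposal is correct and follows the paper's proof: use (1)$\Longrightarrow$(4) of Theorem~\ref{T QEAComm}, give each $u\,\mathrm{Thr}_{i,k}(a,t)\,v$ an $O(\log n)$ circuit, and close under Boolean operations. The paper simply cites \cite{HASTAD1994200,Friedman86} for the threshold bound, whereas you write out the perfect-hash-family construction of size $O_t(\log n)$, which is essentially the construction behind Friedman's bound. For it to fit the circuit model, where a non-input gate must have positive in-degree, you should discard non-surjective colourings (they are never injective on a $t$-set) and use a constant-false circuit when $|P|<t$.
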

\begin{proof}
If $\eta_L \in \QEACom$ then $L$ is a finite Boolean combination of finite languages and languages
of the form $u\mathrm{Thr}_{i,k}(a,t)v$,
where $k,t\in\bbN_{>0}$, $i \in [0,k-1]$,
$a \in\Sigma$, and $u,v\in\Sigma^+$. Since each language of the form $u\mathrm{Thr}_{i,k}(a,t)v$ is computed by a circuit of size $O(\log n)$ \cite{HASTAD1994200,Friedman86}, the proposition follows.
\end{proof}
\end{samepage}

The following example shows that there are languages in $\SIZE(O(\log n))$ that are not in $\QEACom$. 
\begin{example}
    The language $L=c^*ac^*bc^*$ can be computed by a circuit family of size $O(\log n)$. It is easy to see that $\eta_L\not \in \QEACom$.
\end{example}
\fi

\subsubsection{From Logic to Congruence}
\label{S Logic Acomm}

Let $\mathrm{arb}_1$ denote the class of all arbitrary \emph{unary} numerical predicates. Similarly, let $\mathrm{reg}_1$  denote the class of all regular unary numerical predicates, i.e., the class of unary predicates expressible as a Boolean combination of the set of unary predicates $C_{\min+i}$ and $C_{\max-i}$ for each $i\in \bbN$ that denote the positions $\min+i$ and $\max-i$, $L_{r,k}$ that is true on all positions if and only if the last position is in residue class $r \bmod k$, and  
the unary predicates $\{x \equiv r \bmod k \mid k \in \bbN_{>0}, r \in \bbN\}$ \cite{Straubingbook}. 
We use the following theorem.

\begin{theorem}[Fijalkow-Paperman \cite{FijalkowPaperman}]
  Assume that $\varphi \in \mathrm{MSO}[\Sigma,<,\mathrm{arb}_1]$ is a formula
  using the predicates $R_1,\ldots, R_\ell \in \mathrm{arb}_1$ for $\ell \in \bbN$. If $\varphi$ defines a regular language then there exist regular unary numerical predicates $P_1,\ldots, P_\ell \in \mathrm{reg}_1$ such that the formula $\varphi' \in \mathrm{MSO}[\Sigma,<,\mathrm{reg}_1]$ obtained by substituting $R_i$ by $P_i$ for each $i\in [1,\ell]$ defines the same language.
  \label{T FijalkowPaperman}

\end{theorem}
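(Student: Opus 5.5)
The plan is to turn the arbitrary predicates into \emph{advice strings} and show that a regular choice of advice always exists. For each length $n$ the tuple $(R_1,\dots,R_\ell)$ interpreted on $[0,n-1]$ is a word $\alpha_n\in\Delta^n$ with $\Delta=\{0,1\}^\ell$. Reading $R_i$ as the $i$-th bit of an extra track, $\varphi$ becomes an $\mathrm{MSO}[\Sigma\times\Delta,<]$ formula. By B\"uchi's theorem it defines a regular language $K\subseteq(\Sigma\times\Delta)^+$, and by hypothesis
\begin{align}
w\in L\iff w\otimes\alpha_{|w|}\in K\quad\text{for all } w\in\Sigma^+,
\end{align}
where $\otimes$ denotes the convolution of two words of equal length.

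\textbf{Step 1: the set of good advice strings is regular.} Define $A\subseteq\Delta^+$ as the set of all $\beta$ such that for every $w\in\Sigma^{|\beta|}$ we have $w\otimes\beta\in K\Leftrightarrow w\in L$. Since $L$ is regular, the language $L'=\{w\otimes\beta\mid w\in L,\ \beta\in\Delta^{|w|}\}$ is regular. The symmetric difference $K\mathbin{\triangle}L'$ is then regular, so its projection onto the $\Delta$-track is regular. The set $A$ is the complement of that projection, hence regular. Moreover $\alpha_n\in A$ for every $n$, so $A$ contains a word of every length.

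\textbf{Step 2: choose a thin regular subset.} Fix a linear order on $\Delta$ and let $A_{\min}$ consist of the lexicographically least word of $A$ of each length. Since $A$ is regular, $A_{\min}=A\setminus\{\beta\mid\exists\beta'\in A,\ |\beta'|=|\beta|,\ \beta'<_{\mathrm{lex}}\beta\}$ is regular. This uses the standard synchronous-automaton construction again. The result $A_{\min}$ is a thin regular language containing exactly one word $\beta_n$ of each length $n$. By the cited characterization of slender languages, $A_{\min}$ is a finite union of languages $xu^*y$.

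\textbf{Step 3: read off regular unary predicates.} Choose $k$ a common multiple of all the $|u|$ and split lengths $n$ by $n\bmod k$ (plus finitely many short lengths). On each class, $\beta_n=xu^my$ for a fixed triple $(x,u,y)$. Hence bit $i$ of $\beta_n$ at position $p$ is determined by three things. The first is whether $p=\min+j$ for some $j<|x|$, which is given by the predicates $C_{\min+j}$. The second is whether $p=\max-j$ for some $j<|y|$, given by $C_{\max-j}$. Otherwise, in the middle segment, the bit depends on $p\bmod |u|$, which is given by the predicates $x\equiv r\bmod k$. The dependence on the residue class of $n$ is handled by the predicates $L_{r,k}$. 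So $P_i(p)=$ \enquote{bit $i$ of $\beta_n$ at $p$} is a finite Boolean combination of predicates in $\mathrm{reg}_1$.

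Since each $\beta_n\in A$, substituting $R_i$ by $P_i$ in $\varphi$ yields $\varphi'$ with $L(\varphi')=L$. I expect the main obstacle to be Step 1: the regularity of $A$ genuinely needs the hypothesis that $L$ is regular, together with a careful synchronous-projection argument. The uniformization in Step 2, which rests on the lexicographic minimum and on the structure of slender languages, is the other delicate point.
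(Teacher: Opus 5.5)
The paper gives no proof of this theorem. It imports it from Fijalkow and Paperman and uses it only as a black box in Proposition~\ref{P Logic Acomm}, so there is no in-paper argument to compare yours with. Judged on its own, your proof is correct.

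\textbf{Steps 1 and 2.} In Step 1, $A$ is regular because it is obtained from $K$ and $L$ by inverse letter-to-letter morphism, symmetric difference, letter-to-letter projection and complement. The fact that $\alpha_n\in A$ is exactly the hypothesis that $\varphi$ defines $L$. The lexicographic selection in Step 2 is regular because strict lexicographic order on equal-length words is a synchronous rational relation. It picks exactly one word per length because $A$ meets every length.

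\textbf{Step 3.} The claim that the triple $(x,u,y)$ is fixed on a residue class holds only for large $n$, and thinness is what justifies it. For large $n\equiv r \pmod k$, the set of indices $j$ with $n\in|x_jy_j|+|u_j|\bbN$ depends only on $r$, and any such $j$ yields the unique word $\beta_n$. Two details should be written out rather than waved at:
\begin{itemize}
\item A single short length $m$ is isolated by $C_{\min+p}\wedge C_{\max-(m-1-p)}$, which holds at position $p$ if, and only if, the length is $m$. This lets you hard-code $\beta_m$ for the finitely many short lengths.
\item The disjunction of $C_{\min+q}\wedge C_{\max-(m-1-q)}$ over $m<n_0$ and $q<m$ is a unary predicate that is true at every position if the length is $<n_0$ and false everywhere otherwise. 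You need it to switch off the periodic clause on short lengths.
\end{itemize}

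\textbf{An alternative to Step 3.} You could bypass Step 3, and with it the structure theorem for slender languages. The predicate ``bit $i$ at position $x$ of the unique word of $A_{\min}$ of the current length'' is directly $\mathrm{MSO}[<]$-definable: existentially quantify $\ell$ set variables encoding a word of $A_{\min}$ and test $x\in X_i$. It is therefore a regular numerical predicate in the MSO sense of Section~\ref{S FO}.

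Your explicit route buys something, though. It lands directly in the Boolean-combination normal form over $C_{\min+i}$, $C_{\max-i}$, $L_{r,k}$ and $x\equiv r \bmod k$. The paper takes this form as its definition of $\mathrm{reg}_1$, and needs it in order to translate back into $\rmC^1[\Sigma,+\omega,\min,\max,\rmmod]$. The MSO shortcut would instead require the separate normal-form result for regular unary predicates.
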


\begin{proposition}
Every regular language defined by a formula in $\rmC^1[\Sigma, \mathrm{arb}]$ is definable in $\rmC^1[\Sigma, +\omega, \min, \max, \mod]$.
\label{P Logic Acomm}
\end{proposition}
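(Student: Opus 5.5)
The plan is to remove the arbitrary numerical predicates in favour of \emph{unary} ones, apply Theorem~\ref{T FijalkowPaperman}, and then observe that regular unary predicates are already expressible with the vocabulary $+\omega,\min,\max,\rmmod$. Throughout, the one-variable shape of the formula should be preserved.

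Let $\varphi\in\rmC^1[\Sigma,\mathrm{arb}]$ define the regular language $L$, and let $x$ be its unique variable. First I would normalize $\varphi$. As in the direction $(9)\Longrightarrow(1)$ of Theorem~\ref{T const main}, $\varphi$ is a Boolean combination of formulas $\exists^{\geq m}x\,\psi$ with $\psi$ quantifier-free. Besides letter predicates $P_a(x)$, such a $\psi$ may contain the following atoms.
\begin{enumerate}
\item Letter atoms $P_a(c)$ for an arbitrary constant $c$. Since the interpretation of $c$ depends only on the length $n$, I replace $P_a(c)$ by $\exists x\,(R_c(x)\wedge P_a(x))$. Here $R_c$ is the unary numerical predicate that holds exactly at the position $c$.
\item Numerical atoms $R(x,c_0,\dots,c_{\ell-1})$ mentioning $x$. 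For each fixed length $n$ these define a subset of $[0,n-1]$, so each is a unary arbitrary predicate $R'(x)$.
\item Numerical atoms without $x$. These are properties of $n$ alone. I replace each by $\forall x\,R''(x)$, where $R''$ holds at every position iff the property holds for $n$; this is correct because words are non-empty.
\end{enumerate}
Pulling the new quantified subformulas out of $\psi$ is the one delicate point. Each of them is a sentence, that is, it has no free occurrence of $x$. Hence I can case-split on its truth value: replace $\exists^{\geq m}x\,\psi$ by a Boolean combination of these sentences and of $\exists^{\geq m}x\,\psi'$, where $\psi'$ is obtained by plugging in $\top$ or $\bot$. The result is an equivalent formula of $\rmC^1[\Sigma,\mathrm{arb}_1]$ that uses finitely many unary predicates $R_1,\dots,R_\ell$.

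Since counting quantifiers $\exists^{\geq m}$ are first-order definable, this formula lies in $\mathrm{MSO}[\Sigma,<,\mathrm{arb}_1]$. It defines the regular language $L$, so Theorem~\ref{T FijalkowPaperman} yields predicates $P_1,\dots,P_\ell\in\mathrm{reg}_1$ such that substituting $P_i$ for $R_i$ defines the same language $L$. The substitution is purely syntactic, so the resulting formula is still a one-variable counting formula, now over $\mathrm{reg}_1$.

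It remains to express each $P_i(x)$ in $\rmFO^1[+\omega,\min,\max,\rmmod]$ with free variable $x$. Each $P_i$ is a Boolean combination of the generators of $\mathrm{reg}_1$, which translate as follows:
\begin{enumerate}
\item $C_{\min+i}(x)$ becomes $\min+i=x$;
\item $C_{\max-i}(x)$ becomes $x+i=\max$;
\item $L_{r,k}$ becomes $\max\equiv r\bmod k$;
\item $x\equiv r\bmod k$ is already in the vocabulary.
\end{enumerate}
Substituting these translations gives a formula of $\rmC^1[\Sigma,+\omega,\min,\max,\rmmod]$ defining $L$.

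I expect the main obstacle to be the normalization step: making sure that every use of arbitrary constants and nullary conditions is genuinely captured by unary predicates without introducing a second variable, so that Theorem~\ref{T FijalkowPaperman} applies to a formula of exactly the right shape. The remaining steps are direct translations.
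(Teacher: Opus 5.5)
Your proposal is correct and follows the paper's own proof. Both arguments turn every arbitrary numerical atom into a unary predicate, replacing $P_a(c)$ and $x$-free atoms by quantified sentences. They then pass through the first-order translation of $\exists^{\geq m}$ to apply Theorem~\ref{T FijalkowPaperman}, and substitute the resulting $\mathrm{reg}_1$ predicates back into the one-variable formula. Your case-split that pulls out the nested sentences is harmless but not needed, since re-quantifying $x$ inside a $\rmC^1$ formula is already allowed. Your explicit translation of the $\mathrm{reg}_1$ generators into $+\omega,\min,\max,\rmmod$ spells out what the paper leaves implicit in ``reverse the translation''.
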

\begin{proof}
Assume that $\varphi \in \rmC^1[\Sigma, \mathrm{arb}]$ defines a regular language. Since the logic uses only one variable, each occurrence of an arbitrary numerical predicate with arity $>1$ in $\varphi$ can be replaced by an arbitrary unary predicate. For instance every occurrence of $R(\min, x, \max, x)$ can be replaced by a predicate $R'(x)$ such that $R'(i) \Longleftrightarrow R(0,i,n-1,i)$ on a word of length $n$. Likewise, the occurrence $R(\max, x, \min, x)$ can be replaced by another predicate $R''(x)$ such that $R''(i) \Longleftrightarrow R(n-1,i,0,i)$. If the occurrence of a predicate does not contain $x$, for example $R(\min, \min, \max, \max)$, then replace that occurrence by the formula $\exists x\, R'''(x)$, where $R'''(x)$ is the unary predicate satisfying $R'''(i) \Longleftrightarrow R(0, 0, n-1, n-1)$ on a word of length $n$. All occurrences of the letter predicates with constants can be rewritten as $P_a(c) \equiv \exists x \,( Q_c(x) \wedge P_a(x))$ where $Q_c$ is the unary predicate that is true precisely at the positions $c$.

Let $\varphi'$ be the formula obtained by applying the above translation. By rewriting each quantification $\exists^{\geq t} x\, \xi(x)$ in $\varphi'$ by the FO formula 
\begin{align}
\exists x_0\ldots \exists x_{t-1} \left (\bigwedge_{i,j\in [0,t-1]\atop i\neq j} x_i \neq x_j \wedge \bigwedge_{i\in [0,t-1]} T(\xi)(x_i)\right)
\end{align}
where \(T\) is defined recursively and $T(\xi)(x_i)$ denotes the translation of $\xi(x)$ with its unique free variable $x$ replaced by $x_i$,
we obtain an equivalent formula $\varphi'' \in \mathrm{FO}[\Sigma,<,\mathrm{arb}_1]$. 
Since $\varphi''$ defines a regular language,
we apply Theorem~\ref{T FijalkowPaperman} and replace each unary numerical predicate in $\varphi''$ by a regular numerical predicate. Since the form of the formula remains unchanged, we reverse the translation  to obtain an equivalent formula in $\rmC^1[\Sigma, +\omega, \min, \max, \rmmod]$. 
\end{proof}

\begin{proposition}
Every language defined by a formula in $\rmC^1[\Sigma, +\omega, \min, \max, \rmmod]$ is a union of $\approx_k^t$-classes for some $k,t\in \bbN_{>0}$
\label{P Logic-to-cong Acomm}
\end{proposition}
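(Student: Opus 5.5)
Fix a sentence $\varphi\in\rmC^1[\Sigma,+\omega,\min,\max,\rmmod]$ and read off from it three parameters: $c$, bounding every offset $i$ in terms $\min+i$, $\max-i$, $x+i$; $m$, the least common multiple of all moduli used in $\rmmod$ predicates; and $t$, the largest threshold in a quantifier $\exists^{\geq t}$ (with $t\geq1$). The plan is to show that $L(\varphi)$ is a union of $\approx_k^t$-classes for $k$ a multiple of $m$ with $k>2c$, say $k=m(2c+1)$. Since $\approx_k^t$ is a congruence and $\rmC^1$ formulas are closed under Boolean combinations, it is enough to prove the following: if $u\approx_k^t v$ with $u\neq v$ (so $|u|,|v|>2k$), then $u\models\psi\iff v\models\psi$ for every sentence $\psi$ of the form $\exists^{\geq r}x\,\xi(x)$ with $r\leq t$. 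Here $\xi$ is a Boolean combination of letter predicates $P_a(x)$, letter predicates at constants $P_a(\min+i)$ and $P_a(\max-i)$, numerical atoms, and nested sentences. The nested sentences are handled by induction on quantifier depth, so we may treat them as constants. The same induction base covers atoms such as $P_a(\min+i)$ that have no free variable: they are determined by $\alpha_k$ and $\omega_k$, which agree because $c<k$.

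The key step is to control, for a fixed length $n>2k$, the truth value of the numerical atoms at a position $x\in[0,n-1]$. Every atom is an equality or inequality among $x+i$, $\min+i$, $\max-i$, or a congruence $x\equiv r\bmod m'$ or $\max\equiv r\bmod m'$ with $m'\mid m$. Consequently, for $x\in[k,n-k-1]$ (positions lying in $\kappa_k$), every comparison with $\min+i$ or $\max-i$ comes out the same way. The atoms involving $x+i$ alone reduce to constant truth values, and the congruences depend only on $x\bmod k$ and $n\bmod k$. We know $|u|\equiv_k|v|$ because $\kappa_k(u)\congC_k^t\kappa_k(v)$ and both have prefix and suffix of length $k$. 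Thus on the middle part, $\xi(x)$ holds at position $x$ if and only if the letter at $x$ lies in a set $A_{x\bmod k}$, and this set is the same for $u$ and $v$. On the outer positions ($x<k$ or $x\geq n-k$), positions in $u$ and $v$ correspond by distance to the respective end. The letters there agree, and the atoms evaluate identically: comparisons only involve offsets below $k$, and the residues match because $n_u\equiv_k n_v$ and $k$ is a multiple of $m$.

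Counting then finishes the argument. The number of positions satisfying $\xi$ equals the count on the prefix plus the count on the suffix plus $\sum_{j}\sum_{a\in A_j}|\pi_{j,k}(\kappa_k(w))|_a$, with the residue classes shifted by $k$, which is harmless because $k\equiv_k0$. The prefix and suffix counts agree between $u$ and $v$. Each middle summand agrees up to threshold $t$, so each sum agrees up to threshold $t$, since $\equiv^t$ is compatible with addition. Hence the totals agree up to threshold $t$, and therefore $\exists^{\geq r}$ with $r\leq t$ gives the same answer on $u$ and $v$.

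The main obstacle will be the bookkeeping of the numerical atoms, especially atoms like $x+i=\max$ and $x+i\equiv r\bmod m'$, together with checking that the middle region's positional behaviour depends only on $x\bmod k$. I would first normalise each atom into one of the forms \enquote{$x$ equals a specific offset from an end}, \enquote{$x$ lies in a residue class}, or \enquote{$n$ lies in a residue class} (the last of these being a sentence), and then choose $k$ large and divisible enough to absorb all offsets and moduli.
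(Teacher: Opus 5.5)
Your proposal is correct and takes essentially the same route as the paper's proof. Both arguments choose $k$ divisible by all moduli and larger than all offsets, observe that on $\kappa_k$ the truth of $\xi$ depends only on the residue modulo $k$ and the letter, and then compare the counts of satisfying positions up to threshold. The differences are cosmetic: the paper sets $t=k$ and assumes without loss of generality that $\xi$ is quantifier-free, whereas you take $t$ to be the largest threshold, handle nested sentences by induction on quantifier depth, and finish the count using that $\equiv^t$ is compatible with addition.
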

\begin{proof}
Let $\varphi$ be a formula in $\rmC^1[\Sigma,+\omega,\min,\max,\rmmod]$. Without loss of generality, assume that $\varphi$ is a Boolean combination of formulas of the form
$\exists^{\geq n}\xi$, where $\xi$ is quantifier-free. Let $k$ be a multiple of all moduli occurring in $\varphi$ and larger than all parameters occurring in counting quantifiers and in atomic formulas such as $\min+i=x$. Set $t=k$. It suffices to show that, for every subformula
$\exists^{\geq n}\xi$ of $\varphi$, its language is a union of $\approx_k^t$-classes.

Replacing negative letter and modular predicates by disjunctions of positive ones, we may assume that the only negative literals in $\xi$ are of the form $c+i\neq x$, $c-i\neq x$ or $c_1+i\neq c_2$. Put $\xi$ in disjunctive normal form. Every satisfiable conjunct can be written as
\begin{align}
o(x)\wedge\sigma(x)\wedge o'\wedge\sigma',
\end{align}
where $o(x)$ contains the conditions on the position $x$, $\sigma(x)$ is a letter predicate on $x$, $o'$ contains conditions involving only $\min,\max$, and $\sigma'$ contains letter predicates on constants.

The conditions in $o(x)$ involving fixed positions concern only the first and last $k$ positions. Hence, for positions $i$ with $k\leq i<|w|-k$
the truth of $o(i)\wedge\sigma(i)$ depends only on the residue of $i$ modulo $k$ and the letter occurring at $i$. Thus, for each residue $r<k$ and $a\in\Sigma$, all interior positions with residue $r$ carrying the letter $a$ have the same truth value for $\xi$.

We claim that $ u\models\exists^{\geq n}x\,\xi(x)
\Longleftrightarrow
v\models\exists^{\geq n}x\,\xi(x)$ for all words $u,v \in \Sigma^+$ such that $u\approx_k^t v$. If $|u|,|v|\leq 2k$, then $u=v$ and the claim holds trivially. Hence, assume that $|u|,|v| > 2k$. The prefix and suffix of $u$ and $v$ of length $k$ are identical, and therefore contribute equally to the number of positions satisfying $\xi$. 
Let $b$ be the number of positions satisfying 
$\xi$ in the prefix and suffix of length $k$ of $u$, and equivalently in the prefix and suffix of length $k$ of $v$.
Let $n_u$ and $n_v$ denote the numbers of interior positions
satisfying $\xi$ in $\kappa_k(u)$ and $\kappa_k(v)$, respectively.
For each $r \in [0,k-1]$ and $a\in\Sigma$, either all interior positions of residue $r$ carrying $a$ satisfy $\xi$, or none of them do. Moreover, $\left|\pi_{r,k}(\kappa_k(u))\right|_a \equiv^t
\left|\pi_{r,k}(\kappa_k(v))\right|_a$
for every $r\in [0,k-1]$ and $a\in\Sigma$, since
$\kappa_k(u)\mathcal C_k^t\kappa_k(v)$.
Summing over precisely those pairs $(r,a)$ for which the
corresponding interior positions satisfy $\xi$, we obtain
$n_u\equiv^t n_v$.
The total numbers of satisfying positions in $u$ and $v$ are,
respectively, $b+n_u$ and $b+n_v$. Since $n\leq t$ by the choice of $t$, we have $n-b\leq t$. If $n_u<t$, then $n_u\equiv^t n_v$ implies $n_u=n_v$, and hence $b+n_u\geq n \Longleftrightarrow
b+n_v\geq n$. If $n_u\geq t$, then $n_v\geq t$ as well, and therefore
$b+n_u\geq t\geq n$, so both $b+n_u$ and $b+n_v$ are at least $n$.
Thus, in all cases, $b+n_u\geq n \Longleftrightarrow 
b+n_v\geq n$. Hence $u\models\exists^{\geq n}x,\xi(x)
\Longleftrightarrow v\models\exists^{\geq n}x,\xi(x)$.
Therefore the language defined by every subformula
$\exists^{\geq n}x,\xi(x)$ is a union of $\approx_k^t$-classes.
Since Boolean combinations of unions of $\approx_k^t$-classes are
again unions of $\approx_k^t$-classes, the language defined by
$\varphi$ has the required form.
\end{proof}

\subsubsection{From Algebra to Congruence}
\label{S Congruence Acomm}

We repeat the proof of Proposition~\ref{P const tech} with necessary changes.
Note that Lemma~\ref{lemma:idem-insert}
and Lemma~\ref{L pq idem} do not require any modification.

\begin{lemma}
	Let $\varphi : \Sigma^+ \to S$ be a stamp in $\QEACom$ with stability index $s\in\bbN_{>0}$. Let $N=|\stab(\varphi)|$.
	Let $u,v \in \Sigma^{s\bbN_{>0}}$ be  words such that 
	$u \congC_s^{N!} v$. 
Then $pup=_\varphi pvp$
for all idempotents $p\in \Sigma^s$.
\label{lemma:acom-pup=pvp}
\end{lemma}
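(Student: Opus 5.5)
I will follow the proof of Lemma~\ref{lemma:pup=pvp} verbatim, changing only the final step: there Lemma~\ref{L cont EJ1} was used, and here Lemma~\ref{L cont EACom} will be used. That lemma is stated with respect to an exponent $n$ of $\stab(\varphi)$; since $|\stab(\varphi)|=N$, the number $N!$ is an exponent, so it is applied with $n=N!$.

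First, since the stable semigroup lies in $\EACom$, it satisfies $exyf=exeyf$ by Corollary~\ref{E Acom-exyf}. Hence Lemma~\ref{lemma:idem-insert} applies and gives
\[
pup=_\varphi pu'p \quad\text{and}\quad pvp=_\varphi pv'p,
\]
where $u'=u_0e_0\dots u_{m-1}e_{m-1}$ and $v'=v_0e_0\dots v_{n-1}e_{n-1}$ are defined as in the proof of Lemma~\ref{lemma:pup=pvp}. Exactly as there, I regroup
\[
pu'p=\Big(\prod_{i=0}^{m-1}\lambda'_i(u_i)\,p\Big)p\,p,
\]
and similarly for $v'$. Then $\block_s(pu'p)$ is a word over $\Sigma^s$ whose letters are the blocks $\lambda_i(u_i)$ for $i\in[0,m-1]$, interspersed with copies of $p$. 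Note that $\lambda_i(a)$ depends only on $i\bmod s$ and on $a$, and that the map $(i\bmod s,a)\mapsto\lambda_i(a)$ is injective, since the letter $a$ sits at position $i\bmod s$ inside the block.

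The key counting step is as follows. For every block $b\in\Sigma^s$ other than $p$, the number of occurrences of $b$ in $\block_s(pu'p)$ equals the number of indices $i$ with $\lambda_i(u_i)=b$. This number is either $0$ or $|\pi_{r,s}(u)|_a$ for the unique pair $(r,a)$ with $\lambda_r(a)=b$. Since $u\congC_s^{N!}v$, we have
\[
|\pi_{r,s}(u)|_a\equiv^{N!}|\pi_{r,s}(v)|_a,
\]
so the multiplicity of $b$ in $\block_s(pu'p)$ and in $\block_s(pv'p)$ agree up to threshold $N!$.

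The delicate point is the block $p$ itself. The extra copies of $p$ are interspersed, and some $\lambda_i(u_i)$ may coincide with $p$, so the count of $p$ is not controlled. I plan to handle this by taking $e=f=\varphi(p)$ in Lemma~\ref{L cont EACom}. That lemma disregards the contents of $e$ and $f$, so all occurrences of the element $\sigma_\varphi(p)$ are ignored.

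To pass from blocks to semigroup elements, I also have to aggregate blocks $b$ with the same image $\sigma_\varphi(b)$. If the multiplicities of $b$ agree up to threshold $N!$ for each $b$, then their sums over each fibre of $\sigma_\varphi$ agree up to threshold $N!$ too, because $\equiv^{N!}$ is a congruence on $\bbN$ under addition. It remains to strip off the outer $p$'s: write $\sigma_\varphi(\block_s(pu'p))=e\,w\,e$ and apply Lemma~\ref{L cont EACom} to $w$ and the corresponding word for $v$. This yields $\pi(ewe)$ equal for both sides, hence $pup=_\varphi pvp$.

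The main obstacle is bookkeeping: checking that the threshold-counting hypotheses of Lemma~\ref{L cont EACom}, namely $\cont^{=t}$ for $t<N!$ and $\cont^{\geq N!}$, follow from the fibre-summed counts. This holds because agreement of counts under $\equiv^{N!}$ is exactly equivalent to those content conditions.
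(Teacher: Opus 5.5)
Your proposal is correct and follows essentially the paper's own proof. The shared steps are: insert the idempotents $e_i$ via Lemma~\ref{lemma:idem-insert}, regroup $pu'p$ into the blocks $\lambda_i(u_i)$ separated by copies of $p$, compare multiplicities of blocks other than $p$ up to threshold $N!$, and conclude with Lemma~\ref{L cont EACom} using $N!$ as exponent. Your explicit choice $e=f=\varphi(p)$ and the summation over fibres of $\sigma_\varphi$ spell out steps the paper leaves implicit; the only slip is your claim that $(i\bmod s,a)\mapsto\lambda_i(a)$ is injective, which fails on the pairs $(r,p_r)$, all of which are sent to $p$, but you only use uniqueness of the preimage for blocks $b\neq p$, where it does hold.
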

\begin{proof}
	Let $u=u_0\ldots u_{m-1}$ 
	and $v=v_0\ldots v_{n-1}$ be words in $\Sigma^{s\bbN_{>0}}$ such that 
	$u \congC_s^{N!} v$.
    
Let $p\in \Sigma^s$ be idempotent, and for each $i\in\bbN$ and $a\in \Sigma$, let $e_i$, $r_i$, $\ell_i$, $\lambda_i(a)$, and $\lambda'_i(a)$ be as in the proof of Lemma~\ref{lemma:pup=pvp}.
We have $pup=_\varphi pu'p$ and $pvp=_\varphi pv'p$,
	where $u'=u_0e_0u_1e_1\ldots u_{m-1}e_{m-1}$ and  
    $v'=v_0e_0v_1e_1\ldots v_{n-1}e_{n-1}$
by \Cref{lemma:idem-insert}. As in the proof of Lemma~\ref{lemma:pup=pvp}, we observe that
\begin{align}
	pu'&=  \Big( \prod_{i=0}^{m-1} \lambda'_i(u_i) p \Big) p\\
     pv'&=  \Big( \prod_{i=0}^{n-1} \lambda'_i(v_i) p \Big) p
\end{align}

Since $u\congC_s^{N!} v$ we have $|\pi_{i,s}(u)|_{a} \equiv^{N!} |\pi_{i,s}(v)|_{a}$ for all $i\in [0,s-1]$ and $a\in \Sigma$. Therefore for each $x=x_0\ldots x_{s-1}\in\Sigma^s$ such that $x \neq p$,
\begin{align}
    |\block_s(pu'p)|_{x} 
    &=\phantom{{N!}} \sum_{i \in [0,s-1]} |\{ j\equiv_s i\mid j \in [0,m-1], \lambda_i(u_j) = x\}|\\
	&=\phantom{{N!}} \sum_{i \in [0,s-1], a \in \Sigma \atop \lambda_i(a) = x} |\pi_{i,s}(u)|_{a}\\     
	&\equiv^{N!} \sum_{i \in [0,s-1], a \in \Sigma \atop \lambda_i(a) = x} |\pi_{i,s}(v)|_{a}\\ 
    &=\phantom{{N!}} \sum_{i \in [0,s-1]} |\{ j\equiv_s i\mid j \in [0,n-1], \lambda_i(v_j) = x\}|\\
    &=\phantom{{N!}} |\block_s(pv'p)|_{x} \label{E acom-cont-pvp}
\end{align}

Finally we have
	\begin{align}
		\varphi(pup)&=\varphi(pu'p)\\
        &=\pi\circ \sigma_\varphi(\block_s(pu'p)) &&(\text{By (\ref{E pi-sigma})})\\
        &=\pi\circ \sigma_\varphi(\block_s(pv'p)) &&(\text{By (\ref{E acom-cont-pvp}) \& Lemma~\ref{L cont EACom}})\\
        &=\varphi(pv'p)\\
        &=\varphi(pvp).
	\end{align}
\end{proof}

\begin{samepage}
    
\begin{lemma}
	Let $\varphi:\Sigma^+\to S$ be a stamp in $\QEACom$ with stability index $s\in\bbN_{>0}$. 
	Let $N=|\stab(\varphi)|$ and $k=sN$. 
	For each $u,v \in \Sigma^{+}$ we have that 
	if $u \approx_{k}^{N!} v$ and $|u|\equiv_k |v| \equiv_k 0 $, 
	then $u=_\varphi v$. 
\label{lemma:acom-alg-to-cong}
\end{lemma}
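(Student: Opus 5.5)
The plan is to transcribe the proof of Lemma~\ref{lemma:alg-to-cong} almost verbatim. Wherever that proof used Corollary~\ref{E exyf} we use Corollary~\ref{E Acom-exyf}, and wherever it used Lemma~\ref{lemma:pup=pvp} we use Lemma~\ref{lemma:acom-pup=pvp}. The only real change is that the threshold parameter $t=N!$ has to be carried through the congruence properties of $\congC$.

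First, let $u\approx_k^{N!} v$ with $|u|\equiv_k|v|\equiv_k 0$. If $\min(|u|,|v|)\le 2k$, then by definition of $\approx_k^{N!}$ we have $u=v$ and there is nothing to prove. So assume $|u|,|v|>2k$. Then $\alpha_k(u)=\alpha_k(v)$ and $\omega_k(u)=\omega_k(v)$. Since $k=sN$, Lemma~\ref{L pq idem} applied to $u$ gives idempotents $p,q\in\Sigma^s$ and a factorization $u=xu'y$ with $x=_\varphi xp$, $qy=_\varphi y$, and $|x|,|y|\le sN=k$. Because $x$ and $y$ lie inside the common prefix and suffix of length $k$, the same $x,y$ give a factorization $v=xv'y$. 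Hence $u=_\varphi xpu'qy$ and $v=_\varphi xpv'qy$.

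Next, all of $p,u',v',q$ have length divisible by $s$, so their images lie in $\stab(\varphi)$. Point (1) of Corollary~\ref{E Acom-exyf} (the identity $exf=exef$) therefore gives $pu'q=_\varphi pu'pq$ and $pv'q=_\varphi pv'pq$. It thus suffices to show $pu'p=_\varphi pv'p$.

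For this, write $u'=x'\kappa_k(u)y'$ and $v'=x'\kappa_k(v)y'$, where $x'=x^{-1}\alpha_k(u)$ and $y'=\omega_k(u)y^{-1}$. From $u\approx_k^{N!}v$ we get $\kappa_k(u)\congC_k^{N!}\kappa_k(v)$. Point (1) of Lemma~\ref{L cont}, which is stated for arbitrary $t$, shows that $\congC_k^{N!}$ is a congruence, so $u'\congC_k^{N!}v'$. Since $s\mid k$, Point (2) of the same lemma yields $u'\congC_s^{N!}v'$. Lemma~\ref{lemma:acom-pup=pvp} now applies and gives $pu'p=_\varphi pv'p$, which finishes the argument.

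The only step that needs checking is that Lemma~\ref{L cont}(2) holds for a general threshold. The worry is that summing several $\equiv^t$-congruent counts might not stay $\equiv^t$-congruent. It does: $\equiv^t$ is a congruence of $(\bbN,+)$, since if $a\equiv^t a'$ and $b\equiv^t b'$ then either all four are below $t$ and equal, or $a+b\ge t$ and $a'+b'\ge t$. The appendix proof of Lemma~\ref{L cont} already relies on this fact. No other obstacle is expected.
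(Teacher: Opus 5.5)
Your proposal is correct and follows essentially the same route as the paper's proof. You reduce to $|u|,|v|>2k$, obtain the common factorizations $u=xu'y$ and $v=xv'y$ with idempotents $p,q$ from Lemma~\ref{L pq idem}, insert $p$ via Point (1) of Corollary~\ref{E Acom-exyf}, and deduce $pu'p=_\varphi pv'p$ from $u'\congC_s^{N!}v'$ using Lemma~\ref{L cont} and Lemma~\ref{lemma:acom-pup=pvp}. Your check that $\equiv^t$ is a congruence of $(\bbN,+)$ is sound and makes explicit a fact the paper uses without stating it.
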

\end{samepage}

\begin{proof}
	As in the proof of Lemma~\ref{lemma:alg-to-cong} it suffices to consider words $u,v\in \Sigma^+$ such that
	$u \approx_k^{N!} v$ and $|u|\equiv_k |v|\equiv_k 0 $ and $|u|,|v|> 2k$.
Analogously as in the proof of Lemma~\ref{lemma:alg-to-cong} there exist  factorizations $u=xu'y$, $v=xv'y$ and idempotents $p,q\in \Sigma^s$ such that $u=_\varphi xpu'qy$ and
$v=_\varphi xpv'qy$, where $|p|,|u'|, |v'|, |q|$ are all multiples of $s$
and where $|x|,|y|\leq k$.
Thus, $\varphi(p),\varphi(u'),\varphi(v'),\varphi(q) \in \stab(\varphi)$. 
	Hence $pu'q=_\varphi pu'pq$, and $pv'q=_\varphi pv'pq$
	by Point (1) of \Cref{E Acom-exyf}. 
	Consequently, $u=_\varphi xpu'qy=_\varphi xpu'pqy$ 
	and $v=_\varphi xpv'qy=_\varphi xpv'pqy$.
Thus, to prove the lemma, it suffices to show that 
	$pu'p=_\varphi pv'p$. 
Since $u\approx_k^{N!} v$ implies $\kappa_k(u)\congC_k^{N!} \kappa_k(v)$ and since $\congC_k^{N!}$ is a congruence 
	by Point (1) of \Cref{L cont} we have 
	$u'=x'\kappa_k(u)y' \congC_k^{N!} x'\kappa_k(v)y' = v'$, where 
    $x'=x^{-1}\alpha_{k}(u)=x^{-1}\alpha_{k}(v)$ and
$y'=\omega_{k}(u)y^{-1}=\omega_{k}(v)y^{-1}$. 
	Since $s|k$ we have $u'\congC_s^{N!} v'$ by Point (2) of Lemma~\ref{L cont}, and since the lengths of $u'$ and $v'$ are both divisible by
$s$ we have
	$pu'p=_\varphi pv'p$ by \Cref{lemma:acom-pup=pvp}.
\end{proof}

\begin{proposition}\label{P log tech}
	Let $\varphi:\Sigma^+\to S$ be a stamp in $\QEACom$ with stability index $s\in\bbN_{>0}$. 
	Let $N=|\stab(\varphi)|$.
	For each $u,v \in \Sigma^{+}$ we have that 
	if $u \approx_{2sN}^{N!} v$, then $\varphi(u)=\varphi(v)$. 
\end{proposition}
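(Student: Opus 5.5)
We mirror the proof of Proposition~\ref{P const tech} almost verbatim, with Lemma~\ref{lemma:acom-alg-to-cong} taking the place of Lemma~\ref{lemma:alg-to-cong} and the threshold $t=N!$ carried along throughout. Put $k=sN$ and $t=N!$. Lemma~\ref{lemma:acom-alg-to-cong} already gives the statement for $\approx_k^t$-equivalent words whose lengths are multiples of $k$. The task is to remove this length restriction, at the price of passing from $k$ to $2k$.

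Let $u\approx_{2k}^t v$. If $|u|\leq 4k$ or $|v|\leq 4k$, then $u=v$ (every word of length $<3\cdot 2k$ forms a singleton class) and we are done. Otherwise $|u|\equiv_{2k}|v|$, so in particular $|u|\equiv_k|v|$. We split $u=u'u''$ and $v=v'v''$ with $|u'|\equiv_k|v'|\equiv_k 0$ and $|u''|=|v''|<k$. Since $u$ and $v$ share their suffix of length $2k$, we get $u''=v''$. It therefore suffices to show $u'=_\varphi v'$, and by Lemma~\ref{lemma:acom-alg-to-cong} this reduces to proving $u'\approx_k^t v'$.

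To verify $u'\approx_k^t v'$ we use the same bookkeeping as in Figure~\ref{fig:u'u''}. The prefixes $\alpha_k(u')$ and $\alpha_k(v')$ agree because $\alpha_{2k}(u)=\alpha_{2k}(v)$. The suffixes $\omega_k(u')$ and $\omega_k(v')$ agree because $\omega_{2k}(u)=\omega_{2k}(v)$ and $u''=v''$. The middle parts decompose as
\[
\kappa_k(u')=(\omega_k\circ\alpha_{2k}(u))\,\kappa_{2k}(u)\,(\alpha_{k-|u''|}\circ\omega_{2k}(u)),
\]
and likewise for $v'$, with the same outer factors. Now $\kappa_{2k}(u)\congC_{2k}^t\kappa_{2k}(v)$ holds by hypothesis. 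Point~(2) of Lemma~\ref{L cont}, applied with $k\mid 2k$, yields $\kappa_{2k}(u)\congC_k^t\kappa_{2k}(v)$. Point~(1) of Lemma~\ref{L cont} then lets us multiply by the common outer factors, giving $\kappa_k(u')\congC_k^t\kappa_k(v')$.

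The only step needing care is checking that the lengths line up: $|u'|,|v'|>2k$ holds because $|u|>4k$ and $|u''|<k$, and the identity for $\kappa_k(u')$ holds exactly as in the constant case. We do not expect a genuine obstacle, since the threshold parameter is preserved unchanged by both parts of Lemma~\ref{L cont}.
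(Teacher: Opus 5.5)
Your proposal is correct and is essentially the paper's own proof. You set $k=sN$, strip a common suffix of length $<k$ to reduce to Lemma~\ref{lemma:acom-alg-to-cong}, and verify $u'\approx_k^{N!}v'$ through the same decomposition of $\kappa_k(u')$ together with both points of Lemma~\ref{L cont}.
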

\begin{proof}
Let $k=sN$.
    By \Cref{lemma:acom-alg-to-cong} 
	for all $u,v\in\Sigma^+$ we have that
	if $u\approx_{k}^{N!} v$ 
	and $|u|\equiv_{k} |v| \equiv_{k} 0$, then $u=_\varphi v$. 
	We claim that for all $u,v \in \Sigma^+$, if $u \approx_{2k}^{N!} v$,
	then $u=_\varphi v$.
Without loss of generality assume that $|u|,|v|>4k$. Since $u\approx_{2k}^{N!}v$, 
	clearly $|u|\equiv_{2k} |v|$. 
	Let $u',v'\in \Sigma^+$, $x\in \Sigma^*$ be such that
	$u=u'x$, $v=v'x$, $|u'|\equiv_{k} |v'|\equiv_{k} 0 $, $|x|<k$. 
It suffices to 
	show that $u'=_\varphi v'$ for which by \Cref{lemma:acom-alg-to-cong} it suffices to prove $u'\approx_{k}^{N!} v'$ .
We argue $\alpha_{k}(u')=
	\alpha_{k}(v')$ and 
	$\omega_{k}(u')
	=\omega_{k}(v')$ as in the proof of Proposition~\ref{P const tech}.
To prove $\kappa_{k}(u')\congC_{k}^{N!} \kappa_{k}(v')$, observe that 
\begin{align}
	\kappa_{k}(u')&=(\omega_k\circ \alpha_{2k}(u))\cdot \kappa_{2k}(u) \cdot 
	(\alpha_{k-|x|}\circ \omega_{2k}(u))\label{E acom ao1}\\
	\kappa_{k}(v')
	&=(\omega_k\circ \alpha_{2k}(u))\cdot \kappa_{2k}(v) \cdot 
	(\alpha_{k-|x|}\circ \omega_{2k}(u))\label{E acom ao3}
\end{align}
    
	Since $u\approx_{2k}^{N!}v$ we have $\kappa_{2k}(u)\congC_{2k}^{N!} \kappa_{2k}(v)$. 
	By \Cref{lemma:prop-cont}
	we have $\kappa_{2k}(u)\congC_{k}^{N!} \kappa_{2k}(v)$ and 
	since $\congC_{k}^{N!}$ is a congruence
	we have
	$\kappa_{k}(u')\congC_{k}^{N!} \kappa_{k}(v')$ by (\ref{E acom ao1}) and (\ref{E acom ao3}), 
	as required.
\end{proof}

\subsection{Comparison with Other Input Encodings}
\ificalp
\else
Let $\Sigma=\{a_0,\dots,a_{k-1}\}$ be an alphabet.
The {\em one-hot encoding}
is given by the morphism $\varphi:\Sigma^+\to\{0,1\}^+$,
where $\varphi(a_i)=0^i10^{k-i-1}$ for each $i\in[0,k-1]$.

A \emph{one-hot circuit family computing a language $L\subseteq\Sigma^+$}
is a circuit family $\calC$ over the input alphabet $\{0,1\}$ such that 
$L(\calC)=\varphi(L)$.
In contrast, a \emph{one-hot circuit family with promise computing
a language $L\subseteq\Sigma^+$}
is a circuit family $\calC$ over the input alphabet $\{0,1\}$ such that 
$\varphi^{-1}(L(\calC))=L$.
\fi

\begin{propositionrep}\label{P with promise}
Our subset encoding model of Definition~\ref{D Circuit} is equivalent to the
	one-hot encoding with promise in the following sense.
For all languages $L\subseteq\Sigma^+$ and all functions $f:\bbN_{>0}\to\bbN_{>0}$
the following two statements are equivalent:
	\begin{enumerate}
		\item There is a one-hot circuit family with promise $\calC$ computing $L$,
			where $|\calC|(kn)=f(n)$ for all $n\in\bbN_{>0}$.
		\item There is a circuit family $\calC$ computing $L$,
			where $|\calC|(n)=f(n)$ for all $n\in\bbN_{>0}$.
	\end{enumerate}
\end{propositionrep}
\begin{appendixproof}

\noindent 
$(1) \Longrightarrow  (2)$\quad
	Let $\calC=(C_n)_{n\in\bbN_{>0}}$ 
	be a one-hot circuit family with promise computing a language
	$L\subseteq\Sigma^+$, where $|\calC|(kn)=f(n)$ for all $n\in\bbN_{>0}$.
	We construct
	a circuit family $\calC'=(C_n')_{n\in\bbN_{>0}}$ over the input 
	alphabet $\Sigma$ such that $L(\calC')=\varphi^{-1}(L(\calC))$
	and $|C_{n}'|=f(n)$ for all $n\in\bbN_{>0}$.

	For this it suffices to show that for all $n\in\bbN_{>0}$ 
	the $kn$-circuit $C_{kn}=(G,\prec,\lambda,g_{out})$ over 
	the alphabet $\{0,1\}$ can be turned into an $n$-circuit $C_n'$ over 
	the alphabet $\Sigma$
	such that $|C_n'|=|C_{kn}|$ and moreover
	$w\in L(C_n')$ if, and only if, $\varphi(w)\in L(C_{kn})$
	for all $w\in\Sigma^n$.

	Without loss of generality we may assume that every
	input gate $h$ with $\lambda(h)=(i,\Gamma)$ and $h\prec g$ for
	some $g\in G_1$ satisfies $|\Gamma|=1$.
	Indeed, the following process establishes this normal form
	without increasing the number of non-input gates.
	Assume there were to exist a wire $(h,g)\in\prec$, where 
$\lambda(h)=(i,\Gamma)$, $|\Gamma|\not=1$ and 
	$\lambda(g)=\wedge$, then one could proceed as follows.
	Let us first treat the case when $|\Gamma|=2$.
	In this case we have $\Gamma=\{0,1\}$ and one could tacitly
	remove the wire $(h,g)$ from $\prec$ since $g$ is an $\wedge$-gate;
	in case this leaves $g$ without predecessor, we 
	make $g$ a gate that evaluates to constant $1$ by
	relabeling it as $\lambda(g)=\vee$ and connecting it to two 
	input gates $h_0,h_1$ with $\lambda(h_0)=(0,\{0\})$ and $\lambda(h_1)=(0,\{1\})$.
	Let us treat the case when $|\Gamma|=0$. 
	In this case we remove the wire $(h,g)$ and then
	make $g$ a gate that evaluates to constant $0$
	by connecting it to two input gates $h_0,h_1$ with 
	$\lambda(h_0)=(0,\{0\})$ and $\lambda(h_1)=(0,\{1\})$.
	
	Analogous remarks apply to wires $(h,g)$, where $h$ is an input gate
	and $\lambda(g)=\vee$.

	The circuit $C_n'=(G',\prec',\lambda',g_{out})$ is obtained from 
	$C_{kn}$ by keeping 
	the set of non-input gates $G_1$ (so $G_1'=G_1$) and by introducing
	wires that emerge from the following replacements:
	\begin{enumerate}
		\item Replace every wire $(h,g)\in\prec$, where $\lambda(h)=(i,\{0\})$ and
			$\lambda(g)=\vee$, by $(h',g)\in\prec'$, such that 
			$h'$ is an input gate satisfying
			$\lambda'(h')=(\lfloor i/ k\rfloor,\Gamma)$, where
			$\Gamma=\Sigma\setminus\{a_{i\bmod{ k}}\}$.
		\item Replace every wire $(h,g)\in\prec$, where $\lambda(h)=(i,\{1\})$ and
			$\lambda(g)=\vee$, by $(h',g)\in\prec'$, such that
			$h'$ is an input gate satisfying
			$\lambda'(h')=(\lfloor i/ k\rfloor,\{a_{i\bmod{ k}}\})$.
		\item Replace every wire $(h,g)\in\prec$, 
			where $\lambda(h)=(i,\{1\})$ and
			$\lambda(g)=\wedge$, by $(h',g)\in\prec'$, such that
			$h'$ is an input gate satisfying
			$\lambda'(h')=(\lfloor i/ k\rfloor,\{a_{i\bmod{ k}}\})$.
		\item Replace every wire $(h,g)\in\prec$, 
			where $\lambda(h)=(i,\{0\})$ and
			$\lambda(g)=\wedge$, by $(h',g)\in\prec'$, such that
			$h'$ is an input gate satisfying
			$\lambda'(h')=(\lfloor i/ k\rfloor,\Gamma)$, where
			$\Gamma=\Sigma\setminus\{a_{i\bmod{ k}}\}$.
	\end{enumerate}
	Obviously $w\in L(C_n')$ if, and only if, $\varphi(w)\in L(C_{kn})$ for all
	$w\in\Sigma^n$. Moreover $|C_n'|=|C_{kn}|=f(n)$.
\medskip

\noindent 
$(2) \Longrightarrow  (1)$\quad
	Let $\calC=(C_n)_{n\in\bbN_{>0}}$ 
	be a circuit family computing a language
	$L\subseteq\Sigma^+$, where $|\calC|(n)=f(n)$ for all $n\in\bbN_{>0}$.
	We construct
	a one-hot circuit family with promise $\calC'=(C_n')_{n\in\bbN_{>0}}$ 
	computing $L$ such that 
	$|C_{kn}'|=f(n)$ for all $n\in\bbN_{>0}$.

For this it suffices to show that for all $n\in\bbN_{>0}$
the $n$-circuit $C_n=(G,\prec,\lambda,g_{out})$ over $\Sigma$
can be turned into a $kn$-circuit $C_{kn}'$ over $\{0,1\}$
such that $|C_{kn}'|=|C_n|$ and
		$\varphi(w)\in L(C_{kn}')$ 
		if, and only if, $w\in L(C_n)$ for all
	$w\in\Sigma^n$. 
	The $kn$-circuit $C_{kn}'=(G',\prec',\lambda',g_{out})$ is obtained
	from $C_n$ by keeping 
	the set of non-input gates $G_1$ (so $G_1'=G_1$) and by introducing
	wires that emerge from the following replacements:
	\begin{enumerate}
		\item Replace each wire $(h,g)\in\prec$, where 
			$\lambda(h)=(i,\Gamma)$ and $\lambda(g)=\vee$,
			by the set of wires $\{(h_{j},g)\mid a_j\in\Gamma\}$
			in $\prec'$ such that 
			$\lambda'(h_j)=(ik+j,\{1\})$ for each $a_j\in\Gamma$ in
			case $\Gamma\not=\emptyset$; and
			by the wire $(h_0,g)$ in $\prec'$ such that
			$\lambda'(h_0)=(0,\emptyset)$ in case $\Gamma=\emptyset$.
		\item Replace each wire $(h,g)\in\prec$, where 
			$\lambda(h)=(i,\Gamma)$ and $\lambda(g)=\wedge$,
			by the set of wires $\{(h_{j},g)\mid a_j\in\Sigma\setminus\Gamma\}$
			in $\prec'$ such that 
			$\lambda'(h_j)=(ik+j,\{0\})$ for each 
			$a_j\in\Sigma\setminus\Gamma$
in case $\Gamma\not=\Sigma$;
 and by the wire $(h_0,g)$ in $\prec'$ such that
			$\lambda'(h_0)=(0,\{0,1\})$ in case $\Gamma=\Sigma$.
	\end{enumerate}
	Clearly, $|C_{kn}'|=|C_n|$ and moreover $\varphi(w)\in L(C_{kn}')$ 
	if, and only if, 
	$w\in L(C_n)$.
\end{appendixproof}
The following corollary states that
the circuit complexity of a regular language
under the one-hot encoding is an upper bound for
the circuit complexity under the subset encoding
up to scaling of input lengths.

\begin{samepage}
\begin{corollary}
	Let $f:\N_{>0}\to\N_{>0}$ be a function and $L\subseteq\Sigma^+$ be a regular
	language with $|\Sigma|=k$.
	Then $\opc(\varphi(L))(kn)\leq f(n)$ implies 
	$\opc(L)(n)\leq f(n)$ for all $n\in\bbN_{>0}$.
\end{corollary}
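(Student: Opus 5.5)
This is a direct instantiation of the implication $(1)\Longrightarrow(2)$ of Proposition~\ref{P with promise}. First I recall what $\opc(\varphi(L))(kn)\le f(n)$ provides. It gives, for the fixed $n$, a $kn$-circuit $C$ over $\{0,1\}$ with $|C|\le f(n)$ and $L(C)=\varphi(L)\cap\{0,1\}^{kn}$. Because $\varphi$ is length-multiplying with factor $k$ and injective, we have $\varphi^{-1}(\varphi(L)\cap\{0,1\}^{kn})=L\cap\Sigma^n$. So $C$ is in particular a one-hot circuit with promise for $L$ at length $n$. The promise direction only asks that $\varphi^{-1}(L(C))=L\cap\Sigma^n$, and an exact computation of $\varphi(L)$ is stronger than that.

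Next I apply the gate-preserving transformation from the proof of $(1)\Longrightarrow(2)$ in Proposition~\ref{P with promise}, but only at this single length. The first step normalises every input gate feeding a non-input gate to have a singleton label; this adds no non-input gates. The second step relabels each input $(i,\{1\})$ as $(\lfloor i/k\rfloor,\{a_{i\bmod k}\})$ and each input $(i,\{0\})$ as $(\lfloor i/k\rfloor,\Sigma\setminus\{a_{i\bmod k}\})$. This yields an $n$-circuit $C'$ over $\Sigma$ with $|C'|=|C|$ and $w\in L(C')\Leftrightarrow\varphi(w)\in L(C)$ for all $w\in\Sigma^n$. Hence $L(C')=L\cap\Sigma^n$, and therefore $\opc(L)(n)\le|C'|\le f(n)$.

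The only point needing care is that Proposition~\ref{P with promise} is phrased for whole families with $|\calC|(kn)=f(n)$ exactly. Its construction, however, is local to each length, so it applies pointwise and gives the inequality, not just an equality of sizes. The other lengths of the family, those not of the form $kn$, are irrelevant and can be filled arbitrarily. Regularity of $L$ is not used.
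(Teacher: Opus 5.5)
Your proof is correct and follows the paper's argument. A circuit computing $\varphi(L)$ at length $kn$ is in particular a one-hot circuit with promise for $L$, since $\varphi$ is injective. The gate-preserving relabeling from the direction $(1)\Longrightarrow(2)$ of Proposition~\ref{P with promise} then gives a circuit for $L\cap\Sigma^n$ of the same size. Applying it pointwise at a single length also handles the mismatch between the exact size $f(n)$ in that proposition and the inequality in the corollary, which the paper passes over silently.
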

	\begin{proof}
Let $\calC$ be a circuit family computing $\varphi(L)$, where
		$|\calC|(kn)\leq f(n)$.
Then $L(\calC)=\varphi(L)$, so 
$\varphi^{-1}(L(\calC))=\varphi^{-1}(\varphi(L))=L$.
Thus, in particular $\calC$ is a one-hot circuit family
with promise computing $L$. Hence
there is a circuit family $\calC'$ computing $L$
		with $|\calC'|(n)\leq f(n)$ 
		for all $n\in\bbN_{>0}$ by 
		Proposition~\ref{P with promise}.
	\end{proof}
\end{samepage}

Using our lower bound results from Section~\ref{S Lower Bound const} 
the following proposition tells us that in fact the one-hot encoding
(without promise) is not a robust circuit model.
\begin{samepage}
	\begin{proposition}\label{P encoding}
	Let $\Sigma=\{a,b\}$ and let $L=\Sigma^+$. Then the following holds:
	\begin{enumerate}
		\item $\opc(\varphi(L))\in\Omega(\log n)$, whereas 
			$\opc(\varphi(\Sigma^+\setminus L))\in O(1)$.
\item Let $\alpha:\Sigma^+\to\{b\}^+$ denote the length-multiplying
	morphism, where $\alpha(a)=\alpha(b)=b$.
			Then $\opc(\varphi(\{b\}^+))\in O(1)$,
			whereas 
			$\opc(\varphi(\alpha^{-1}(\{b\}^+)))\in \Omega(\log n)$.
	\end{enumerate}
	Thus, the class of regular languages of constant circuit
	complexity under the one-hot encoding (without promise) is neither closed 
	under complement
	nor under inverse length-multiplying morphisms, and 
	hence is in particular
	not an lm-variety of languages (see Section~\ref{S Circuit LM}
	for more details).
\end{proposition}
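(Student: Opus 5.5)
Here $\varphi(a)=10$ and $\varphi(b)=01$, so every encoded word has even length and $\varphi(L)=(10+01)^+\subseteq\{0,1\}^+$. I will treat $\varphi(L)$ as an ordinary language over $\{0,1\}$ and study its circuit complexity there. The lower bound in Point (1) will come from Theorem~\ref{T const lower}, applied to $K=(10+01)^+$. By Theorem~\ref{T const main} it suffices to show $\eta_K\notin\QESL$, i.e.\ that $\stab(\eta_K)$ violates \eqref{eq:Xeq1} or \eqref{eq:Xeq2}.

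The stability index of $\eta_K$ should be $2$ (even lengths matter), so $\stab(\eta_K)$ is generated by the images of $00,01,10,11$. Let $z=\eta_K(00)$, which is a zero: $00$ can never lie inside a word of $K$ at an aligned position. Choose $e=f=\eta_K(10)$; it is idempotent since $(10)^+\subseteq K$. Put $p=q=e$ and $x=\eta_K(01)\eta_K(10)\cdot$\,(suitable shift) so that one side keeps the word in $K$ and the other does not. Concretely, take $x=\eta_K(11)$, $y=\eta_K(00)$ misaligned. The cleanest choice, which I will verify, is the odd/even parity shift: compare $e\,\eta_K(1)\dots$. Simplest is $x=\eta_K(0110)$-type elements versus swaps that create a misaligned $11$. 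Rather than fix the exact tuple now, I plan to test candidates such as $e\,x\,y\,f$ with $x=\eta_K(01)$, $y=\eta_K(10)$ against $e\,y\,x\,f$, and $e\,x\,f$ against $e\,x\,x\,f$ with $x=\eta_K(11)$-like elements, computed on length-$2$ blocks with odd offset. The membership test is whether each block, read at offset $0$, lies in $\{10,01\}$.

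A shortcut is to avoid finding the tuple by hand. Suppose, for contradiction, that $\opc(K)\in O(1)$. Theorem~\ref{T const main} would then make $K$ a union of $\approx_k$-classes for some $k$. Take $w=(10)^{N}$ and $w'=(10)^{j}\,1\,(10)^{i}\,0\,(10)^{j'}$, where $i,j,j'$ are chosen so that $|w|=|w'|$, both words have the same $k$-length prefix and suffix, and the middle parts have equal length modulo $k$. The offending letters are placed so that every residue class modulo $k$ (with $k$ even) sees the same letter set: the inserted $1$ and $0$ are shifted by a multiple of $k$ plus one, and each residue class already contains both letters somewhere in the shifted part. Then $w\approx_k w'$ but $w'\notin K$, which gives $\Omega(\log n)$. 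Making the residue-class contents agree exactly is the main obstacle. I expect to handle it with a block like $(10)^a(01)^b(10)^c$ in $w'$ compared against a $w$ containing the same blocks in an aligned form, and I will choose the lengths so that the content per residue is $\{0,1\}$ everywhere.

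For the upper bound in Point (1), $\varphi(\emptyset)=\emptyset$, which a single constant gate computes. For Point (2), $\varphi(\{b\}^+)=(01)^+$ is thin, and by the introduction it has constant complexity; directly, the $\rmFO^1$ formula of Example~\ref{example:logic-const} together with $(9)\Rightarrow(1)$ of Theorem~\ref{T const main} gives this. Finally, $\alpha^{-1}(\{b\}^+)=\Sigma^+$, so $\varphi(\alpha^{-1}(\{b\}^+))=\varphi(L)$, and the lower bound is exactly Point (1). The closing remarks then follow, because complement and inverse length-multiplying morphisms are among the lm-variety closure properties.
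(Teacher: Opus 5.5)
Your upper bounds, your treatment of Point (2), and the closing remark are fine. The one substantial claim, $\opc(\varphi(L))\in\Omega(\log n)$, is not actually proved, so there is a genuine gap.

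\textbf{The algebraic route.} You abandon it without exhibiting a violating tuple, and two of its preliminary claims are false:
\begin{itemize}
\item $\eta_K(00)$ is not a zero, since $1\cdot 00\cdot 1=10\,01\in K$.
\item The stability index is not $2$. The element $\eta_K(1001)$ (aligned reading good, every odd-offset reading bad) is not the image of any word of length $2$.
\end{itemize}

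\textbf{The $\approx_k$ route.} Here the explicit witness cannot work. For even $k$, every residue class modulo $k$ of $(10)^N$ carries a single letter: $1$ on even residues and $0$ on odd ones. Hence any $w'\approx_k(10)^N$ is itself of the form $(10)^{N'}$ and lies in $K$. This is exactly the ``main obstacle'' you leave open, and the suggested remedy with $(10)^a(01)^b(10)^c$ is never carried out.

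\textbf{How to close the gap.} Your own $w'$ works once it is paired with the right $w$. For arbitrary $k$, take $i=k$ and $j,j'\geq k$, and set
\[
w=(10)^{j+1}(01)^{i}(10)^{j'},\qquad w'=(10)^{j}\,1\,(10)^{i}\,0\,(10)^{j'}.
\]
\begin{itemize}
\item The two words have the same length and differ only at positions $2j+1$ and $2j+2i+1$, where the letters $0$ and $1$ are exchanged.
\item Since $k\mid 2i$, these two positions lie in the same residue class modulo $k$. Both lie outside the first and last $k$ positions.
\item Hence $w\approx_k w'$.
\item Yet $w\in K$, while $w'\notin K$, because its aligned block at positions $2j,2j+1$ is $11$.
\end{itemize}
So $K$ is not a union of $\approx_k$-classes for any $k$. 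The contrapositive of $(2)\Rightarrow(4)$ in Theorem~\ref{T const main} then gives $\opc(K)\notin o(\log n)$, i.e.\ $\opc(K)\in\Omega(\log n)$. You should assume $o(\log n)$ rather than $O(1)$ for the contradiction, or explicitly invoke $(1)\Leftrightarrow(2)$.

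\textbf{Comparison with the paper.} The paper argues via Point (5) of Theorem~\ref{T const main} instead. If all sets $\Gamma_{j,r}$ are singletons, the union is slender, which contradicts $|K\cap\{0,1\}^{2n}|=2^n$. Otherwise some $\Gamma_{j,r}=\{0,1\}$, which puts two words at Hamming distance $1$ into $K$; this is impossible. With the fix above, your route through Point (4) is a legitimate and somewhat more explicit alternative: it needs a single concrete pair of words and no counting argument.
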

\end{samepage}
\begin{proof}
	Let $K=\varphi(L)=\{01,10\}^+$.
	Let us first prove Point (1).
	Clearly 
	$\opc(\varphi(\Sigma^+\setminus L))=
	\opc(\emptyset)\in O(1)$.
	We refer to \S\,\ref{S Main Result const} for 
	further details on the languages 
	$\Cont_k(\Gamma_0,\dots,\Gamma_{k-1})$ (Definition~\ref{D Cont}).
We claim that $K$ is not a disjoint finite union of finite languages
	over $\{0,1\}$
	and languages of the form $u\Cont_k(\Gamma_0,\dots,\Gamma_{k-1})v$,
	where $u,v\in\{0,1\}^+$
	and $\emptyset\not=\Gamma_0,\dots,\Gamma_{k-1}\subseteq\{0,1\}$.
	Towards a contradiction assume there
	exists some $k\in\bbN_{>0}$ such that $K$ is such a disjoint union
	\begin{align}
		K\quad=\quad\biguplus_{i\in I} K_i\ \uplus\ 
		\biguplus_{j\in J}u_j
		\Cont_k(\Gamma_{j,0},\dots,\Gamma_{j,k-1})v_j,
	\end{align}
	where $I$ and $J$ are finite index sets,
	$K_i$ is a finite language for each $i\in I$,
	$u_j,v_j\in\{0,1\}^+$ for each $j\in J$,
	and $\Gamma_{j,0},\dots,\Gamma_{j,k-1}\subseteq\{0,1\}$
	for each $j\in J$.
	We perform case distinction on whether $|\Gamma_{j,r}|=1$
	for all $j\in J$ and all $r\in[0,k-1]$ or not.

	Let us first consider the case when
	$|\Gamma_{j,r}|=1$ for all $j\in J$ and all $r\in[0,k-1]$.
	Then all but finitely many words of $K$ are 
	contained in a finite union of languages all of the form
	$u(x)^+v$ for words $u,x,v\in\{0,1\}^+$.
	But then the function $n\mapsto |K\cap\{0,1\}^n|$ is bounded, 
	contradicting the fact that this function
	grows exponentially: indeed, note that $|K\cap\{0,1\}^{2n}|=2^n$
	for all $n\in\bbN_{>0}$.

	Let us now consider the case 
	$|\Gamma_{j,r}|\not=1$ for some $j\in J$ and some $r\in[0,k-1]$.
	Since $\Gamma_{j,r}\not=\emptyset$ we must have
	$|\Gamma_{j,r}|=2$.
	But then the language
	$u_j\Cont_k(\Gamma_{j,0},\dots,\Gamma_{j,k-1})v_j$
	contains a word $w$ such that another word
	with Hamming distance $1$ is in the language
	as well.
	The same applies to $K$, which is a contradiction
	since any two distinct words in $K$ have Hamming distance
	at least two.
	By Point (5) and Point (2) of Theorem~\ref{T const main}
	we have $\opc(K)\in\Omega(\log n)$.
	This concludes the proof of Point (1).

	Let us now prove Point (2).
	Since $\varphi(\{b\}^+)=(01)^+$ can be computed 
	by a circuit family of constant size that checks on inputs
	of even length if even positions have letter $0$ and that
	odd positions have letter $1$,
	we obtain $\opc(\varphi(\{b\}^+))\in O(1)$.
	On the other hand, we have $\varphi(\alpha^{-1}(\{b\}^+))=\varphi(L)$,
	so $\opc(\varphi(L))\in\Omega(\log n)$, as shown in Point (1).
\end{proof}

\section{Conclusion}\label{S Conclusion}

We have characterized the regular languages of
constant circuit complexity in various ways:
logically in terms of 
$\rmFO^1[\Sigma,\mathrm{arb}]$, 
algebraically
in terms of the languages whose syntactic
morphism is in $\QESL$, in terms of 
unions of classes of the congruence $\approx_k$ for
some $k\in\bbN_{>0}$,
and as finite disjoint unions of finite languages
and languages of the form 
$u\Cont_k(\Gamma_0,\dots,\Gamma_{k-1})v$.
We have characterized the neutral letter regular
languages of constant circuit complexity 
as the ones that are both idempotent and commutative,
logically in terms of $\rmFO^1[\Sigma]$, 
and as finite unions of $\congC_1^1$-classes.

We have shown that it is 
$\textbf{PSPACE}$-complete to decide if the
language of a given nondeterministic finite
automaton has constant circuit complexity.

As a technical tool we have
introduced rational truth-table reductions
and showed that, for a class of functions
we call mild, reducibility implies
the preservation of circuit complexity upper
and lower bounds.
We also showed that the class of regular languages
whose circuit complexity is upper bounded by
mild functions is a length-multiplying variety
of languages.

We have addressed the subtlety of
the choice of the underlying circuit model.
While $\ACO$ circuit
families under the one-hot encoding with or without promise
recognize equivalent languages we showed
that for the regular languages of constant circuit
complexity these models indeed differ:
without promise the respective class is not
a length-multiplying variety of languages.

In subsequent work we plan to shed new
light into the structure of the regular languages
whose circuit complexity is in $\Omega(\log n)$.

\bibliographystyle{plain}
\bibliography{references}

@article{FijalkowPaperman,
author = {Fijalkow, Nathana{\"e}l and Paperman, Charles},
title = {{Monadic Second-Order Logic with Arbitrary Monadic Predicates}},
year = {2017},
issue_date = {July 2017},
publisher = {Association for Computing Machinery},
address = {New York, NY, USA},
volume = {18},
number = {3},
issn = {1529-3785},
url = {https://doi.org/10.1145/3091124},
doi = {10.1145/3091124},
journal = {ACM Trans. Comput. Logic},
month = aug,
articleno = {20},
numpages = {17}
}

@incollection{Kouckyhandbook,
  author       = {Michal Kouck{\'{y}}},
  editor       = {Jean{-}{\'{E}}ric Pin},
  title        = {Circuit complexity of regular languages},
  booktitle    = {Handbook of Automata Theory},
  pages        = {493--523},
  publisher    = {European Mathematical Society Publishing House, Z{\"{u}}rich,
                  Switzerland},
  year         = {2021},
  url          = {https://doi.org/10.4171/Automata-1/14},
  doi          = {10.4171/AUTOMATA-1/14}
}

@book{hanbookauto,
  TITLE = {{Handbook of Automata Theory}},
  AUTHOR = {Pin, Jean-{\'E}ric},
  URL = {https://hal.science/hal-03579131},
  PUBLISHER = {{European Mathematical Society Publishing House}},
  YEAR = {2021},
  MONTH = Sep,
  DOI = {10.4171/Automata},
  HAL_ID = {hal-03579131},
  HAL_VERSION = {v1},
}

@article{GrosshansMS22,
  author       = {Nathan Grosshans and
                  Pierre McKenzie and
                  Luc Segoufin},
  title        = {{Tameness and the power of programs over monoids in {DA}}},
  journal      = {Log. Methods Comput. Sci.},
  volume       = {18},
  number       = {3},
  year         = {2022},
  url          = {https://doi.org/10.46298/lmcs-18(3:14)2022},
  doi          = {10.46298/LMCS-18(3:14)2022}}

@inproceedings{Grosshans21,
  author       = {Nathan Grosshans},
  title        = {{A Note on the Join of Varieties of Monoids with {LI}}},
  booktitle    = {{MFCS} 2021},
  series       = {LIPIcs},
  pages        = {51:1--51:16},
  publisher    = {Schloss Dagstuhl - Leibniz-Zentrum f{\"{u}}r Informatik},
  year         = {2021},
  url          = {https://doi.org/10.4230/LIPIcs.MFCS.2021.51},
  doi          = {10.4230/LIPICS.MFCS.2021.51}}

@article{ConstantDepth1984,
  author  = {Chandra, Ashok K. and Stockmeyer, Larry J. and Vishkin, Uzi},
  title   = {Constant Depth Reducibility},
  journal = {SIAM Journal on Computing},
  volume  = {13},
  number  = {2},
  pages   = {423--439},
  year    = {1984},
  doi     = {10.1137/0213028}
}

@article{CadilhacP22,
  author       = {Micha{\"{e}}l Cadilhac and
                  Charles Paperman},
  title        = {{The regular languages of wire linear AC\({}^{\mbox{0}}\)}},
  journal      = {Acta Informatica},
  volume       = {59},
  number       = {4},
  pages        = {321--336},
  year         = {2022},
  url          = {https://doi.org/10.1007/s00236-022-00432-2},
  doi          = {10.1007/S00236-022-00432-2}
}

@INPROCEEDINGS{Chaubard,
  author={Chaubard, L. and Pin, J. and Straubing, H.},
  booktitle={21st Annual IEEE Symposium on Logic in Computer Science (LICS'06)}, 
  title={{First Order Formulas with Modular Predicates}},
  year={2006},
  volume={},
  number={},
  pages={211-220},
  doi={10.1109/LICS.2006.24}}

@article{VitanyiM84,
  author       = {Paul M. B. Vit{\'{a}}nyi and
                  Lambert G. L. T. Meertens},
  title        = {Big omega versus the wild functions},
  journal      = {Bull. {EATCS}},
  volume       = {22},
  pages        = {14--19},
  year         = {1984},
  bibsource    = {dblp computer science bibliography, https://dblp.org}
}

@InProceedings{Papermanfo2,
  author =	{Dartois, Luc and Paperman, Charles},
  title =	{{Two-variable first order logic with modular predicates over words}},
  booktitle =	{STACS 2013},
  pages =	{329--340},
  series =	{LIPIcs},
  ISBN =	{978-3-939897-50-7},
  ISSN =	{1868-8969},
  year =	{2013},
  volume =	{20},
  address =	{Dagstuhl, Germany},
  URL =		{https://drops.dagstuhl.de/entities/document/10.4230/LIPIcs.STACS.2013.329}}

@article{SurveySmall,
author = {Diekert, Volker and Gastin, Paul and Kufleitner, Manfred},
year = {2008},
month = {06},
pages = {513-548},
title = {{A Survey on Small Fragments of First-Order Logic over Finite Words.}},
volume = {19},
journal = {Int. J. Found. Comput. Sci.},
doi = {10.1142/S0129054108005802}
}

@book{Eilenberg1976,
  author    = {Samuel Eilenberg},
  title     = {{Automata, Languages, and Machines, Volume B}},
  series    = {Pure and Applied Mathematics},
  volume    = {59},
  publisher = {Academic Press},
  year      = {1976},
  isbn      = {978-0-12-234002-4},
  note      = {Includes chapters by Bret Tilson},
}

@InProceedings{Straubing02,
author="Straubing, Howard",
editor="Rajsbaum, Sergio",
title="{On Logical Descriptions of Regular Languages}",
booktitle="LATIN 2002: Theoretical Informatics",
year="2002",
publisher="Springer Berlin Heidelberg",
address="Berlin, Heidelberg",
pages="528--538",
isbn="978-3-540-45995-8"
}

@article{StraubingPin10,
author = {Pin, Jean-\'{E}ric and Straubing, Howard},
journal = {RAIRO - Theoretical Informatics and Applications},
language = {eng},
month = {3},
number = {1},
pages = {239-262},
publisher = {EDP Sciences},
title = {{Some results on {C}-varieties}},
url = {http://eudml.org/doc/92759},
volume = {39},
year = {2005},
}

@article{Kunc10,
author = {Kunc, Michal},
journal = {RAIRO - Theoretical Informatics and Applications},
language = {eng},
month = {3},
number = {3},
pages = {243-254},
publisher = {EDP Sciences},
title = {{Equational description of pseudovarieties of homomorphisms}},
url = {http://eudml.org/doc/92722},
volume = {37},
year = {2003},
}

@article{Friedman86,
  author       = {Joel Friedman},
  title        = {{Constructing O(n log n) Size Monotone Formulae for the k-th Threshold
                  Function of n Boolean Variables}},
  journal      = {{SIAM} J. Comput.},
  volume       = {15},
  number       = {3},
  pages        = {641--654},
  year         = {1986},
  url          = {https://doi.org/10.1137/0215047},
  doi          = {10.1137/0215047}
}

@article{KW15,
  author       = {Manfred Kufleitner and
                  Tobias Walter},
  title        = {{One quantifier alternation in first-order logic with modular predicates}},
  journal      = {{RAIRO} Theor. Informatics Appl.},
  volume       = {49},
  number       = {1},
  pages        = {1--22},
  year         = {2015},
  url          = {https://doi.org/10.1051/ita/2014024},
  doi          = {10.1051/ITA/2014024}
}

@manual{Pin2025Mathematical,
  title        = {{Mathematical Foundations of Automata Theory}},
  author       = {Pin, Jean--{\'E}ric},
  year         = {2025},
  note         = {Lecture notes, MPRI, Universit{\'e} Paris and IRIF},
  url          = {https://www.irif.fr/~jep/PDF/MPRI/MPRI.pdf}
}

@article{Schutz65,
  author  = {Sch{\"u}tzenberger, Marcel-Paul},
  title   = {{On Finite Monoids Having Only Trivial Subgroups}},
  journal = {Information and Control},
  volume  = {8},
  year    = {1965},
  pages   = {190--194}
}

@article{LogicMeetsAlgebra,
    title      = {{Logic Meets Algebra: the Case of Regular Languages}},
    author     = {Pascal Tesson and Denis Therien},
    url        = {https://lmcs.episciences.org/2226},
    doi        = {10.2168/LMCS-3(1:4)2007},
    journal    = {Logical Methods in Computer Science},
    issn       = {1860-5974},
    volume     = {Volume 3, Issue 1},
    eid        = 4,
    year       = {2007},
    month      = {Feb},
}

@article{BridgesTessonTherien,
author = {Tesson, Pascal and Therien, Denis},
year = {2006},
month = {01},
pages = {},
title = {{Bridges between algebraic automata theory and complexity theory}},
journal = {Bulletin of the European Association for Theoretical Computer Science EATCS}
}

@book{McNPap71,
  author    = {McNaughton, Robert and Papert, Seymour},
  title     = {{Counter-Free Automata}},
  publisher = {The MIT Press},
  address   = {Cambridge, MA},
  year      = {1971}
}

@article{McKenzieNC1,
	author = {McKenzie, Pierre and P{\'e}ladeau, Pierre and Therien, Denis},
	date = {1991/12/01},
	doi = {10.1007/BF01212963},
	id = {McKenzie1991},
	isbn = {1420-8954},
	journal = {computational complexity},
	number = {4},
	pages = {330--359},
	title = {{$\mathrm{NC}^1$: The automata-theoretic viewpoint}},
	url = {https://doi.org/10.1007/BF01212963},
	volume = {1},
	year = {1991}}

@article{ActionCvariety,
title = {{Actions, wreath products of {C}-varieties and concatenation product}},
journal = {Theoretical Computer Science},
volume = {356},
number = {1},
pages = {73-89},
year = {2006},
note = {In honour of Professor Christian Choffrut on the occasion of his 60th birthday},
author = {Laura Chaubard and Jean-Éric Pin and Howard Straubing},
issn = {0304-3975}
}

@article{PeladeauTCS,
title = {{Finite semigroup varieties defined by programs}},
journal = {Theoretical Computer Science},
volume = {180},
number = {1},
pages = {325-339},
year = {1997},
issn = {0304-3975},
doi = {https://doi.org/10.1016/S0304-3975(96)00297-6},
author = {Pierre Péladeau and Howard Straubing and Denis Therien},
}

@InProceedings{NathanJ,
author="Grosshans, Nathan",
title="{The Power of Programs over Monoids in {$\bf{J}$}}",
booktitle="Language and Automata Theory and Applications",
year="2020",
publisher="Springer International Publishing",
address="Cham",
pages="315--327"
}

@book{Straubingbook,
  author    = {Straubing, Howard},
  title     = {{Finite Automata, Formal Logic, and Circuit Complexity}},
  publisher = {Birkh{\"a}user},
  address   = {Boston},
  year      = {1994}
}

@article{BCST92,
  author       = {David A. Mix Barrington and
                  Kevin J. Compton and
                  Howard Straubing and
                  Denis Th{\'{e}}rien},
  title        = {{Regular Languages in NC{\({^1}\)}}},
  journal      = {J. Comput. Syst. Sci.},
  volume       = {44},
  number       = {3},
  pages        = {478--499},
  year         = {1992},
  url          = {https://doi.org/10.1016/0022-0000(92)90014-A},
  doi          = {10.1016/0022-0000(92)90014-A}
}

@article{Barrington1989,
  author  = {David A. Barrington},
  title   = {{Bounded-Width Polynomial-Size Branching Programs Recognize Exactly Those Languages in {NC}$^1$}},
  journal = {Journal of Computer and System Sciences},
  volume  = {38},
  number  = {1},
  pages   = {150--164},
  year    = {1989}
}

@book{AroraBarak,
  author = {Arora, Sanjeev and Barak, Boaz},
  isbn = {0521424267},
  publisher = {Cambridge University Press},
  title = {{Computational Complexity: A Modern Approach}},
  url = {http://www.cs.princeton.edu/theory/complexity/},
  year = 2009
}

@article{FurstSaxeSipser1984,
  author  = {Merrick Furst and James B. Saxe and Michael Sipser},
  title   = {{Parity, Circuits, and the Polynomial-Time Hierarchy}},
  journal = {Mathematical Systems Theory},
  volume  = {17},
  number  = {1},
  pages   = {13--27},
  year    = {1984},
  doi     = {10.1007/BF01744431}
}

@article{Koucky2009,
  author  = {Michal Kouck{\'y}},
  title   = {{Circuit Complexity of Regular Languages}},
  journal = {Theory of Computing Systems},
  volume   = {45},
  number   = {4},
  pages    = {865--879},
  year     = {2009},
  doi      = {10.1007/s00224-009-9172-8}
}

@article{Hromkovic85a,
  author       = {Juraj Hromkovic},
  title        = {{Linear Lower Bounds on Unbounded Fan-In Boolean Circuits}},
  journal      = {Inf. Process. Lett.},
  volume       = {21},
  number       = {2},
  pages        = {71--74},
  year         = {1985},
  url          = {https://doi.org/10.1016/0020-0190(85)90035-3},
  doi          = {10.1016/0020-0190(85)90035-3}
}

@article{BarringtonTherien1988,
  author  = {Barrington, David A. Mix and Th{\'e}rien, Denis},
  title   = {{Finite Monoids and the Fine Structure of {$\mathrm{NC}^1$}}},
  journal = {Journal of the ACM},
  volume  = {35},
  number  = {4},
  pages   = {941--952},
  year    = {1988},
  month   = oct,
  doi     = {10.1145/48014.63138}
}

@book{EsparzaBlondin2023,
  author    = {Javier Esparza and Michael Blondin},
  title     = {{Automata Theory: An Algorithmic Approach}},
  publisher = {MIT Press},
  year      = {2023},
  isbn      = {978-0-262-04863-7}
}

@article{HASTAD1994200,
title = {{Optimal Depth, Very Small Size Circuits for Symmetrical Functions in {$\mathrm{AC}^0$}}},
journal = {Information and Computation},
volume = {108},
number = {2},
pages = {200-211},
year = {1994},
issn = {0890-5401},
doi = {https://doi.org/10.1006/inco.1994.1008},
url = {https://www.sciencedirect.com/science/article/pii/S089054018471008X},
author = {J. Hastad and I. Wegener and N. Wurm and S.Z. Yi}
}

\end{document}